\newtheorem{theorem}{Theorem}
\newtheorem{lemma}[theorem]{Lemma}
\newtheorem{proposition}[theorem]{Proposition}
\newtheorem{corollary}[theorem]{Corollary}
\theoremstyle{definition}
\newtheorem{definition}[theorem]{Definition}
\theoremstyle{remark}
\newtheorem{remark}[theorem]{Remark}
\numberwithin{equation}{section} \numberwithin{theorem}{section}
\newcommand\lbb[1]{\label{#1}}
\def\tt{\otimes}                               
\def\<{\langle}
\def\>{\rangle}
\def \p{\partial}
\def \pa{\partial}
\def \cp{\mathbb C[\partial]}
\def\dsum{\displaystyle\sum}
\newcommand{\CC}{\mathbb{C}}
\newcommand{\NN}{\mathbb{N}}
\newcommand{\ZZ}{\mathbb{Z}}
\newcommand{\fg}{\mathfrak{g}}
\newcommand{\fh}{\mathfrak{h}}
\def\al{\alpha}                         
\def\be{\beta}
\def\ga{\gamma}
\def\Ga{\Gamma}
\def\ep{\varepsilon}
\def\la{\lambda}
\def\La{\Lambda}
\def\om{\omega}
\def\O{\Omega}
\def\T{\Theta}
\def\g{{\mathfrak{g}}}      
\def\gl{{\mathfrak{gl}}}
\def\sl{{\mathfrak{sl}}}
\def\so{{\mathfrak{so}}}
\def\cso{{\mathfrak{cso}}}
\def\A{{\mathcal{A}}}
\def\F{\mathcal{F}}           
 \DeclareMathOperator{\Res}{Res}
\DeclareMathOperator{\Ind}{Ind} 
\DeclareMathOperator{\Ir}{Ir}
\DeclareMathOperator{\Sing}{Sing}
 \DeclareMathOperator{\Lie}{Lie}
\DeclareMathOperator{\Conf}{Conf} \DeclareMathOperator{\Cur}{Cur}
\renewcommand\Im{{\mathrm{Im}}}
 \DeclareMathOperator{\Vir}{Vir}
\newcommand{\Alg}{{\rm Alg}}
\newcommand{\R}{\mathcal{R}}
\begin{document}

\title[Irreducible modules over Lie conformal
superalgebras] {Irreducible modules over finite simple Lie
conformal superalgebras of type $K$}

\author{Carina Boyallian$^*$}

\thanks{$^*$ {Famaf-Ciem, Univ. Nac. C\'ordoba,
Ciudad Universitaria,    (5000) C\'ordoba, Argentina}
 - {boyallia@mate.uncor.edu}, {joseliberati@gmail.com}}

\author{Victor G.~Kac$^\dagger$}

\thanks{$^\dagger$ {Department of Mathematics, MIT, Cambridge, MA 02139, USA}
 - {kac@math.mit.edu}}

\author{Jos\'e I. Liberati$^*$}


\begin{abstract}
We construct all finite irreducible modules over Lie conformal
superalgebras of type $K$
\end{abstract}


\maketitle
%
\section{Introduction}\lbb{sintro}

\


Lie conformal superalgebras encode the singular part of the
operator product expansion of chiral fields in two-dimensional
quantum field theory [6].

A complete classification of (linear) finite simple Lie conformal
superalgebras was obtained in [5].  The list consists of
current Lie conformal superalgebras $\Cur \fg$, where $\fg$ is a
simple finite-dimensional Lie superalgebra, four series of
``Virasoro like'' Lie conformal superalgebras $W_n (n \geq 0)$,
$S_{n,b}$ and $\tilde{S}_n (n \geq 2 \, , \, b \in \CC)$, $K_n (n
\geq 0,\, n\neq 4)$, $K_4'$, and  the exceptional Lie conformal superalgebra
$CK_6$.

All finite irreducible representations of the simple conformal
superalgebras $\Cur \fg$, $K_0 = \Vir$ and $K_1$ were constructed
in [2], and those of $S_{2,0}$, $W_1 = K_2$, $K_3$, and $K_4$
in [3].  More recently, the problem has been solved for all
Lie conformal superalgebras from the three series $W_n$,
$S_{n,b}$, and $\tilde{S}_n$ [1].

The construction in all cases relies on the observation that the
representation theory of a Lie conformal superalgebra $R$ is
controlled by the representation theory of the associated
(extended) {\em annihilation algebra} $\fg = (\Lie R)_+$
[2], thereby reducing the problem to the construction of
continuous irreducible modules with discrete topology over the
linearly compact superalgebra $\fg$.

The construction of the latter modules consists of two parts.
First one constructs a collection of continuous $\fg$-modules
$\Ind (F)$, associated to all finite-dimensional irreducible
$\fg_0$-modules $F$, where $\fg_0$ is a certain subalgebra of
$\fg$ ($= \gl (1|n)$ or  $\sl (1|n)$ for the $W$ and
$S$~series, and $= \cso_n$ for the $K_n$ series).

The irreducible $\g$-modules $\Ind (F)$ are called {\em
  non-degenerate}, and the second part of the problem consists of
two parts:  (A)~classify the $\fg_0$-modules $F$, for which the
$\fg$-modules $\Ind (F)$ are non-degenerate, and (B)~construct
explicitly the irreducible quotients of $\Ind (F)$, called
{\em degenerate} $\fg$-modules, for reducible  $\Ind (F)$.

Both problems have been solved for types $W$ and $S$ in [1],
and it turned out, remarkably, that all degenerate modules occur
as cokernels of the super de~Rham complex, or their duals.

In the present paper we solve the problem for the Lie conformal
superalgebras $K_n$ with $n \geq 4$ (recall that for $0 \leq n \leq 4$ the
problem has been solved in [2] and [3], though in
[3] the construction for $n=3$ and $4$ is not very
explicit).  First, we construct the $\fg$-modules $\Ind (F)$
(Theorem~4.1).  Second, we find all $F$, for which $\Ind (F)$ is
reducible, and, furthermore, find all singular vectors
(Theorem~5.1).  Finally, in Section~6 we construct a {\em contact
complex}, which is a  certain reduction of the de~Rham complex,
and show (using Theorem~5.1) that the cokernels in the contact
complex and their duals produce all degenerate $\fg$-modules
(Corollary~6.6).  As a result, we obtain an explicit construction
of all finite irreducible $K_n$-modules for $n \geq 4$
(Theorem~7.1).

We should mention that the construction of our (super) contact
complex mimics the beautiful Rumin's construction [10] for
ordinary (non-super) contact manifolds.

The remaining cases, namely, the representation theory of $K_4'$ (the derived
algebra of $K_4$) and of the exceptional Lie conformal superalgebra
$CK_6$, and the explicit construction of degenerate modules for
$K_3$, will be worked out in a subsequent publication.


\vskip .3cm




\section{Formal distributions, Lie conformal superalgebras and their
modules}\label{sec:formal}


In this section we introduce the basic definitions and notations
in order to have a self-contained work, see \cite{K1, DK, BKLR,
CL}. Let $\fg$ be a Lie superalgebra. A $\fg$-valued {\it formal
distribution} in one indeterminate $z$ is a formal power series
\begin{displaymath}
a(z)=\sum_{n\in \ZZ} a_n z^{-n-1}, \qquad a_n\in \fg.
\end{displaymath}
The vector superspace of all formal distributions, $\fg[[z,
z^{-1}]]$, has a natural structure of a $\CC[\p_z]$-module. We
define
\begin{displaymath}
\Res_z a(z) =a_0.
\end{displaymath}

Let $a(z), b(z)$ be two $\fg$-valued formal distributions. They
are called $local$ if
\begin{displaymath}
(z-w)^N [a(z), b(w)]=0  \qquad \hbox{ for } \quad N>>0.
\end{displaymath}

Let $\fg$ be a Lie superalgebra, a  family $\F$ of $\fg$-valued
formal distributions is called a {\it local family} if  all pairs
of formal distributions from $\F$ are local. Then, the pair $(\fg
, \F)$ is called a {\it formal distribution Lie superalgebra} if
$\F$ is a local family of $\fg$-valued  formal distributions and
$\fg$ is spanned by the coefficients of all formal distributions
in $\F$. We define the {\it formal $\delta$-function} by
\begin{displaymath}
\delta(z-w)=z^{-1} \sum_{n\in\ZZ} \left(\frac{w}{z}\right)^n.
\end{displaymath}
Then it is easy to show (\cite{K1}, Corollary 2.2)), that  two
local formal distributions are local if and only if  the bracket
can be represented as a finite sum of the form
\begin{displaymath}
[a(z), b(w)]=\sum_j [a(z)_{(j)} b(w)] \  \p_w^j \delta(z-w)/j!,
\end{displaymath}
where $[a(z)_{(j)} b(w)]=\Res_z (z-w)^j[a(z), b(w)]$. This is
called the {\it operator product expansion}. Then we obtain a
family of operations $_{(n)}$, $n\in \ZZ_+$, on the space of
formal distributions. By taking the generating series of these
operations, we define the $\la$-bracket:
\begin{displaymath}
[a_\la b]=\sum_{n\in\ZZ_+} \frac{\la^n}{n!} [a_{(n)} b].
\end{displaymath}
The properties of the $\la$-bracket motivate the following
definition:

\begin{definition} A {\it   Lie conformal superalgebra} $R$ is  a left
$\ZZ/2\ZZ$-graded $\cp$-module endowed with a $\CC$-linear map
$R\otimes R  \longrightarrow \CC[\la]\otimes R$, $ a\otimes b
\mapsto a_\la b$, called the $\la$-bracket, and  satisfying the
following axioms $(a,\, b,\, c\in
  R)$,

\

\noindent Conformal sesquilinearity $ \qquad  [\pa a_\la b]=-\la
[a_\la b],\qquad  [a_\la \pa
  b]=(\la+\pa) [a_\la b]$,

\vskip .3cm

\noindent Skew-symmetry $\ \qquad\qquad\qquad [a_\la
b]=-(-1)^{p(a)p(b)}[b_{-\la-\pa} \ a]$,

\vskip .3cm

\noindent Jacobi identity $\quad\qquad\qquad\qquad [a_\la [b_\mu
c]]=[[a_\la b]_{\la+\mu} c] + (-1)^{p(a)p(b)}[b_\mu [a_\la c]]. $

\vskip .5cm

Here and further $p(a)\in \ZZ/2\ZZ$ is the parity of $a$.
\end{definition}

A Lie conformal superalgebra is called $finite$ if it has finite
rank as a
  $\CC[\pa]$-module. The notions of homomorphism, ideal and subalgebras of
a Lie conformal superalgebra are defined in the usual way. A Lie
conformal superalgebra $R$ is $simple$ if $[R_\la R]\neq 0$ and
contains no ideals except for zero and itself.

\begin{definition} A {\it module} M over a Lie conformal superalgebra  $R$
is  a $\ZZ/2\ZZ$-graded $\cp$-module endowed with a
  $\CC$-linear map  $R\otimes M \longrightarrow \CC[\la]\otimes M$,
  $a\otimes v  \mapsto a_\la v$,
satisfying the following axioms $(a,\, b \in R),\ v\in M$,

\vskip .2cm

\noindent$(M1)_\la \qquad   (\pa a)_\la^M v= [\pa^M,
a_\la^M]v=-\la a_\la^Mv   ,$

\vskip .2cm

\noindent $(M2)_\la\qquad [a_\la^M, b_\mu^M]v=[a_{ \la}
b]_{\la+\mu}^Mv$.

\vskip .2cm

An $R$-module $M$ is called {\it finite} if it is finitely
generated over $\cp$. An $R$-module $M$ is called {\it
irreducible} if it contains no non-trivial submodule, where the
notion of submodule is the usual one.
\end{definition}

\

Given a  formal distribution Lie superalgebra $(\fg , \F)$ denote
by $\bar\F$ the minimal subspace of $\fg[[z, z^{-1}]]$ which
contains $\F$ and is closed under all $j$-th products and
invariant under $\p_z$. Due to Dong's lemma \cite{K1}, we know
that $\bar\F$ is a local family as well. Then $\Conf(\fg,
\F):=\bar\F$ is the Lie conformal superalgebra associated to the
formal distribution Lie superalgebra $(\fg , \F)$.

In order to give  the (more or less) reverse functorial
construction, we need the following: let $\tilde R= R[t, t^{-1}]$
with $\tilde\p=\p +\p_t$ and define the bracket \cite{K1}:
\begin{equation}
    \label{eq:1}
[a t^n ,  b t^m]=\sum_{j\in\ZZ_+}  \left(\begin{array}{c}
m\\j\end{array}\right)
  [a_{j}b] t^{m+n-j}.
\end{equation}
Observe that $\tilde\p\tilde R$ is an ideal of $\tilde R$ with
respect to  this bracket. Now, consider  $\Alg R=\tilde R /
\tilde\p\tilde R$ with this bracket and let
\begin{displaymath}
\R=\{\sum_{n\in \ZZ} (at^n) z^{-n-1}= a \delta(t-z)\ /\ a\in R \}.
\end{displaymath}
Then $(\Alg R,\R)$ is a formal distribution Lie superalgebra. Note
that Alg is a functor from the category of Lie conformal
superalgebras to the category of  formal distribution Lie
superalgebras. On has \cite {K1}:
\begin{displaymath}
{\Conf}({\Alg}R)=R, \ \ {\rm Alg}({\rm Conf}(\fg,\F)) = ({\rm
Alg}\bar\F,\bar\F).
\end{displaymath}
Note also that $(\Alg R,\R)$ is the {\it maximal formal
distribution superalgebra} associated to the conformal
superalgebra $R$, in the sense that all formal distribution Lie
superalgebras $(\fg, \F)$ with $\Conf(\fg , \F)=R$ are quotients
of $(\Alg R,\R)$ by irregular ideals (that is, an ideal $I$ in
$\fg$ with no non-zero $b(z)\in\R$ such that $b_n\in I$). Such
formal distribution Lie superalgebras are called $equivalent$.

We thus have an equivalence of categories of Lie conformal
superalgebras and equivalence classes of formal distribution Lie
superalgebras. So the study of formal distribution Lie
superalgebras reduces to the study of Lie conformal superalgebras.

\

An important tool for the study of Lie conformal superalgebras and
their modules is the (extended) annihilation superalgebra. The {\it
annihilation superalgebra} of a Lie conformal superalgebra $R$  is the
subalgebra $\A( R)$ (also denoted by $\Alg R_+$) of the Lie
superalgebra $\Alg R$ spanned by all elements $at^n$, where $a\in
R ,  n\in \ZZ_+$. It is clear from (\ref{eq:1}) that this is a
subalgebra, which is invariant with respect to the derivation
$\p=-\p_t$ of $\Alg R$. The {\it extended annihilation superalgebra} is
defined as
\begin{displaymath}
\A(R)^e=(\Alg R)^+:=\CC\p \ltimes (\Alg R)_+.
\end{displaymath}
Introducing the generating series
\begin{equation} \label{eq:la}
a_\la =\sum_{j\in\ZZ_+} \frac{\la^j}{j!} (a t^j), \,\,a \in R,
\end{equation}
we obtain from (\ref{eq:1}):
\begin{equation} \label{eq:2222}
[a_\la , b_\mu ] = [a_\la b]_{\la +\mu}, \quad \p(a_\la)= (\p a)_\la
=-\la a_\la.
\end{equation}

Formula (\ref{eq:2222}) implies the following important
proposition relating modules over a Lie conformal superalgebra $R$
to certain modules over the corresponding extended annihilation
superalgebra $(\Alg R)^+$.

\begin{proposition}
     \label{prop:1}
\cite{CK} A module over a Lie conformal superalgebra $R$ is the
same  as a
 module over the Lie superalgebra $(\Alg R)^+$  satisfying the property
\begin{equation} \label{eq:5}
a_\la m\in \CC[\la]\otimes M  \hbox{ \  for any } a\in R, m\in M.
\end{equation}
(One just views the action of the generating series $a_\la$ of
$(\Alg R)^+$ as the $\la$-action of $a\in R$).
\end{proposition}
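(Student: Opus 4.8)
The plan is to exhibit an explicit dictionary between the two notions of module, with the generating series $a_\lambda=\sum_{j\in\ZZ_+}\frac{\lambda^j}{j!}(at^j)$ of (\ref{eq:la}) as the translation device, and then to check that the conformal-module axioms $(M1)_\lambda$, $(M2)_\lambda$ correspond term by term to the Lie superalgebra module axioms over $(\Alg R)^+$. The whole correspondence is governed by (\ref{eq:2222}); essentially the proposition only reorganizes that formula into a statement about modules.

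Starting from a module $M$ over the Lie superalgebra $(\Alg R)^+$ that satisfies (\ref{eq:5}), I would set $a_\lambda^M v:=\sum_{j\in\ZZ_+}\frac{\lambda^j}{j!}(at^j)^M v$, where $(at^j)^M$ denotes the operator by which $at^j\in(\Alg R)_+$ acts. Condition (\ref{eq:5}) is exactly what makes this series a polynomial in $\lambda$, so that $a_\lambda^M\colon M\to\CC[\lambda]\otimes M$ is well defined, and the $\CC[\partial]$-module structure is taken to be the action of $\partial\in\CC\partial\subset(\Alg R)^+$. Then $(M1)_\lambda$ is obtained by summing the algebra relations $[\partial,at^j]=-j\,at^{j-1}$ and $(\partial a)t^j=-j\,at^{j-1}$ (both living in $\Alg R$) against $\frac{\lambda^j}{j!}$, which reproduces the second identity of (\ref{eq:2222}); and $(M2)_\lambda$ is the module-level reading of the first identity of (\ref{eq:2222}), since $[a_\lambda^M,b_\mu^M]=[a_\lambda,b_\mu]^M$ for a module over a Lie superalgebra. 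Unwinding $[a_\lambda,b_\mu]=[a_\lambda b]_{\lambda+\mu}$ via (\ref{eq:1}) and comparing coefficients of $\lambda^n\mu^m$ is then routine generating-function bookkeeping.

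For the converse, starting from an $R$-module $M$, I would let $\partial$ act through the $\CC[\partial]$-structure and let $at^j$ act as the coefficient of $\frac{\lambda^j}{j!}$ in $a_\lambda^M v$; property (\ref{eq:5}) then holds by the definition of a $\lambda$-action. The delicate point, and the one I expect to be the main obstacle, is well-definedness: the elements $at^j$ merely span $(\Alg R)_+=\tilde R/\tilde\partial\tilde R$, so I must check that this prescription descends to the quotient, i.e.\ respects the relations $(\partial a)t^j=-j\,at^{j-1}$ imposed by $\tilde\partial\tilde R$. This is precisely where $(M1)_\lambda$ is needed, as it forces the coefficients to satisfy the matching identity. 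Once well-definedness is secured, verifying the Lie superalgebra module axioms reduces once more to (\ref{eq:2222}) read in reverse, with $(M2)_\lambda$ supplying the bracket (\ref{eq:1}). Finally I would note that the two constructions are mutually inverse and both preserve the $\ZZ/2\ZZ$-grading (with $at^j$ of parity $p(a)$ and $\partial$ even), which yields the asserted identification.
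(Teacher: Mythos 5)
Your proposal is correct and is exactly the dictionary the paper itself indicates (and attributes to \cite{CK}): the generating series $a_\la=\sum_j\frac{\la^j}{j!}(at^j)$ translates the $(\Alg R)^+$-action into the $\la$-action, with $(M1)_\la$, $(M2)_\la$ corresponding to the two identities in (\ref{eq:2222}) and condition (\ref{eq:5}) ensuring polynomiality. You also correctly isolate the only genuinely checkable point, namely that the converse assignment descends through the relations $(\p a)t^j=-j\,at^{j-1}$ defining $\Alg R=\tilde R/\tilde\p\tilde R$, which is precisely what $(M1)_\la$ provides.
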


The problem of classifying modules over a Lie conformal
superalgebra $R$ is thus reduced to the problem of classifying a
class of modules over the Lie superalgebra $(\Alg R)^+$.

Let $\fg$ be a Lie superalgebra satisfying the following three
conditions (cf. \cite{CL}, p.911):

(L1) $\fg$ is  $\ZZ$-graded of finite depth $d\in\NN$, i.e.
$g=\oplus_{j\geq -d} \fg_j$ and $[\fg_i , \fg_j ]\subset
\fg_{i+j}$.

(L2) There exists a semisimple element $z\in\fg_0$ such that its
centralizer in $\fg$ is contained in $\fg_0$.

(L3) There exists an element $\p\in\fg_{-d}$ such that $[\p,
g_i]=g_{i-d}$, for $i\geq 0$.

\vskip .2cm

Some examples of Lie superalgebras satisfying (L1)-(L3) are
provided by annihilation superalgebras of Lie conformal superalgebras.

If $\fg$ is the annihilation superalgebra of a Lie conformal
superalgebra, then the modules V over
$\fg$ that correspond to finite modules over the corresponding
Lie conformal superalgebra satisfy the following conditions:

(1) For all $v\in V$ there exists an integer $j_0\geq -d$ such
that $\fg_j v=0$, for all $j\geq j_0$.

(2) $V$ is finitely generated over $\CC[\p]$.

\noindent Motivated by this, the $\fg$-modules satisfying these
two properties are called {\it finite conformal modules}.

We have a triangular decomposition
\begin{equation} \label{eq:ddd}
\fg=\fg_{<0} \oplus\fg_0 \oplus\fg_{>0}, \qquad \hbox{ with }
g_{<0}=\oplus_{ j <0} \fg_j, \ g_{>0}=\oplus_{ j >0} \fg_j.
\end{equation}
Let $\fg_{\geq 0}=\oplus_{j\geq 0} \fg_j$. Given a $\fg_{\geq
0}$-module $F$, we may consider the associated induced
$\fg$-module
\begin{displaymath}
\Ind(F)=\Ind^\fg_{\fg_{\geq 0}} F=U(\fg)\otimes_{U(\fg_{\geq
0})} F,
\end{displaymath}
called the {\it generalized Verma module} associated to $F$.
We shall
identify Ind$(F)$ with $U(\fg_{<0})\otimes F$ via the PBW theorem.

Let $V$ be an $\fg$-module. The elements of the subspace
\begin{displaymath}
\Sing(V):=\{ v\in V| \fg_{>0} v=0\}
\end{displaymath}
are called {\it singular vectors}. For us the most important case
is when $V=\Ind(F)$. The $\fg_{\geq 0}$-module $F$ is canonically an
$\fg_{\geq 0}$-submodule of $\Ind(F)$, and $\Sing(F)$ is a subspace of
$\Sing(\Ind(F))$, called the {\it subspace of trivial singular
vectors}. Observe that $\Ind(F)= F\oplus F_+$, where $F_+=U_+(\fg_{<
0})\otimes F$ and $U_+(\fg_{< 0})$ is the augmentation ideal of
the algebra $U(\fg_{< 0})$. Then non-zero elements of the space
\begin{displaymath}
 \Sing_+(\Ind(F)):=\Sing(\Ind(F))\cap F_+
\end{displaymath}
are called {\it non-trivial singular vectors}. The following key
result will be used in the rest of the paper, see \cite{KR1, CL}.

\begin{theorem}
     \label{th:1}
 Let $\fg$ be a Lie superalgebra   that
satisfies (L1)-(L3).

(a) If $F$ is an irreducible finite-dimensional $\fg_{\geq
0}$-module, then the subalgebra $\fg_{> 0}$ acts trivially on $F$
and $\Ind(F)$ has a unique maximal submodule.

(b) Denote by $\Ir(F)$ the quotient  by the unique maximal submodule
of $\Ind(F)$. Then the map $F\mapsto \Ir(F)$ defines a bijective
correspondence between irreducible finite-dimensional
$\fg_0$-modules and irreducible finite conformal  $\fg$-modules.

(c) A $\fg$-module $\Ind(F)$ is irreducible if and only if the
$\fg_0$-module $F$ is  irreducible and $\Ind(F)$ has no non-trivial
singular vectors.
\end{theorem}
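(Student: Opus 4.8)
The plan is to prove the three parts of \thref{th:1} in order, building on the structural conditions (L1)--(L3) and the triangular decomposition \eqref{eq:ddd}.

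First I would establish part (a). Since $F$ is finite-dimensional, the semisimple element $z\in\fg_0$ from (L2) acts on $F$ with finitely many eigenvalues, and I would use its eigenvalues to set up a grading (or filtration) argument. The key observation is that $\fg_{>0}$ is spanned by positive-degree elements, and by (L2) the centralizer of $z$ lies in $\fg_0$, so any nonzero $\fg_{>0}$-submodule of $F$ would have to contain a $z$-eigenvector annihilated by $\fg_{>0}$; a weight-counting argument on the $z$-eigenvalue then forces $\fg_{>0}F$ to be a proper submodule, hence zero by irreducibility of $F$ as a $\fg_{\geq0}$-module. Once $\fg_{>0}$ acts trivially, $F$ is really an irreducible $\fg_0$-module. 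For the unique maximal submodule, I would put a grading on $\Ind(F)=U(\fg_{<0})\otimes F$ by the degree coming from (L1), so that $\Ind(F)=\bigoplus_k\Ind(F)_k$ with $\Ind(F)_0=F$, and observe that any proper submodule must have trivial intersection with the top piece $F$ (otherwise it contains all of $F$ and hence generates everything). The sum of all such submodules is then still proper, giving the unique maximal submodule.

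For part (b), I would define $\Ir(F)$ as the quotient of $\Ind(F)$ by this maximal submodule and check that $F\mapsto\Ir(F)$ is well-defined and produces an irreducible finite conformal $\fg$-module, using condition (2) (finite generation over $\CC[\p]$) together with (L3): the element $\p\in\fg_{-d}$ with $[\p,\fg_i]=\fg_{i-d}$ ensures that $U(\fg_{<0})$ is finitely generated over $\CC[\p]$ in each relevant sense, so $\Ir(F)$ is a finite conformal module. To get the bijection I would construct the inverse: given an irreducible finite conformal $\fg$-module $V$, use condition (1) to find the ``top'' layer annihilated by $\fg_{>0}$, i.e.\ a nonzero subspace of singular vectors, which by (L2) carries an irreducible finite-dimensional $\fg_0$-action $F'$; then show $V\cong\Ir(F')$ by the universal property of induction and irreducibility. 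Checking that these two assignments are mutually inverse is the bookkeeping core of this part.

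For part (c), I would argue that $\Ind(F)$ is irreducible if and only if its unique maximal submodule is zero. Given that $\fg_0$-irreducibility of $F$ is assumed, the maximal submodule is generated (as a $\fg$-module) by $\Sing_+(\Ind(F))$: any nonzero proper submodule meets $\fg_{>0}$-invariants nontrivially by a lowest-degree argument, producing a nontrivial singular vector, and conversely any nontrivial singular vector generates a proper submodule since it lies in $F_+$ and cannot generate the top piece $F$. Hence $\Ind(F)$ is irreducible exactly when $\Sing_+(\Ind(F))=0$. I expect the main obstacle to be part (a), specifically the careful weight/degree argument showing $\fg_{>0}$ acts trivially on $F$ and that the top graded piece controls all proper submodules; this is where conditions (L1)--(L3) must be combined delicately, and where finite-dimensionality of $F$ is essential. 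The remaining parts are then largely formal consequences of having a good grading and the induction functor's universal property.
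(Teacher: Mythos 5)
Note first that the paper does not actually prove Theorem~\ref{th:1}: it is quoted as a known result with a pointer to \cite{KR1, CL} (and the underlying arguments go back to \cite{CK} and Rudakov \cite{R}), so there is no in-paper proof to compare yours against. Judged on its own, your outline follows the standard strategy of those references, but two of its load-bearing steps have genuine gaps as written.

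The most serious one is in part (a), in the passage ``any proper submodule must have trivial intersection with the top piece $F$ \dots The sum of all such submodules is then still proper.'' The first half is fine: for a submodule $N$, the space $N\cap F$ is a $\fg_0$-submodule of $F$, hence $0$ or $F$, and in the latter case $N\supseteq F$ generates all of $\Ind(F)$. But a sum of subspaces each meeting $F$ trivially can perfectly well meet $F$ nontrivially, so the conclusion does not follow from what you have shown. What is actually needed is the stronger statement that every proper submodule lies in $F_+=U_+(\fg_{<0})\otimes F$, and for that one must show that submodules respect the degree decomposition of $\Ind(F)$. This is precisely where (L2) earns its keep: the semisimple element $z$ acts on each $\fg_j$ with nonzero eigenvalues for $j\neq 0$ (in the examples, by the scalar $j$), and acting by a scalar on the irreducible $F$ it separates the graded pieces $\Ind(F)_k$ by eigenvalue, forcing every submodule to be graded; only then is the sum of all proper submodules contained in $F_+$ and hence proper. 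A related soft spot is your argument that $\fg_{>0}$ acts trivially on $F$: the clean route is to note that $\fg_{>0}F$ is a $\fg_{\geq 0}$-submodule of $F$, hence $0$ or $F$, and then to exclude $\fg_{>0}F=F$ using $\fg_j=[z,\fg_j]$ for $j>0$; your sentence about a ``$\fg_{>0}$-submodule of $F$ containing a $z$-eigenvector annihilated by $\fg_{>0}$'' neither explains why such a vector exists nor how it yields the conclusion. Parts (b) and (c) are correct in outline --- the minimal-filtration-level trick does produce singular vectors in any nonzero submodule, and a nontrivial singular vector $s$ generates $U(\fg_{<0})U(\fg_0)s\subseteq F_+$, which is proper --- but they again lean on the unproved gradedness of submodules (to guarantee that the singular vector found is nontrivial) and, for the inverse map in (b), on the nontrivial fact that an irreducible finite conformal module has a nonzero finite-dimensional $\fg_0$-irreducible space of singular vectors. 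These points should either be argued in full or explicitly delegated to \cite{CK, KR1, CL}.
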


\

In the following section we will describe the Lie conformal
superalgebra $K_n$ and its annihilation superalgebra
$K(1,n)_+$.
In the remaining sections we shall study the induced $K(1,n)_+$-modules
and its singular vectors in order to apply Theorem \ref{th:1} to get the
classification of irreducible finite modules over the Lie conformal
algebra $K_n$.

\

\

\section{Lie conformal algebra $K_n$ and annihilation Lie algebra
$K(1,n)_+$}\lbb{sK}

\

Let $\Lambda(n)$ be the Grassmann superalgebra in the $n$ odd
indeterminates $\xi_1, \xi_2,\ldots , \xi_n$. Let $t$ be an even
indeterminate, $\Lambda(1,n)=\CC[t, t^{-1}]\otimes
\Lambda(n)$, and consider the superalgebra of derivations of the superalgebra
$\Lambda(1,n)$:
\begin{equation} \label{eq:3.1}
W(1,n)=\{a\p_t +\sum_{i=1}^n a_i \p_i | a, a_i\in\Lambda(1,n)\},
\end{equation}
where $\p_i=\frac{\p}{\p\xi_i}$ and $\p_t=\frac{\p}{\p t}$.
The contact superalgebra $K(1,n)$ is the subalgebra of $W(1,n)$
defined by
\begin{equation}\label{KKKK}
K(1,n):=\{D\in W(1,n)\ |\ D\omega = f_D \omega, \hbox{ for some }
f_D\in \Lambda(1,n)\},
\end{equation}
where $\omega=dt - \sum_{i=1}^n \xi_i d\xi_i$ is the standard
contact form, and the action of $D$ on $\omega$ is the usual
action of vector fields on differential forms.

The space $\Lambda(1,n)$ can be identified with the Lie
superalgebra $K(1,n)$ via the map
\begin{displaymath}
f\mapsto 2 f \p_t + (-1)^{p(f)} \sum_{i=1}^n \bigg( \xi_i \p_t f+
\p_i f\bigg)\bigg(\xi_i \p_t + \p_i\bigg),
\end{displaymath}
the corresponding Lie bracket for elements $f,g\in\Lambda(1,n)$ being
\begin{displaymath}
[f,g]=\bigg( 2f - \sum_{i=1}^n \xi_i \p_i f\bigg) (\p_t g) - (\p_t
f)\bigg( 2g - \sum_{i=1}^n \xi_i \p_i g\bigg) + (-1)^{p(f)}
\sum_{i=1}^n (\p_i f)(\p_i g).
\end{displaymath}

The Lie superalgebra $K(1,n)$ is a formal distribution Lie
superalgebra with the following family of mutually local formal
distributions
\begin{displaymath}
a(z)= \sum_{j\in\ZZ} (a t^j) z^{-j-1}, \hbox{   for  }a=\xi_{i_1}
\dots \xi_{i_r}\in\Lambda(n).
\end{displaymath}

The associated Lie conformal superalgebra $K_n$ is
identified with
\begin{equation} \label{eq:3.2}
K_n=\CC[\p]\otimes \Lambda(n),
\end{equation}
the $\la$-bracket for
$f=\xi_{i_1} \dots \xi_{i_r}, g=\xi_{j_1} \dots \xi_{j_s}$
being as follows \cite{FK}:
\begin{equation} \label{eq:3.3}
[f_\la g]=\bigg( (r-2)\p (fg) + (-1)^r \sum_{i=1}^n (\p_i f)(\p_i
g)\bigg) + \lambda (r+s-4)fg.
\end{equation}
The Lie conformal superalgebra $K_n$ has rank $2^n$ over $\CC[\p]$.
It is simple for $n\geq 0, n\neq 4$, and the derived algebra $K_4'$ is simple
and has codimension 1 in $K_4$.

The annihilation superalgebra is
\begin{equation} \label{eq:3.4}
\A(K_n)=K(1,n)_+ =\Lambda(1,n)_+:=\CC[t]\otimes \Lambda(n),
\end{equation}
and the extended annihilation superalgebra is
\begin{displaymath}
\A(K_n)^e= K(1,n)^+=\CC \, \p \ltimes K(1,n)_+,
\end{displaymath}
where $\p$ acts on it as $-\hbox{ad}\p_t$. Note that $\A(K_n)^e$
is isomorphic to the direct sum of $\A(K_n)$ and the trivial
1-dimensional Lie algebra $\CC (\p+\dfrac{1}{2})$.

The Lie superalgebra $K(1,n)$ is $\ZZ$-graded by putting
\begin{displaymath}
\deg(t^m \xi_{i_1} \dots \xi_{i_k})=2m+k-2,
\end{displaymath}
and it induces a gradation on $K(1,n)_+$ making it a $\ZZ$-graded
Lie superalgebra of depth 2: $K(1,n)_+=\oplus_{j\geq -2}
(K(1,n)_+)_j$.  It is easy to check that $K(1,n)_+$ satisfies
conditions (L1)-(L3).

Observe that  $K(1,n)_+$ is  the subalgebra of
\begin{equation} \label{W+}
W(1,n)_+=\{a\p_t +\sum_{i=1}^n a_i \p_i | a,
a_i\in\Lambda(1,n)_+\},
\end{equation}
 defined by (cf.(\ref{KKKK}))
\begin{equation}\label{K+}
K(1,n)_+:=\{D\in W(1,n)_+\ |\ D\omega = f_D \omega, \hbox{ for
some } f_D\in \Lambda(1,n)_+\}.
\end{equation}
%

\

\section{Induced modules}\lbb{sinduced}

\

Using Theorem \ref{th:1}, the classification of finite irreducible
$K_n$-modules can be reduced to the study of induced modules for
$K(1,n)_+$. Observe that
\begin{align} \label{eq:n1}
  (K(1,n)_+)_{-2} & = < \{ {\bf 1} \} >, \nonumber\\
  (K(1,n)_+)_{-1} & =< \{ \xi_i  \ : \ 1\leq i\leq n\} > \\
 (K(1,n)_+)_{0} & =< \{ t\}\cup \{ \xi_i\xi_j  \ : \ 1\leq i<j\leq n\} >
  \nonumber
\end{align}
We shall use the following notation for the basis elements of
$(K(1,n)_+)_{0}$:
\begin{equation} \label{eq:so}
E_{00}=t, \qquad\quad F_{ij}= -\xi_i\xi_j.
\end{equation}
Observe that $(K(1,n)_+)_{0}\simeq \CC E_{00}\oplus \so(n)\simeq
 \cso(n)$. Take
\begin{equation} \label{eq:del}
\p:=-\frac{1}{2}{\bf 1}
\end{equation}
as the element that satisfies (L3) in  section \ref{sec:formal}.

For the rest of this work, $\fg$ will be $K(1,n)_+$. Let $F$ be  a
finite-dimensional irreducible  $\fg_0$-module, which we extend to a
$\fg_{\geq 0}$-module by letting $\fg_j$ with $j>0$ acting trivially.
Then we shall identify, as above
\begin{equation} \label{eq:indu}
\Ind (F)\simeq  \Lambda(1,n)\otimes F\simeq \cp\otimes\La(n)\otimes
F
\end{equation}
as $\CC$-vector spaces. In order to describe the action of $\fg$
in $\Ind (F)$ we introduce the following notation:
\begin{align} \label{eq:I}
\xi_I & :=\xi_{i_1}\dots \xi_{i_k}, \qquad\quad \hbox{  if }\quad
 I=\{i_1,\dots , i_k\},\nonumber\\
\p_L \,\xi_I & := \p_{l_1}\dots\p_{l_s} \,\xi_I  \  \qquad
\hbox{  if }\quad  L=\{l_1,\dots , l_s\},\\
\p_f \,\xi_I & := \p_L \,\xi_I  \ \qquad \quad \qquad \,
\hbox{if }\quad  f=\xi_L,\nonumber\\
|f| & :=k \qquad \qquad \quad \qquad \,\hbox{if }\quad
f=\xi_{i_1}\dots \xi_{i_k} .\nonumber
\end{align}


In the following theorem, we describe the $\fg$-action on $\Ind (F)$
using  the $\lambda$-action notation in (\ref{eq:la}), i.e.
\begin{displaymath}
f_\la(g\otimes v)= \sum_{j\geq 0} \frac{\la^j}{j!} \ (t^j f) \cdot
(g\otimes v)
\end{displaymath}
for $f, g\in\La(n)$ and $v\in F$.

\

\begin{theorem} \label{th:action} For any monomials
$f, g\in \La(n)$ and $v \in F$, where $F$ is a $\fg_0$-module, we have
the following formula for the $\lambda$-action of $\fg=K(1,n)_+$
on $\Ind(F)$:
\begin{align*}
 & f_\la(g\otimes v) =  \\
 & = (-1)^{p(f)} (|f|-2) \p (\p_f g)\otimes v
 + \sum_{i=1}^n \p_{(\p_i f)} (\xi_i g)\otimes v
 + (-1)^{p(f)} \sum_{r<s} \p_{(\p_r\p_s f)}g \otimes F_{rs}v\\
 & +\la \bigg[ (-1)^{p(f)}(\p_f g)\otimes E_{00} v
 + (-1)^{p(f)+p(g)} \sum_{i=1}^n \big(\p_f(\p_i g)\big)\xi_i\otimes v
 + \sum_{i\neq j} \p_{(\p_i f)}(\p_j g)\otimes F_{ij} v  \bigg]\\
  & + \la^2 (-1)^{p(f)} \sum_{i<j} \p_f (\p_i\p_j g) \otimes F_{ij} v.
\end{align*}
\end{theorem}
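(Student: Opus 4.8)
The plan is to compute the $\lambda$-action directly from its definition
$f_\la(g\otimes v)=\sum_{j\geq 0}\tfrac{\la^j}{j!}(t^jf)\cdot(g\otimes v)$,
where the dot denotes the action of the annihilation algebra element
$t^jf\in K(1,n)_+$ on $\Ind(F)=U(\fg_{<0})\otimes F$. So the main task
reduces to understanding, for a fixed homogeneous $t^jf$, how it acts on a
PBW monomial $g\otimes v$. First I would fix the identification
$\Ind(F)\simeq\cp\otimes\La(n)\otimes F$ and note that since
$\p=-\tfrac12\mathbf{1}$ generates $\fg_{-2}$ and the $\xi_i$ span
$\fg_{-1}$, a monomial $g=\xi_J\in\La(n)$ sitting inside $\Ind(F)$ should
be viewed as the image of the corresponding product of negative-degree
generators acting on the highest-weight space $F$. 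The degree bookkeeping
via $\deg(t^m\xi_{i_1}\cdots\xi_{i_k})=2m+k-2$ tells me which terms survive:
since $f$ has degree $|f|-2$, the element $t^jf$ has degree $2j+|f|-2$, and
by condition (L1) together with the fact that $\fg_{>0}$ kills $F$, only
finitely many $j$ contribute — precisely $j=0,1,2$, matching the three
powers of $\la$ in the statement.

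The key computational step is to use the explicit contact bracket
\eqref{eq:3.3} (equivalently the bracket on $\Lambda(1,n)$ given in
Section~\ref{sK}) to commute $t^jf$ past the negative generators that build
up $g\otimes v$ from $v\in F$. Concretely, I would set up an inductive
commutation: to act on $\xi_J\otimes v$, I bracket $t^jf$ with one
$\xi$-generator at a time, using that $[\,\cdot\,,\xi_i]$ on the conformal
side corresponds to the partial derivative $\p_i$, and that brackets with
$\mathbf 1$ (i.e.\ with $\p$) produce the overall $\p$-module structure.
The three structural pieces of the answer then have a clean source: the
term $\p_{(\p_i f)}(\xi_i g)$ comes from the odd part of the bracket
$(-1)^{p(f)}\sum_i(\p_if)(\p_ig)$ contracting one index of $f$ against a
newly introduced $\xi_i$; the $F_{rs}$-terms arise exactly when
\emph{two} indices of $f$ are contracted and the leftover acts through
$\fg_0\simeq\cso(n)$ on $v$; and the $E_{00}=t$ contributions, carrying the
$(\,\cdot\,)\otimes E_{00}v$ dependence, come from the $\p_t$-part of the
contact vector field, which is why they sit at order $\la^1$ together with
the term where $f$ is differentiated against $\p_ig$.

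The main obstacle I anticipate is \emph{sign and normalization
bookkeeping} rather than any conceptual difficulty: the parities
$(-1)^{p(f)}$, $(-1)^{p(f)+p(g)}$ and the Koszul signs from reordering odd
generators in the Grassmann algebra must be tracked with care, and the
factors $(|f|-2)$, $(r+s-4)$ etc.\ from \eqref{eq:3.3} have to be matched
against the factor $t^j$ and the induced grading so that the powers of
$\la$ land correctly. I would organize this by treating $f$ as a single
monomial $\xi_I$ and expanding $t^jf\cdot(\xi_J\otimes v)$ for each of
$j=0,1,2$ separately, reading off the coefficient of $\la^j/j!$. The
cleanest way to verify the result is to check the two module axioms
$(M1)_\la$ and $(M2)_\la$ on the proposed formula, or alternatively to test
it on low-rank cases ($n$ small, $f$ a single $\xi_i$ or $\xi_i\xi_j$),
which pins down every sign; once the formula is confirmed on these
generators, $\CC$-bilinearity and the derivation property extend it to all
monomials $f,g$, completing the proof.
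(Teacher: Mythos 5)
Your overall strategy coincides with the paper's (Appendix A): expand $f_\la(g\otimes v)=\sum_j\tfrac{\la^j}{j!}(t^jf)\cdot(g\otimes v)$, commute $t^jf$ past the odd generators building up $g$, show that only $j=0,1,2$ survive, and compute each coefficient separately. But two steps, as you describe them, do not go through. First, the vanishing for $j\geq 3$ does not follow from the degree of $t^jf$ alone: an element of $\fg_{>0}$ kills $1\otimes v$ but not $g\otimes v$, and the naive grading of $\Ind(F)$ (the target sits in degree $2j+|f|-|g|-2$) only forces vanishing for $j>\tfrac{1}{2}(|g|-|f|+2)$, which for large $n$ is far weaker than $j\leq 2$. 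The correct argument (Lemma \ref{lem:f1}) tracks how much the degree can drop during the commutation: each bracket with a $\xi_r$ either consumes a power of $t$ or contracts an index of $f$, and either way lowers the degree by exactly one; since at most $j$ drops of the first kind and $|f|$ of the second are available, every element of $\fg$ that finally reaches $v$ has degree at least $j-2$, whence vanishing for $j\geq 3$. You need this finer count; it is not ``condition (L1) together with $\fg_{>0}F=0$.''

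Second, your closing reduction --- verify the formula for $f=\xi_i$ and $f=\xi_i\xi_j$ and then ``extend by $\CC$-bilinearity and the derivation property'' to all monomials $f$ --- has no valid mechanism behind it. The $\la$-action is linear in $f$ but in no sense multiplicative or derivation-like in $f$: $\La(n)$ sits inside $K(1,n)_+$ as a Lie superalgebra under the contact bracket, and $(\xi_i\xi_j)_\la$ is not determined by $(\xi_i)_\la$ and $(\xi_j)_\la$ through any product rule. (One could in principle recover the action of higher $f$ from iterated brackets via $(M2)_\la$, but converting that into the stated closed formula is a computation of the same order as the direct one.) The paper avoids this by computing directly for an arbitrary monomial $f=\xi_I$, under the normalization $g=\xi_J\xi_K$ with $J\cap I=\emptyset$ and $K\subseteq I$; that normalization is the organizational device that makes the sign bookkeeping --- which is the entire content of Lemmas \ref{lem:f2}--\ref{lem:f7} --- tractable, and your proposal defers all of it.
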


The proof of this theorem will be done through several lemmas.
Since this is quite technical, we have moved the proof into
Appendix A.

In the last part of this section we shall prove an easier formula
for the $\la$-action in the induced module. This is done by taking
the Hodge dual of the basis (cf. \cite{CL}, pp. 922 and observe
the difference). More precisely, for  a monomial $\xi_I\in \La
(n)$, we let $\overline{\xi_I}$ be its Hodge dual, i.e. the unique
monomial in $\La (n)$ such that $\overline{\xi_I} \xi_I= \xi_1
\dots \xi_n$.

\begin{lemma} \label{lem:hodge} For any monomials elements $f=\xi_I,
g=\xi_L$, we have
\begin{align*}
 \overline{\p_i g}& = \overline{g}\xi_i
 = (-1)^{|\overline{g}|} \xi_i \overline{g}, \\
 \overline{\p_f (g)} &= (-1)^{\frac{|f|(|f|-1)}{2} +
 |f||\overline{g}|} \, f \ \overline{g}, \\
  \overline{\xi_i g}&= - (-1)^{|\overline{g}|} \p_i \overline{g},  \\
  \overline{g \xi_i }&= - (-1)^{n} \p_i \overline{g}.
\end{align*}

\end{lemma}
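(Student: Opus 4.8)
The plan is to treat the first identity as fundamental and to deduce the other three from it by induction and by elementary super-sign bookkeeping. Throughout I write each monomial $\xi_I$ with its indices in increasing order, and I record two elementary facts. First, the defining relation $\overline{\xi_I}\,\xi_I=\xi_1\cdots\xi_n$ determines $\overline{\xi_I}$ uniquely in the following usable form: if $A$ is a scalar multiple of the (signed) monomial complementary to a monomial $h$ and $A\cdot h=\xi_1\cdots\xi_n$, then $A=\overline h$. Second, the ``number operator'' identity $\xi_i\,\p_i g=g$ holds whenever $\xi_i$ occurs in $g$ (while $\p_i g=0$ otherwise); combined with super-commutativity this also yields $(\p_i\overline g)\,\xi_i=(-1)^{|\overline g|-1}\overline g$ when $\xi_i$ occurs in $\overline g$.

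For the first identity I would split on whether $\xi_i$ occurs in $g$. If it does not, then $\p_i g=0$ and, since $\xi_i$ then occurs in $\overline g$, also $\overline g\,\xi_i=0$, so both sides vanish. If it does, then $\overline g\,\xi_i$ is a signed multiple of the monomial complementary to $\p_i g$, and $(\overline g\,\xi_i)(\p_i g)=\overline g\,(\xi_i\,\p_i g)=\overline g\,g=\xi_1\cdots\xi_n$; by uniqueness $\overline{\p_i g}=\overline g\,\xi_i$, and moving the odd factor $\xi_i$ past $\overline g$ gives the second equality $(-1)^{|\overline g|}\xi_i\overline g$. The second identity then follows by induction on $|f|$, the base cases $|f|=0$ (trivial) and $|f|=1$ being exactly the first identity. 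For the step, write $f=\xi_j f''$ with $j$ the smallest index of $f$, so that $\p_f=\p_j\p_{f''}$; applying the first identity with $h=\p_{f''}g$ and then the inductive hypothesis for $f''$ gives $\overline{\p_f g}=\big(\overline{\p_{f''}g}\big)\xi_j$, and the reordering $(f''\overline g)\xi_j=(-1)^{|\overline g|+(|f|-1)}\,f\,\overline g$ contributes a sign whose $|\overline g|$-part assembles to $|f||\overline g|$ and whose remaining part $\tfrac{(|f|-1)(|f|-2)}{2}+(|f|-1)$ recombines to the required $\tfrac{|f|(|f|-1)}{2}$. The degenerate case $\p_f g=0$ (some index of $f$ missing from $g$) is consistent, since then an index of $f$ lies in $\overline g$ and so $f\,\overline g=0$ as well.

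For the third identity I would argue exactly as for the first, verifying that $A:=-(-1)^{|\overline g|}\p_i\overline g$ is the dual of $\xi_i g$: when $\xi_i$ occurs in $g$ both sides vanish, and otherwise $A\,(\xi_i g)=-(-1)^{|\overline g|}\big[(\p_i\overline g)\xi_i\big]g=-(-1)^{|\overline g|}(-1)^{|\overline g|-1}\overline g\,g=\overline g\,g=\xi_1\cdots\xi_n$, with $A$ a signed multiple of the monomial complementary to $\xi_i g$, so uniqueness gives $A=\overline{\xi_i g}$. The fourth identity is then immediate from the third via $g\,\xi_i=(-1)^{|g|}\xi_i g$ together with $|g|+|\overline g|=n$. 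The only genuine obstacle in all of this is the sign bookkeeping: each individual step is elementary, but the interplay of the defining relation, the number-operator identity, and the repeated reordering of odd factors must be tracked exactly, and the vanishing cases checked for consistency at every identity.
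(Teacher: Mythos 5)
Your proof is correct. The paper does not actually supply an argument here --- its ``proof'' of Lemma~4.2 is literally ``left to the reader'' --- so there is no method to diverge from; your route (characterize $\overline{h}$ by the uniqueness of the complementary signed monomial with $\overline{h}\,h=\xi_1\cdots\xi_n$, prove the first identity directly via $\xi_i\,\partial_i g=g$, get the second by induction on $|f|$ peeling off the smallest index, and the last two by the same uniqueness test plus supercommutativity) is a complete and valid way to fill the gap, with the sign bookkeeping checking out, e.g.\ $\tfrac{(|f|-1)(|f|-2)}{2}+(|f|-1)=\tfrac{|f|(|f|-1)}{2}$ in the inductive step.
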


\begin{proof} The proof is left to the reader.
\end{proof}


The following theorem translates Theorem \ref{th:action}
in terms of the Hodge dual basis.

\

\begin{theorem} \label{th:action-dual} Let $F$ be a
$\fg_0=\cso(n)$-module. Then the $\lambda$-action of $K(1,n)_+$ in
$\Ind(F)=\cp\otimes \Lambda(n)\otimes F$, given by Theorem \ref{th:action},
is equivalent to the following one:
%
\begin{align*} \label{eq:ac}
 & f_\la(g\otimes v) =  (-1)^{\frac{|f|(|f|+1)}{2}+ |f||g|} \ \times \\
 & \times\   \Bigg\{(|f|-2) \p (f g)\otimes v
  - (-1)^{p(f)} \sum_{i=1}^n (\p_i f) (\p_i g)\otimes v
  -  \sum_{r<s} (\p_r\p_s f)g \otimes F_{rs}v\\
 & +\la \bigg[ f g\otimes E_{00} v
 - (-1)^{p(f)} \sum_{i=1}^n \p_i \big(f \xi_i g\big) \otimes v
 + (-1)^{p(f)}\sum_{i\neq j} (\p_i f)\xi_j g \otimes F_{ij} v  \bigg]\\
  & - \la^2  \sum_{i<j} f \xi_i\xi_j g \otimes F_{ij} v \Bigg\}.
\end{align*}
\end{theorem}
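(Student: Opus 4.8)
The plan is to make the word ``equivalent'' precise by exhibiting an explicit intertwiner built from the Hodge star. I would define the $\cp$-linear and $F$-linear bijection $\Phi\colon \Ind(F)\to\Ind(F)$ by $\Phi(g\otimes v)=\overline g\otimes v$ on monomials $g\in\La(n)$; since Hodge duality permutes the monomial basis up to sign (one has $\overline{\overline{\xi_I}}=(-1)^{|I|(n-|I|)}\xi_I$), $\Phi$ is a well-defined isomorphism of $\cp$-modules, changing parity by $n\bmod 2$. Writing $\rho_1$ for the action of \thref{th:action} and $\rho_2$ for the action in the present statement, the claim is exactly the intertwining relation $\Phi\circ\rho_1(f)_\la=\rho_2(f)_\la\circ\Phi$, and since both actions are $\cp$-linear and commute with the operators $E_{00},F_{rs}$ acting on $F$, it suffices to verify this on each basis vector $g\otimes v$ with $f,g\in\La(n)$ monomials.

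Concretely I would compute the two sides as follows. The left-hand side $\Phi(\rho_1(f)_\la(g\otimes v))$ is obtained by applying $\overline{(\cdot)}$ to every $\La(n)$-factor in the formula of \thref{th:action}; by \leref{lem:hodge} each contraction $\p_f(\cdot)$ turns into left multiplication by $f$, each $\overline{\p_i(\cdot)}$ inserts a $\xi_i$, and each $\overline{\xi_i(\cdot)}$ or $\overline{(\cdot)\xi_i}$ produces a $\p_i$ together with a sign $(-1)^{|\overline g|}$ or $(-1)^n$. The right-hand side $\rho_2(f)_\la(\overline g\otimes v)$ is simply the displayed formula with $g$ replaced by $\overline g$, so that its global prefactor becomes $(-1)^{\frac{|f|(|f|+1)}{2}+|f||\overline g|}$. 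After bringing every monomial produced on the left into the canonical order $f\,(\text{derivatives})\,\overline g$ using super-commutativity $\overline g\,\xi_i=(-1)^{|\overline g|}\xi_i\overline g$, one checks that the two sides agree term by term in the summation indices.

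The engine of every such comparison is the single parity congruence $p(f)+\tfrac{|f|(|f|-1)}{2}\equiv\tfrac{|f|(|f|+1)}{2}\pmod 2$, valid because the difference of exponents is $|f|\equiv p(f)$; combined with the degree relation $p(g)=|g|=n-|\overline g|$ it forces all accumulated signs to collapse to those in the statement. For example, the $\la^0$-term $(-1)^{p(f)}(|f|-2)\p(\p_f g)\otimes v$ becomes $(-1)^{p(f)+\frac{|f|(|f|-1)}{2}+|f||\overline g|}(|f|-2)\p(f\overline g)\otimes v$, which by the congruence is precisely the leading term of $\rho_2$; the $\la^2$-term matches after rewriting $\overline{\p_i\p_j g}=-\overline g\,\xi_i\xi_j$ and commuting the even factor $\xi_i\xi_j$ past $\overline g$.

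I expect the main obstacle to be purely organizational: the disciplined bookkeeping of signs across all seven summands in three different $F$-coefficient types ($\otimes v$, $\otimes E_{00}v$, $\otimes F_{ij}v$). The most delicate are the $\la$-linear terms, where the dual of $(\p_f(\p_i g))\xi_i$ must be matched with $\p_i(f\xi_i\overline g)$ --- here one must apply \leref{lem:hodge} to $\overline{\p_f\p_i g}$ first and only then transport the surviving $\xi_i$ to the canonical position --- and the $F_{ij}$-terms, where a factor $\xi_j$ must be carried past $\overline g$, each step contributing a parity sign that must be reconciled via the relation $|g|=n-|\overline g|$. Once these are handled consistently, the verification is complete.
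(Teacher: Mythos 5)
Your proposal is correct and takes essentially the same approach as the paper: the paper's proof likewise introduces the automorphism $T(g\otimes v)=\overline{g}\otimes v$ and identifies the stated formula with $T\circ (f_\la\,\cdot)\circ T^{-1}$, verifying the match term by term via \leref{lem:hodge} (it writes out the second summand explicitly, just as you do for the leading and $\la^2$ terms). The sign bookkeeping you describe, in particular the congruence $p(f)+\tfrac{|f|(|f|-1)}{2}\equiv\tfrac{|f|(|f|+1)}{2}\pmod 2$ combined with $|g|=n-|\overline{g}|$, is exactly what the paper's sample computation relies on.
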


\begin{proof} By simple computations, using Lemma \ref{lem:hodge},
it is easy to obtain the
$\la$-action in the Hodge dual basis. That is, let $T$ be the vector space
automorphism of Ind$(F)$ given by $T(g\otimes
v)=\overline{g}\otimes v$, then the theorem gives the formula for
the composition $T\circ (f_\la\,\cdot)\circ T^{-1}$. For
example, in order to "dualize" the second term in the $\la$-action
in Theorem \ref{th:action}, we write $\overline{\p_{(\p_i
f)} \xi_i g}$ \ in terms of $f$ and $\overline{g}$, as follows:
\begin{align*}
\overline{\p_{(\p_i f)} \xi_i g} &  =
  (-1)^{\frac{(|f|-1)(|f|-2)}{2} + (|f|-1)|\overline{\xi_i g}|} \
  (\p_i f) \overline{\xi_i g} \\
 &  = (-1)^{\frac{(|f|-1)(|f|-2)}{2} + (|f|-1)(|\overline{ g}|-1)+1+|\overline{g}|} \
  (\p_i f) (\p_i \overline{ g}) \\
  & = (-1)^{\frac{|f|(|f|+1)}{2} + |f||\overline{ g}|+1+|f|} \
  (\p_i f) (\p_i \overline{ g}),
\end{align*}
obtaining the second summand in the formula, given by the theorem. By
similar computations the proof follows.
\end{proof}

\section{Singular vectors}\lbb{ssingular}

\

By Theorem \ref{th:1}, the classification of irreducible finite
modules over the Lie conformal superalgebra $K_n$ reduces to the
study of singular vectors in the induced modules $\Ind(F)$, where
$F$ is an irreducible finite-dimensional $\cso(n)$-module. This
section will be devoted to the classification  of singular
vectors.

When we discuss the highest weight of vectors and singular
vectors, we always mean with respect to the upper Borel subalgebra
in $K(1,n)_+$ generated by $(K(1,n)_+)_{>0}$ and the elements of
the Borel subalgebra of $\so(n)$ in $(K(1,n)_+)_0$. More precisely, recall
(\ref{eq:so}), where we defined $F_{ij}=-\xi_i\xi_j\in
(K(1,n)_+)_0 \simeq \CC E_{00}\oplus\mathfrak{so}(n)$. Observe
that $F_{ij}$ corresponds to $E_{ij}-E_{ji}\in \mathfrak{so}(n)$,
where $E_{ij}$ are the elements of the standard basis of matrices.
Consider the following (standard) notation (cf. \cite{Knapp},
p.83):

\

\noindent{\bf Case $\mathfrak{g}=\mathfrak{so}(2m+1,\CC)$:} Here
we take
\begin{equation}\label{H}
H_j=i \ F_{2j-1,2j}, \qquad 1\leq j\leq m,
\end{equation}
a basis of  a Cartan subalgebra $\mathfrak{h}$. Let $\ep_j\in
\mathfrak{h}^*$ be given by $\ep_j(H_k)=\delta_{jk}$. Let
$$
\Delta=\{\pm\ep_i\pm\ep_j\ |\ i\neq j\}\cup\{\pm\ep_k\}
$$
be the set of roots. The root space decomposition is
$$
\mathfrak{g}=\mathfrak{h}\oplus\bigoplus_{\alpha\in \Delta}
\mathfrak{g}_\alpha, \qquad \hbox{ with } \fg_\al=\CC E_\al
$$
where, for $1\leq l< j\leq m$ and $1\leq k\leq m$,
\begin{align}\label{E}
E_{\ep_l-\ep_j} \, \  & =F_{2l-1,2j-1}+F_{2l,2j}+i
(F_{2l-1,2j}-F_{2l,2j-1}),\\
E_{\ep_l+\ep_j} \, \  & =F_{2l-1,2j-1}-F_{2l,2j}- i (F_{2l-1,2j}+
F_{2l,2j-1}),\nonumber\\
E_{-(\ep_l-\ep_j)} & =F_{2l-1,2j-1}+F_{2l,2j}-i
(F_{2l-1,2j}-F_{2l,2j-1}),\nonumber\\
E_{-(\ep_l+\ep_j)} & =F_{2l-1,2j-1}-F_{2l,2j}+ i (F_{2l-1,2j}+
F_{2l,2j-1}), \nonumber\\
E_{\pm\ep_k} \,\ \ &=F_{2k-1,2m+1}\mp i F_{2k,2m+1}.\nonumber
\end{align}
\

Let $\Pi=\{\ep_1-\ep_2, \dots , \ep_{m-1}-\ep_m, \ep_m\}$ and $
\Delta^+=\{\ep_i\pm\ep_j\ |\ i\leq j\}\cup\{\ep_k\} $, be the
simple and positive roots respectively. Consider
\begin{equation*}
\al_{lj}:=F_{2l-1,2j-1}-i
F_{2l,2j-1}=\frac{1}{2}(E_{\ep_l-\ep_j}+E_{\ep_l+\ep_j})
\end{equation*}
\begin{equation}\label{eq:beta}
\be_{lj}:=F_{2l,2j}+i
F_{2l-1,2j}=\frac{1}{2}(E_{\ep_l-\ep_j}-E_{\ep_l+\ep_j})\qquad
\end{equation}
\vskip .07cm
\begin{equation*}
\ga_{k}:=E_{\ep_k}.\qquad \qquad\quad
\end{equation*}
Then,
\begin{equation}\label{eq:borel}
    B_{\so (2m+1)}=<\{\al_{lj},\be_{lj}, \ga_k\ |\ 1\leq i< j\leq
    m, 1\leq k\leq m\}>.
\end{equation}

\


\noindent{\bf Case $\mathfrak{g}=\mathfrak{so}(2m,\CC)$:} Here we
take
$$
H_j=i \ F_{2j-1,2j}, \qquad 1\leq j\leq m,
$$
a basis of  a Cartan subalgebra $\mathfrak{h}$, as with
$\so(2m+1)$. In this case
$$
\Delta=\{\pm\ep_i\pm\ep_j\ |\ i\neq j\}
$$
is the set of roots. Let $\Pi=\{\ep_1-\ep_2, \dots ,
\ep_{m-1}-\ep_m, \ep_{m-1}+\ep_m\}$ and $
\Delta^+=\{\ep_i\pm\ep_j\ |\ i\leq j\} $, be the simple and
positive roots respectively. Then,
\begin{equation}\label{eq:borel2}
    B_{\so (2m)}=<\{\al_{lj},\be_{lj}\ |\ 1\leq i< j\leq
    m\}>.
\end{equation}

\

In order to write explicitly weights for vectors in
$K(1,n)_+$-modules, we will consider the basis for the Cartan subalgebra
$\fh$ in $(K(1,n)_+)_0\simeq \CC E_{00}\oplus\mathfrak{so}(n)$,
introduce above:
\begin{displaymath}
E_{00} ; H_1 , \ldots , H_m, m=[n/2],
\end{displaymath}
and we shall write the weight of an eigenvector for the Cartan
subalgebra $\fh$ as an $m+1$-tuple for the corresponding eigenvalues of this
basis:
\begin{equation}\label{lambda-mu}
\mu=(\mu_0; \mu_1 , \ldots , \mu_m ).
\end{equation}

\vskip .5cm

Observe that a vector $\vec{m}$ in the  $K(1,n)_+$-module $\Ind(F)$
is a singular highest weight vector if and only if the following conditions
are satisfied

\

(S1) $\frac{d^2}{d\la^2}\,(f\,_\la \,\vec{m})=0$ for all
$f\in\Lambda(n)$,

\vskip .3cm

(S2) $\frac{d}{d\la}\,(f\,_\la \, \vec{m})|_{\la=0}=0$ for all
$f=\xi_I$ with $|I|\geq 1$,

\vskip .3cm

(S3) $(f\, _\la\,  \vec{m})|_{\la=0}=0$ for all $f=\xi_I$ with
$|I|\geq 3$ or $f\in B_{\so(n)}$.

\

In order to classify the finite irreducible $K_n$-modules we
should solve the equations (S1-3) to obtain the singular vectors.
The next theorem is the main result of this section and gives us
the complete classification of singular vectors:

\vskip .5cm

\begin{theorem}\label{sing-vect} Let $F$ be an
irreducible finite-dimensional $\cso(n)$-module with highest
weight $\mu$.

If $n\geq 4$,
then $\vec{m}\in \hbox{\rm Ind}(F)$ is a
non-trivial singular highest weight vector if and only if $\vec{m}$ is
one of the following vectors (in the Hodge dual basis):
\vskip .3cm
\begin{itemize}
    \item[(a)]
$\vec{m}=\big(\xi_{\{2\}^c}-i\, \xi_{\{1\}^c}\big)\otimes v_\mu$,
where $v_\mu$ is a highest weight vector of the $\cso(n)$-module
$F$ and $\mu=(-k;k,0,\dots , 0)$, with $k\in \ZZ_{>0}$, \vskip
.3cm

    \item[(b)]  $ \vec{m}=\dsum_{l=1}^m
    \bigg[\big(\xi_{\{2l\}^c}+i\,\xi_{\{2l-1\}^c}\big)\otimes
w_l + \big(\xi_{\{2l\}^c}-i \,\xi_{\{2l-1\}^c}\big)\otimes
\overline{w}_l\bigg]-$

  $\qquad \qquad \qquad \qquad \qquad \qquad - \
  \delta_{n,\hbox{\rm odd}} \ \,
i\, \xi_{\{2m+1\}^c}\otimes w_{m+1}, $

\vskip .3cm \noindent where $w_1=v_{\mu}$ is a highest weight
vector of the $\cso(n)$-module $F$ with highest weight
$$
\mu =(n+k-2;k,0,\dots , 0), \hbox{ for } k\in \ZZ_{\geq 0},
$$
and all $w_l, \overline{w}_l$ are non-zero and uniquely determined by
 $v_{\mu}$.

\end{itemize}

\

\noindent If $n=3$, then $\vec{m}\in \hbox{\rm Ind}(F)$ is a
non-trivial singular highest weight vector if and only if $\vec{m}$ is
one of the following vectors:
\vskip .3cm

\begin{itemize}
    \item[(a)]
$\vec{m}=\big(\xi_{\{2\}^c}-i\, \xi_{\{1\}^c}\big)\otimes v_\mu$,
where $v_\mu$ is a highest weight vector of the $\cso(3)$-module
$V$ and $\mu=(-k;k)$, with $k\in \frac{1}{2}\ZZ_{>0}$, \vskip .3cm

    \item[(b)]  $ \vec{m}=
    \big(\xi_{\{2\}^c}+i\,\xi_{\{1\}^c}\big)\otimes
v_{\mu} + \big(\xi_{\{2\}^c}-i \,\xi_{\{1\}^c}\big)\otimes w_1 \,
- \, i\, \xi_{\{3\}^c}\otimes w_{2}, $

\vskip .3cm \noindent where $v_{\mu}$ is a highest weight vector
of the $\cso(3)$-module $F$ with highest weight
$$
\mu =(k+1;k), \hbox{ for } k\in \frac{1}{2}\ZZ_{\geq 0}\hbox{ and
}k\neq \frac{1}{2},
$$
and all $w_l, w_2$ are non-zero and uniquely determined by
$v_{\mu}$. \vskip .3cm

\item[(c)] $
\vec{m}= \p \ \big(\xi_*\otimes v_{\mu}\big) +\ i\
\xi_{\{1,2\}^c}\otimes v_{\mu}-\ 2 \xi_{\{2,3\}^c}\otimes
F_{2,3}v_{\mu}+\ 2\ \xi_{\{1,3\}^c}\otimes F_{1,3}v_{\mu} $,
where $v_\mu$ is  a  highest weight vector of the $\cso(3)$-module
$F$ with highest weight $\mu=(\frac{3}{2};\frac{1}{2})$.

\end{itemize}

\

\end{theorem}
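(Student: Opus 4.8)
The plan is to solve the system (S1)--(S3) directly, feeding in the explicit $\la$-action of Theorem~\ref{th:action-dual}. The key structural fact is that, for each monomial $f\in\La(n)$, the expression $f_\la\vec m$ is a polynomial in $\la$ of degree at most $2$ whose coefficients of $\la^0,\la^1,\la^2$ are the actions of the elements $f$, $tf$, $t^2f$ of $\fg=K(1,n)_+$, of degrees $|f|-2$, $|f|$ and $|f|+2$ respectively. Thus (S1) is the vanishing of the $\la^2$-coefficient, (S2) the vanishing of the $\la^1$-coefficient for $|f|\geq 1$, and (S3) the vanishing of the $\la^0$-coefficient for $|f|\geq 3$ together with the $\so(n)$-highest-weight conditions coming from $f\in B_{\so(n)}$. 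Since $\Ind(F)$ is $\ZZ$-graded and all these conditions are homogeneous, I may take $\vec m$ homogeneous of degree $-N$ with $N\geq 1$, the degree $0$ case yielding only trivial singular vectors. In the Hodge dual basis a monomial $\p^a\xi_J\otimes v$ has degree $|J|-n-2a$, so degree $-1$ is spanned by the $\xi_{\{j\}^c}\otimes v$ and degree $-2$ by the $\xi_{\{i,j\}^c}\otimes v$ together with the $\p\,\xi_{\{1,\dots,n\}}\otimes v$.

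The first and most delicate step is a degree bound: I would show that no non-trivial singular vector lives in degree $\leq -3$, so that $N\in\{1,2\}$. The idea is to exploit (S1): writing $\vec m=\sum_{a,J}\p^a\xi_J\otimes v_{a,J}$ and applying the $\la^2$-operator of Theorem~\ref{th:action-dual} for carefully chosen $f$, one isolates the graded pieces of $\vec m$ with the fewest $\xi$-factors (smallest $|J|$), on which this operator can be made nondegenerate; forcing those extreme pieces to vanish and propagating the vanishing through (S2) and (S3) excludes every degree below $-2$. I expect this to be the conceptual heart of the argument, both because the $\CC[\p]$-tower must be controlled (using sesquilinearity $(M1)_\la$ to reduce to leading powers of $\p$) and because the Grassmann signs of Theorem~\ref{th:action-dual} must be tracked uniformly in $n$.

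With $N\leq 2$ established, the two remaining cases are settled by direct computation. In degree $-1$, write $\vec m=\sum_{j=1}^n \xi_{\{j\}^c}\otimes v_j$ with $v_j\in F$. Here (S1) is automatic, since the product of $\xi_{\{j\}^c}$ with two further distinct generators vanishes in $\La(n)$; conditions (S2) (coming only from $f=\xi_i$) and (S3) (from $f=\xi_i\xi_j\xi_l$ and from $f\in B_{\so(n)}$) become a coupled linear system for the $v_j$. Solving it shows that all $v_j$ are determined by a single $\so(n)$-highest-weight vector $v_\mu$, that the $\so(n)$-weight must be $(k,0,\dots,0)$, and that the $E_{00}$-eigenvalue $\mu_0$ is forced to equal either $-k$ or $n+k-2$; these two branches give exactly families (a) and (b), the auxiliary vectors $w_l,\overline w_l$ of (b) being read off recursively from $v_\mu$, which proves at once their uniqueness and non-vanishing. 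In degree $-2$ the analogous system, now coupling the $\xi_{\{i,j\}^c}\otimes v$ and $\p\,\xi_{\{1,\dots,n\}}\otimes v$ components, is overdetermined and has no solution unless $n=3$, where it produces the single extra vector (c) at weight $(\tfrac{3}{2};\tfrac{1}{2})$.

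Finally I would record the weight arithmetic separately in the two cases $\so(2m+1)$ and $\so(2m)$, using the explicit root vectors $\al_{lj},\be_{lj},\ga_k$ of (\ref{eq:beta}), which accounts for the $\delta_{n,\mathrm{odd}}$ term in (b) and for the distinct shapes of the Borel in (\ref{eq:borel}) and (\ref{eq:borel2}); the low-rank anomalies (the case $n=3$, and the excluded value $k=\tfrac{1}{2}$ there) are then handled by hand. Apart from the degree bound, the principal obstacle is bookkeeping: checking that the sign conventions of Theorem~\ref{th:action-dual} are compatible with the chosen Borel of $\so(n)$, and confirming that the recursively defined $w_l,\overline w_l$ assemble into a genuine singular vector rather than merely a solution of the leading relations.
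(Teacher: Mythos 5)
Your plan is essentially the paper's own proof: the paper likewise translates (S1)--(S3) into the vanishing of the $\la^2$, $\la^1$, $\la^0$ coefficients of $f_\la\vec m$, proves a chain of reduction lemmas (Lemmas \ref{lem:deg}--\ref{lem:bbb}) showing the singular vector is concentrated in degrees $-1$ and $-2$, then solves the resulting linear systems --- the degree $-2$ case failing for $n\geq 4$ and producing case (c) for $n=3$ (Lemma \ref{5.5}), and the degree $-1$ case splitting into the two $E_{00}$-eigenvalue branches $-k$ and $n+k-2$ via the $w_l,\overline w_l$ recursion. The only organizational difference is that you would first invoke homogeneity of a highest weight singular vector and then bound the degree, whereas the paper bounds the $\p$-degree and the size of the index sets $I$ directly from the equations; both routes require the same sign-tracking computations, which your proposal correctly identifies as the main labor but leaves to be carried out.
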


The proof of this theorem will be done through several lemmas. Since this
is quite technical, we have moved the proof into appendix B.

\begin{remark} (a) The explicit expression of all non-zero vectors
$w_l,\bar{w_l}$ in terms of $v_{\mu}$ that appear in the second
family of singular vectors for all $n\geq 3$, are written in
(\ref{w-k}), (\ref{bar-w-k}), (\ref{w-m+1}), (\ref{bar-w-1}) and
(\ref{3333}).

\vskip.3cm

\noindent (b) If n=4, the first family of singular vectors
$\vec{m}=\big(\xi_{\{2\}^c}-i\, \xi_{\{1\}^c}\big)\otimes v_\mu$,
where $v_\mu$ is a highest weight vector of the $\cso(4)$-module
$F$ and $\mu=(-k;k,0)$, with $k\in \ZZ_{>0}$, corresponds to the
family of singular vectors $b_2$ in Proposition 7.2(i) in
\cite{CL}. Finally, the second family of singular vectors in
Theorem \ref{sing-vect}(b), correspond to  the family of singular
vectors $b_5$ in Proposition 7.2(ii) in \cite{CL}.

\vskip.3cm

\noindent (c) If n=3, the singular vectors in the cases (a), (b)
and (c) described in the previous theorem, correspond to the
vectors $a_2,\ a_4$ and $a_6$ in Proposition 5.1 in \cite{CL},
respectively. Observe that the families (a) and (b) described  for
$n\geq 4$ correspond to the families (a) and (b) for $n=3$, but in
the latter case the parameter $k$ is one half a positive integer.
Observe that the missing case $(k+1;k)$ with $k=\frac{1}{2}$ in
the family (b) is completed by the case (c).
\end{remark}

\

%
%

\section{Modules of differential forms, the contact complex
 and irreducible induced $K(1,n)_+$-modules}\label{forms}

\

Let us recall some standard notation from \cite{BKLR}. In order to
define the differential forms one considers an odd variable $dt$
and even variables $d\xi_1, \ldots , d\xi_n$ and defines the
differential forms to be the (super)commutative algebra freely
generated by these variables over $\Lambda(1,n)_+=\CC[t]\otimes
\Lambda(n) $, or
\begin{displaymath}
\Omega_+=\Omega_{n,+}:=\Lambda(1,n)_+ [d\xi_1, \ldots ,
d\xi_n]\otimes \Lambda(dt).
\end{displaymath}
Generally speaking $\Omega_+$ is just a polynomial (super)algebra
over the variables
\begin{displaymath}
t, \xi_1, \ldots , \xi_n, dt, d\xi_1, \ldots , d\xi_n,
\end{displaymath}
where the parity is
\begin{displaymath}
p(t)=0, \ \ p(\xi_i)=1, \ \ p(dt)=1, \ \ p(d\xi_i)=0.
\end{displaymath}
These are called {\it (polynomial) differential forms} , and we
define the {\it Laurent differential forms} to be the same algebra
over $\Lambda(1,n)=\CC[t, t^{-1}]\otimes \Lambda(n)$:

\begin{displaymath}
\Omega=\Omega_{n}=\O(1,n):=\Lambda(1,n) [d\xi_1, \ldots ,
d\xi_n]\otimes \Lambda[dt].
\end{displaymath}

We would like to consider a fixed complementary subspace
$\Omega_-$ to $\Omega_+$ in $\Omega$ chosen as follows

\begin{displaymath}
\Omega_- =\Omega_{n,-}:= t^{-1}\CC[t^{-1}]\otimes
\Lambda(n)\otimes\CC [d\xi_1, \ldots , d\xi_n]\otimes \Lambda[dt].
\end{displaymath}

For the differential forms we need the usual differential degree
that measure only the involvement of the differential variables
$dt, d\xi_1, \ldots , d\xi_n$, that is
\begin{displaymath}
\hbox{ deg } t=0, \hbox{ deg } \xi_i=0, \hbox{ deg } dt=1, \hbox{
deg } d\xi_i=1,
\end{displaymath}
%
which gives the
{\it standard $\ZZ$-gradation} both of $\O$ and $\O_\pm$. As
usual, we denote by $\O^k, \O^k_\pm$
 the corresponding graded
components, and if we need to take care of the dependance on $n$
they will be denoted by $\O^k_n$ and $ \O^k_{n,\pm}$,
respectively.

We denote by $\O^k_c$ the subspace of differential forms
with constant coefficients in $\O^k$.

The operator $d$ is defined on $\O$ as usual, as an odd derivation,
such that $ d(t) = dt,  d(\xi_i) = d\xi_i, d(dt)=d(d\xi_i) = 0$.
%
\noindent Observe that $d$ maps both $\O_+$ and $\O_-$ into
themselves and that $d^2 =0$.

As usual, we extend the natural action  of $W(1,n)_+$ on
$\Lambda(1,n)$ to the whole $\O$ by imposing the property
%
%
that $D$ (super)commutes with $d$. It is clear that $\O_+$ and
all the subspaces $\O^k$ are $W(1,n)_+$-invariant. Hence $\O_+^k$ and $\O^k$
are $W(1,n)_+$-modules, which are called the {\it natural
representations} of $W(1,n)_+$ in differential forms.

We define the action of $W(1,n)_+$ on $\O_-$ via the isomorphism
of $\O_-$ with the factor of $\O$ by $\O_+$. Practically this
means that in order to compute $D(f)$, where $f\in \O_-$, we
apply $D$ to $f$ and "disregard terms with non-negative powers of
$t$".

The operator $d$ restricted to $\O^k_\pm$ defines an odd morphism
between the corresponding representations. Clearly the image and
the kernel of such a morphism are submodules in $\O^k_\pm$. The
second statement of the following result is Proposition 4.1(3) in
\cite{BKLR}. Now, we complete the proof of this result.

\begin{proposition}\label{prop-dual} (a) The  maps
$d : \O_+^{l} \to \O_+^{l+1}$ are morphisms of $W(1,n)_+$-modules.
The kernel of one of them is equal to the image of the next one
and it is a non-trivial proper submodule in $\O_+^l$.

\noindent (b) The dual maps $d^\# : (\O_+^{l+1})^\# \to
(\O_+^l)^\#$ are morphisms of $W(1,n)_+$-modules. The kernel of
one of them is equal to the image of the next one and it is a
non-trivial proper submodule in $(\O_+^l)^\#$.
\end{proposition}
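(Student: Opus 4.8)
The plan is to prove the de Rham-type exactness statement (a) for the complex $(\O_+^\bullet, d)$ first, and then obtain (b) as a purely formal dualization of (a). The content of (a) beyond what is already cited from \cite{BKLR} (namely that the maps are $W(1,n)_+$-morphisms and the cohomology statement) is the claim that each image/kernel is a \emph{non-trivial proper} submodule in $\O_+^l$. First I would set up the complex explicitly: since $\O_+ = \Lambda(1,n)_+[d\xi_1,\dots,d\xi_n]\otimes \Lambda(dt)$ with the odd derivation $d$ satisfying $d(t)=dt$, $d(\xi_i)=d\xi_i$, $d^2=0$, and the $W(1,n)_+$-action commuting with $d$, the maps $d:\O_+^l\to\O_+^{l+1}$ are automatically module morphisms. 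The nontrivial point is exactness in positive and complementary degrees.

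For the exactness (``kernel $=$ image''), the key step is a Poincaré-lemma-style homotopy argument. I would construct a contracting homotopy $h$ with $dh+hd=\mathrm{id}$ away from the relevant edge, using integration/antiderivation in the polynomial variables $t,\xi_i$ paired against their differentials $dt,d\xi_i$. Because $\Lambda(1,n)_+=\CC[t]\otimes\Lambda(n)$ uses the \emph{polynomial} (not Laurent) ring in $t$, one must check that $h$ preserves $\O_+$ (the integration in $t$ stays in $\CC[t]$, raising degree, which is harmless). The standard super-Poincaré lemma then gives that $\coh^l(\O_+,d)$ is concentrated at the bottom, so for the intermediate $l$ the sequence is exact, i.e. $\ker(d:\O_+^{l+1}\to\O_+^{l+2})=\Im(d:\O_+^l\to\O_+^{l+1})$.

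The remaining assertions in (a) are that these submodules are both non-trivial and proper. \emph{Properness} follows because $d$ is not identically zero on any $\O_+^l$ with $l<n+1$ (for instance $d(t\,d\xi_1\cdots)$ is nonzero), so the image is a proper submodule of $\O_+^{l+1}$; and \emph{non-triviality} follows because $d$ has nonzero kernel in each positive degree (e.g.\ any closed but non-exact form, or simply $d$-closed forms produced by the homotopy computation), so the kernel is a nonzero submodule. I would pin these down by exhibiting one explicit nonzero closed form and one nonzero exact form in the appropriate degree, which is a short direct calculation rather than a structural argument.

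Finally, part (b) is obtained by dualizing. Given that $d:\O_+^l\to\O_+^{l+1}$ are $W(1,n)_+$-morphisms forming an exact complex, the graded duals $d^\#:(\O_+^{l+1})^\#\to(\O_+^l)^\#$ are again $W(1,n)_+$-morphisms (the contragredient action turns a morphism into a morphism), and exactness of a complex of the relevant finiteness type is preserved under dualization degree-by-degree, so $\ker d^\#=\Im d^\#$ with each again a non-trivial proper submodule of $(\O_+^l)^\#$. The one point requiring care here is that $\O_+^l$ is infinite-dimensional over $\CC$, so ``$\#$'' must be the restricted/graded dual (dual with respect to the grading by $t$-degree, on which $W(1,n)_+$ acts locally nilpotently); with that interpretation the finite-dimensionality of each graded piece makes the double dual and the preservation of exactness routine. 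The main obstacle I anticipate is precisely this functional-analytic bookkeeping: verifying that the homotopy $h$ genuinely lands in $\O_+$ (respecting the one-sided polynomial dependence on $t$) and that the dual complex is taken in the correct topological/graded sense so that exactness transfers cleanly; the algebraic Poincaré-lemma core is standard.
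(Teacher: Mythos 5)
Your proposal is correct and follows the same overall strategy as the paper: part (a) by a contracting\-/homotopy (Poincar\'e\--lemma) argument, part (b) by dualization, plus explicit forms to witness non\-triviality and properness. The details differ in two places worth noting. For (a), the paper's homotopy acts only in the last odd pair $(\xi_n,d\xi_n)$, namely $K(d\xi_n\,\nu)=\xi_n\nu$ and $K(\nu)=0$ otherwise, giving $Kd+dK=\mathrm{Id}-\varepsilon$ with $\varepsilon$ projecting away from $\xi_n,d\xi_n$; one then inducts on $n$ down to the explicitly computable base case $\O_{0,+}=\CC[t]\otimes\Lambda(dt)$. Your all\-/variables homotopy (including the $t$\--direction) achieves the same thing. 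For (b), the paper does not invoke a general duality principle: it dualizes the homotopy identity itself, obtaining $K^\# d^\# + d^\# K^\#=\mathrm{Id}-\varepsilon^\#$, and traces the induction through to $(\O_{0,+})^\#$, where closed\--versus\--exact is checked by hand; the only obstruction is the functional $\mathbf{1}^*$, which lives in form\--degree $0$ and hence does not disturb exactness at levels $l\geq 1$. Your route --- exactness of the restricted dual via finite\--dimensionality of graded pieces --- is a legitimate and arguably cleaner alternative, but one correction is needed: $d$ does not preserve the grading by $t$\--degree alone (it sends $t^m\nu$ to $m\,t^{m-1}dt\,\nu+t^m d\nu$). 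The grading that works is the one induced from the $\ZZ$\--grading of $K(1,n)_+$, with $\deg t=\deg dt=2$ and $\deg\xi_i=\deg d\xi_i=1$; this is preserved by $d$ and by the $W(1,n)_+$\--action, and its components of fixed form\--degree are finite\--dimensional, so exactness does pass to the restricted dual degreewise (away from the bottom of the complex, exactly as required). With that adjustment your argument for (b) is a valid, more formal substitute for the paper's explicit dual\--homotopy computation.
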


\begin{proof}
(a) Consider the  homotopy operator $K: \O_{n,+} \to \O_{n,+}$
given by
$$
K(d\xi_n\, \nu)=\xi_n \nu,\qquad K(\nu)=0 \quad\hbox{if $\nu$ does
not involve  $d\xi_n$.}
$$
Let $\varepsilon: \O_{n,+} \to \O_{n,+}$ be defined by
$$
\varepsilon(d\xi_n\, \nu)=\varepsilon(\xi_n \nu)=0,\qquad
\varepsilon(\nu)=\nu \quad\hbox{if $\nu$ does not involve both
$d\xi_n$ and $\xi_n$.}
$$
One can check that $Kd+dK=\hbox{Id}-\varepsilon$. By standard
argument, using this homotopy operator, the proof
follows.

(b) Considering the dual maps $ K: (\O_{n,+})^\# \to (\O_{n,+})^\#
$ and $\varepsilon: (\O_{n,+})^\# \to (\O_{n,+})^\#$, we obtain
$K^\# d^\# + d^\# K^\#=\hbox{Id}-\varepsilon^\#$.

Therefore, if $\alpha\in (\O_{n,+})^\# $ is a closed form, we get
$\alpha=  d^\#( K^\#\alpha)+\varepsilon^\#(\alpha)$, and
$\varepsilon^\#(\alpha)$ is also a closed form. Observe that
$(\varepsilon^\#\alpha)(\nu)=\alpha(\varepsilon(\nu))=0$ if $\nu$
involve   $d\xi_n$ or $\xi_n$. Hence $\varepsilon^\#\alpha$ is
essentially an element in $(\O_{n-1,+})^\# $, namely   it is equal
to an element in $(\O_{n-1,+})^\# $ trivially extended in $\nu$'s
that involve $d\xi_n$ or $\xi_n$. It follows by induction on $n$
that
\begin{equation}\label{alpha}
   \alpha=d^\#\alpha_1 + \alpha_0,
\end{equation}
for some $\alpha_0,\alpha_1\in(\O_{n,+})^\# $ and $\alpha_0$ is a
closed form that is a trivial extension of an element
$\tilde\alpha_0\in(\O_{0,+})^\# $. But $\O_{0,+}=\CC[t]\otimes
 \wedge(dt)=\{p(t)+q(t)\, dt \, |\, p,q\in\CC[t]\}$ and
 $\tilde\alpha_0\in(\O_{0,+})^\# $ is closed iff
 $\tilde\alpha_0(q(t)dt)=0$ for all $q\in \CC[t]$. In general, it
 is easy to see that $\gamma\in(\O_{0,+})^\# $ is exact iff
 $\gamma$ is closed (i.e. $\gamma(q(t)dt)=0$) and $\gamma(1)=0$.
 Therefore, using (\ref{alpha}), we have $\alpha=d^\#\beta +
 \alpha_0(1) \, \mathbf{1}^*$, where $\mathbf{1}^*(c\,1)=c$ and
 zero everywhere else. Since $\mathbf{1}^*\in(\O_{n,+}^{\,0})^\# $, we
 get the exactness of the sequence
 $$
 \cdots \overset{d^\#}{\longrightarrow}
 (\O_{n,+}^{\,2})^\#\overset{d^\#}{\longrightarrow}
 (\O_{n,+}^{\,1})^\#\overset{d^\#}{\longrightarrow}
 (\O_{n,+}^{\,0})^\#.
 $$
\end{proof}

Recall that $K(1,n)_+$ is a subalgebra of $W(1,n)_+$, defined by
(\ref{K+}).
%
%
Hence $\O_+$ and $\O_+^k$ are $K(1,n)_+$-modules as well.

Observe that the differential of     the standard contact form
$\omega=dt - \sum_{i=1}^n \xi_i d\xi_i$ is $d\om=-\sum_{i=1}^n
(d\xi_i)^2$, and following Rumin's construction in \cite{Ru},
consider for $k\geq 2$
\begin{align}\label{I^n}
   I^k&=d\om \wedge\O^{k-2} \ +\ \om \wedge\O^{k-1}\ \subset \
   \O^k,\\
    \label{I_+^n}
   I^k_+&=d\om \wedge\O_+^{k-2} \ +\ \om \wedge\O_+^{k-1}\ \subset \
   \O_+^k,
\end{align}
and $I^1= \om\wedge\O^0$, $I_+^1= \om\wedge\O_+^0$, $I^0=0=I^0_+$.
It is clear that $d(I^k)\subseteq I^{k+1}$ and $d(I^k_+)\subseteq
I_+^{k+1}$, and using (\ref{K+}) it is easy to prove that $I^k$
and $I_+^k$ are  $K(1,n)_+$-submodules of $\O^k$ and $\O_+^k$,
respectively. Therefore we have the following {\it contact
complex} of $K(1,n)_+$-modules (we also denote by $d$ the
induced maps in the quotients):
\begin{equation}\label{sequences}
     0  \longrightarrow  \CC \overset{d}{\longrightarrow}    \O_+^0
\overset{d}{\longrightarrow}
      \O_+^1/I_+^1   \overset{d}{\longrightarrow}  \O_+^2/I_+^2
      \overset{d}{\longrightarrow} \cdots
\end{equation}

\

Let $\CC[d\xi_i]^l\subseteq \O^l_+$ be the subspace of homogeneous
 polynomials in $d\xi_1,\dots ,d\xi_n$ of degree $l$. Using that
the action of $\cso(n)=\CC \ E_{00}\oplus \so(n)=(K(1,n)_+)_0$ in
$\O_+^l$ is given by

\begin{equation}\label{accc}
E_{00}\longmapsto 2\, t\,\p_t +\sum_{i=1}^n \xi_i\,\p_i,\,\,
F_{ij}\longmapsto \xi_i\partial_j-\xi_j\p_i,
\end{equation}
it follows that  $\CC[d\xi_i]^l$ is a $\cso(n)$-invariant
subspace. Now, consider $\Ga^l=\pi(\CC[d\xi_i]^l)$, where
$\pi:\O_+^l\longrightarrow \O^l_+/I^l_+$, and take $\T^l=
(\Ga^l)^\#$. Here and further, we denote by $\#$ the restricted
dual, that is  the sum of the dual of all the graded components of
the initial module, as in \cite{BKLR}, section B1. Then, we have

\

\begin{proposition} \label{prop:d1} (1) The $\cso(n)$-module $\T^l, l\geq 0$,
is irreducible with highest weight $(-l; l,0,\ldots ,0)$.

\vskip .3cm

\noindent(2) The $K(1,n)_+$-module $(\O_+^l/I^l_+)^\#, l\geq 0$,
contains $\T^l$ and this inclusion induces the isomorphism
\begin{displaymath}
(\O_+^l/I^l_+)^\#=\Ind(\T^l).
\end{displaymath}
\vskip .3cm

\noindent(3) The dual maps $d^\# : (\O_+^{l+1}/I_+^{l+1})^\# \to
(\O_+^l/I_+^l)^\#$ are morphisms of $K(1,n)_+$-modules. The kernel
of one of them is equal to the image of the next one and it is a
non-trivial proper submodule in $(\O_+^l/I_+^l)^\#$.

\vskip .3cm

\end{proposition}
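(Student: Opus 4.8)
I would begin by fixing the $\ZZ$-grading of $\O_+$ that is compatible with the $K(1,n)_+$-action, namely $\deg t=2$, $\deg\xi_i=1$, $\deg d\xi_i=0$, $\deg dt=1$, so that $\fg_j$ shifts the degree by $j$; since $\om$ and $d\om$ are homogeneous of degrees $1$ and $0$, the ideals $I_+^l$ are graded and the degree-zero part of $\O_+^l$ is exactly $\CC[d\xi_i]^l$. For part (1) I then read off the $\cso(n)$-action on $\CC[d\xi_i]^l$ from (\ref{accc}): $F_{ij}$ sends $d\xi_k$ to $\de_{jk}d\xi_i-\de_{ik}d\xi_j$, so, the $d\xi_i$ being even, $\CC[d\xi_i]^l$ is the $l$-th symmetric power $S^l(V)$ of the standard $\so(n)$-module $V=\langle d\xi_1,\dots,d\xi_n\rangle$, while $E_{00}$ acts by the scalar $l$. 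Because $\om$ has degree $1$, the degree-zero part of $I_+^l$ is just $d\om\wedge\CC[d\xi_i]^{l-2}$, and as $d\om=-\sum_i(d\xi_i)^2$ is the invariant quadratic form $Q$ on $V$ this gives $\Ga^l\cong S^l(V)/Q\cdot S^{l-2}(V)$. The classical harmonic decomposition $S^l(V)=\mathcal H^l\oplus Q\cdot S^{l-2}(V)$ identifies $\Ga^l$ with the irreducible trace-free module $\mathcal H^l$ of $\so(n)$-highest weight $l\ep_1$; dualizing (this module is self-dual, and $E_{00}$ is negated on the restricted dual) shows $\T^l=(\Ga^l)^\#$ is irreducible of highest weight $(-l;l,0,\dots,0)$.

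For part (2), the grading shows $\O_+^l/I_+^l$ is concentrated in non-negative degrees with degree-zero component $\Ga^l$; passing to the restricted dual reverses the grading, so $N:=(\O_+^l/I_+^l)^\#$ lives in non-positive degrees with top component $\T^l=(\Ga^l)^\#$, on which $\fg_{>0}$ acts trivially. Thus $\T^l$ is a $\fg_{\ge0}$-submodule of $N$, and by the universal property of induction the inclusion extends to a $\fg$-homomorphism $\Phi\colon\Ind(\T^l)\to N$. As every graded component is finite-dimensional, $\Phi$ is an isomorphism once the graded characters agree, so the crux is to match $\mathrm{ch}\,\Ind(\T^l)=\dim(\T^l)(1+q)^n/(1-q^2)$, read off from $\Ind(\T^l)\cong\cp\otimes\La(n)\otimes\T^l$ as in (\ref{eq:indu}), with $\mathrm{ch}(\O_+^l/I_+^l)$; the latter I would compute from the explicit structure of the ideal $I_+^l$, i.e. the Rumin reduction of forms along the contact structure provided by the homotopy operator of \prref{prop-dual}.

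For part (3), that the $d^\#$ are morphisms is immediate: $d$ (super)commutes with the $W(1,n)_+$-action, hence with $K(1,n)_+$, and satisfies $d(I_+^l)\subseteq I_+^{l+1}$, so it descends to $\fg$-morphisms $\O_+^l/I_+^l\to\O_+^{l+1}/I_+^{l+1}$ whose restricted duals are the $d^\#$. For the exactness (kernel of each $d^\#$ equal to the image of the next) I would prove that the primal contact complex (\ref{sequences}) is exact — the super analogue of Rumin's theorem — by constructing a contracting homotopy adapted to the splitting of forms into their $\om$-, $d\om$- and horizontal parts, in the spirit of the operator $K$ used in \prref{prop-dual}; since the graded pieces are finite-dimensional, exactness then transfers to the restricted-dual complex as in \prref{prop-dual}(b), and the image and kernel are proper and non-zero in the interior because the cohomology sits only at the two ends. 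Under the identification of part (2) these $d^\#$ realize the embeddings attached to the singular vectors of family (a) in \thref{sing-vect} (with $k=l$). I expect the main obstacle to be exactly this analytic input — producing the correct homotopy for the super-contact complex, and, hand in hand with it, carrying out the character count of part (2); granting these, the universal property and the duality formalism make everything else routine.
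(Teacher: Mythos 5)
Your part (1) is the paper's argument verbatim (harmonic decomposition of $S^l(\CC^n)$ modulo the invariant quadric $d\om$, then dualize), and the first assertion of (3) (that the $d^\#$ are morphisms) is likewise identical. Two points need repair. First, in (2): equal graded characters of $\Ind(\T^l)$ and $(\O_+^l/I_+^l)^\#$ do not by themselves make $\Phi$ an isomorphism — a degree-preserving map between graded spaces with identical finite-dimensional components can still have a kernel. What you actually need (and what your character computation would in effect establish) is the stronger structural fact the paper uses: $\O_+^l/I_+^l\cong\Ga^l\otimes\CC[t,\xi_1,\ldots,\xi_n]$ as a module over functions, i.e.\ the quotient is cofree, so that $(\O_+^l/I_+^l)^\#$ is a free $U(\fg_{<0})$-module generated by $\T^l$ and $\Phi$ is then automatically bijective. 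This is the same computation you propose, but you should phrase the conclusion as cofreeness rather than as a character identity.

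The genuine gap is in (3). You reduce everything to the exactness of the contact complex (\ref{sequences}) and then defer that to "a contracting homotopy adapted to the splitting into $\om$-, $d\om$- and horizontal parts," which you do not construct and which you yourself identify as the main obstacle; as written, part (3) is therefore not proved. The paper shows no new homotopy operator is needed: if $\al\in\O_+^k$ satisfies $d\al\in I_+^{k+1}$, write $d\al=\om\wedge\be+d\om\wedge\ga$ and replace $\al$ by $\al-\om\wedge\ga$ (the same class mod $I_+^k$) to arrange $d\al=\om\wedge\be$; then $d^2\al=0$ gives $d\om\wedge\be=\om\wedge d\be$, hence $d\om\wedge d\al=\pm\,\om\wedge\om\wedge d\be=0$, and since $d\om=-\sum_i(d\xi_i)^2$ is a nonzero polynomial in the \emph{even} variables $d\xi_i$, multiplication by it is injective on $\O_+$, forcing $d\al=0$. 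Ordinary de Rham exactness of $(\O_+^\bullet,d)$ — already available from the homotopy $K$ of Proposition~\ref{prop-dual}(a) — then puts $\al$ in the image of $d$, and dualization proceeds as in Proposition~\ref{prop-dual}(b). Note also that the paper records a failure of exactness of the primal complex at level $1$ which has to be tracked through the dualization; your remark that "the cohomology sits only at the two ends" glosses over this. I recommend replacing your homotopy-construction plan by this two-line reduction via the injectivity of $d\om\wedge(\cdot\,)$.
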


\begin{proof}
(1) Consider $\Ga^l=\pi(S^l(d\xi_i))$, where
$\pi:\O_+^l\longrightarrow \O^l_+/I_+^l$. Observe that
$$
\Ga^l\simeq \CC[d\xi_1,\dots ,d\xi_n]^l\ /\ \CC[d\xi_1,\dots
,d\xi_n]^{l-2}(\sum (d\xi_i)^2),
$$
and it is well known that $\Ga^l$ are irreducible lowest weight
$\cso(n)$-modules with lowest weight vector $(d\xi_1 + i
d\xi_2)^l$ whose weight is $(l;-l,0,\dots ,0)$, see \cite{Knapp}.
Therefore, $\T^l= (\Ga^l)^\#$ are  irreducible highest weight
$\cso(n)$-modules with highest  weight $(-l;l,0,\ldots , 0)$.

(2) By the definition of the restricted dual, it is the sum of the
dual of all the graded components of the initial module. In our
case $\Ga^l$ is the component of minimal degree in $\O_+^l/I_+^l$,
so $\T^l$ becomes the component of maximal degree in
$(\O_+^l/I_+^l)^\#$. This implies that $\frak{g}_{>0}$ acts
trivially on $\T^l$, so the morphism Ind$\ \T^l \to
(\O_+^l/I_+^l)^\#$ is defined. Clearly $\O_+^l/I_+^l$ is
isomorphic to
\begin{displaymath}
\Ga^l\otimes \CC [t,\xi_1, \ldots , \xi_n],
\end{displaymath}
so it is a cofree module. Then the module $(\O_+^l/I_+^l)^\#$ is a
free $\CC[\p_0, \p_1, \ldots ,\p_n]$-module and the morphism
\begin{displaymath}
\Ind(\T^l) \to (\O_+^l/I_+^l)^\#
\end{displaymath}
is therefore an isomorphism.

(3) The first part of this statement follows immediately from the
fact that $d$ commutes with the action of vector fields. It
remains to prove that the kernel of one of them is equal to the
image of the next one.

First, we shall prove the exactness of the sequence
(\ref{sequences}) except for level 1, where we have $\ker\,
d=\Im\, d+\CC \overline{t\,dt}$. Let $\alpha\in \O_+^k$ such that
$d\alpha\in I_+^{k+1}$. Then $d\alpha=\om\wedge\be +
d\om\wedge\ga$, with $\be\in\O^k_+$ and $\ga\in\O^{k-1}_+$.
Observe that $d(\al-\om\wedge\ga)=\om\wedge(\be-d\ga)$, hence, by
replacing $\al$ by another representative, we may assume that
$\ga=0$. Since $0=d^2\al=d(\om\wedge\be)=d\om\wedge\be-\om\wedge
d\be$, then $d\om\wedge d\al=d\om\wedge (\om\wedge \be)=$
$=(sgn)\om\wedge d\om \wedge \be=(sgn)\om\wedge\om \wedge d\be=0$.
Therefore, $d\al\in \textrm{Ker} (d\om\wedge\ \cdot \ )=0$. But
the differential complex $(\O_+^\bullet, d)$ is exact  by
Proposition \ref{prop-dual}(a), proving the exactness of
(\ref{sequences}). By standard arguments, it is easy to see the
exactness of the dual, finishing the proof.
\end{proof}

\begin{corollary}\label{cor:d} The following $K(1,n)_+$-modules are isomorphic
\begin{displaymath}
\O^k_+/I_+^k=(\Ind(\Ga^k))^* .
\end{displaymath}
\end{corollary}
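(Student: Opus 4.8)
The plan is to deduce this directly from \prref{prop:d1}(2) by dualizing, so that essentially no new computation is required. Recall that \prref{prop:d1}(2) gives the isomorphism of $K(1,n)_+$-modules $(\O_+^l/I_+^l)^\# \cong \Ind\big((\Ga^l)^\#\big)$ for every $l\geq 0$; the corollary is precisely what one obtains by applying a dual to both sides. First I would record that both modules occurring here have finite-dimensional homogeneous components with respect to the $\ZZ$-grading of $K(1,n)_+$: in the proof of \prref{prop:d1}(2) we saw that $\O_+^l/I_+^l\cong \Ga^l\otimes\CC[t,\xi_1,\dots,\xi_n]$, while $\Ind\big((\Ga^l)^\#\big)\cong U(\fg_{<0})\otimes(\Ga^l)^\#$ with $\fg_{<0}$ finite-dimensional and $\Ga^l$ finite-dimensional. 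Consequently the restricted dual $\#$ is a contravariant involution on the relevant category, i.e. the natural map $V\to (V^\#)^\#$ is an isomorphism of $K(1,n)_+$-modules.

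Granting this, I would apply $\#$ to \prref{prop:d1}(2) with $l=k$ to obtain
\[
\O_+^k/I_+^k \;\cong\; \big((\O_+^k/I_+^k)^\#\big)^\# \;\cong\; \big(\Ind((\Ga^k)^\#)\big)^\#,
\]
and then identify the right-hand side with $(\Ind(\Ga^k))^*$. This last identification is where the two notions of dual must be reconciled: since $\Ga^k$ is a finite-dimensional $\cso(n)$-module it is reflexive, $(\Ga^k)^{\#\#}\cong\Ga^k$, and $*$ is the dual characterized by the property that $(\Ind G)^{*}$ has restricted dual $\Ind(G^\#)$. Applying $\#$ to this defining property and using involutivity yields $(\Ind G)^{*}\cong\big(\Ind(G^\#)\big)^\#$; taking $G=\Ga^k$ gives $\big(\Ind((\Ga^k)^\#)\big)^\#\cong (\Ind(\Ga^k))^*$, which closes the argument.

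The step requiring the most care is exactly this reconciliation of $*$ with $\#$, namely the assertion that taking the restricted dual interchanges the induced (free) module $\Ind(G)$ with the cofree module $G\otimes\CC[t,\xi_1,\dots,\xi_n]$ built on the \emph{dual} base, so that the $\cso(n)$-module sitting at the top of the dualized module is $\Ga^k$ itself, of weight $(-k;k,0,\dots,0)$, rather than $(\Ga^k)^\#$. Verifying this amounts to checking that $U(\fg_{<0})^\#\cong\CC[t,\xi_1,\dots,\xi_n]$ as graded $\fg_0$-modules and keeping careful track of the super-signs and of the $\p$-grading, so that the resulting vector-space isomorphism is genuinely $K(1,n)_+$-equivariant. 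Note that the exactness statement \prref{prop:d1}(3) is not needed here; only the module identification in part~(2) is used.
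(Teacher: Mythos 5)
Your argument is correct and is precisely the intended one: the paper states this corollary without proof, as an immediate consequence of \prref{prop:d1}(2) obtained by applying the restricted dual $\#$ and using reflexivity of graded modules with finite-dimensional homogeneous components. The reconciliation of the dual $*$ with the restricted dual $\#$ (the paper never defines $*$ in this context, so reading $(\Ind(\Ga^k))^{*}$ as $\bigl(\Ind((\Ga^k)^{\#})\bigr)^{\#}$ is the only consistent choice) is the single point requiring care, and you identify and handle it correctly.
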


Let us now study the $K(1,n)_+$-modules $\O^k_-$. Recall that we
identified (via isomorphism) $\O^k_-$ with $\O^k/\O^k_+$. Let
$\widetilde{\pi} : \O^k\rightarrow \O^k/\O^k_+=\O^k_-$. Observe
that $I_-^k=\widetilde{\pi}(I^k)$ is a $K(1,n)_+$-submodule of
$\O^k_-$, and $d(I^k_-)\subseteq I^{k+1}_-$. Let
\begin{displaymath}
\xi_* = \xi_1\cdots \xi_n, \qquad \hbox{ and } \qquad \Ga^k_- =
t^{-1}\xi_* \O^k_c \subset \O^k_-.
\end{displaymath}
\vskip .3cm

\begin{proposition} \label{prop:d2} For $\fg=K(1,n)_+$, we have:
\vskip .3cm

\noindent(1) The $\cso(n)$-module $\Ga^k_-$ is an irreducible
submodule of $\O^k_-$ with highest  weight
\begin{equation*}
(n+k-2;k,0,\ldots , 0),  \  \hbox{ for } k\geq 0,
\end{equation*}
and $\fg_{>0}$ acts trivially on $\Ga^k_-$.

\vskip .3cm

\noindent(2) There is a $\fg$-module isomorphism $\O^k_-/I^k_- =
\Ind(\Ga^k_-)$.

\vskip .3cm

\noindent(3) The differential $d$ gives us  $\fg$-module morphisms
on $\O^k_-/I^k_-$, and the kernel and image of $d$ are
$\fg$-submodules in $\O^k_-/I^k_-$.

\vskip .3cm

\noindent(4) The kernel of $d$ and image of $d$ in $\O^k_-/I_-^k$
for $k\geq 2$ coincide,  in $\O^1_-/I^1_-$ we have {\rm Ker}$\
d=\CC(\overline{{t^{-1} dt}}) + ${\rm Im}$\ d$, and in $\O^0_-$,
we have {\rm Ker}$\ d=0$.

%
\vskip .3cm
\end{proposition}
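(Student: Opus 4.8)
The plan is to mirror the proof of \prref{prop:d1}, replacing the contact complex on $\O_+$ by its negative counterpart on $\O_-$, and to reduce the homological content of~(4) to the de~Rham complex on $\O_-$. Throughout I use that the $\fg$-action on $\O_-\cong\O/\O_+$ is computed by applying the vector field and discarding nonnegative powers of $t$.

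For (1), I would exhibit the explicit highest weight vector $v=t^{-1}\xi_*\,(d\xi_1+i\,d\xi_2)^k\in\Ga^k_-$. Since $2t\p_t+\sum_i\xi_i\p_i=E_{00}$ and its Lie derivative scale $t^{-1}$, $\xi_*$ and each $d\xi_j$ by $-2$, $n$ and $1$ respectively, while $H_1=iF_{12}$ assigns weight $1$ to $(d\xi_1+i\,d\xi_2)$ and the remaining $H_j$ act by $0$, a direct computation gives the weight $(n+k-2;k,0,\dots,0)$. I would then check that $v$ is annihilated by the $\so(n)$-Borel of~(\ref{eq:borel}),(\ref{eq:borel2}) and, for the triviality of $\fg_{>0}$, that every contact field of positive degree raises either the least negative power $t^{-1}$ or the top monomial $\xi_*$ out of $\O_-$, landing in $\O_+$ and so being discarded. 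Irreducibility is the classical statement that the traceless constant forms carry the irreducible $\cso(n)$-module of highest weight $(n+k-2;k,0,\dots,0)$; the trace direction $\sum_i(d\xi_i)^2=-d\om$ is exactly the one controlled by the contact structure, in parallel with the harmonic reduction $\CC[d\xi_i]^l/\CC[d\xi_i]^{l-2}(\sum(d\xi_i)^2)$ of \prref{prop:d1}(1).

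For (2), following \prref{prop:d1}(2), I would note that $\Ga^k_-$ sits at the top $t$-level $t^{-1}$ of $\O^k_-/I^k_-$, so $\fg_{>0}$ kills it and the canonical morphism $\Ind(\Ga^k_-)\to\O^k_-/I^k_-$ is defined. Since $\fg_{<0}$ is spanned by $\one$ and the $\xi_i$, which act to leading order as $\p_t$ and $\p_i$, applying $U(\fg_{<0})$ to the $t^{-1}$-level freely produces all lower levels $t^{-2},t^{-3},\dots$ and all submonomials of $\xi_*$; identifying $\O^k_-/I^k_-\cong\Ga^k_-\otimes\CC[t^{-1},\xi_1,\dots,\xi_n]$ as a free $\CC[\p_0,\dots,\p_n]$-module and matching ranks by PBW shows the morphism is an isomorphism. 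Part (3) is immediate: $d$ (super)commutes with the $W(1,n)_+$-action and satisfies $d(I^k_-)\subseteq I^{k+1}_-$, so it descends to $\fg$-module maps on the quotients, whose kernels and images are automatically submodules.

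The main work, and the main obstacle, is (4). First I would compute the cohomology of $(\O^\bullet_-,d)$: the $\xi,d\xi$-part is a Koszul complex contracted by the homotopy $\xi_i\mapsto d\xi_i$ (as in \prref{prop-dual}), hence acyclic with $H^0=\CC$, while the $t$-part $t^{-1}\CC[t^{-1}]\otimes\Lambda[dt]$ has $d(t^{-m})=-m\,t^{-m-1}dt$, giving $H^0=0$ and $H^1=\CC\,\overline{t^{-1}dt}$ (the class $t^{-1}dt$ being non-exact for lack of a logarithm). The K\"unneth-type conclusion is $H^0(\O_-)=0$, $H^1(\O_-)=\CC\,\overline{t^{-1}dt}$, and $H^{\geq2}(\O_-)=0$. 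I would then lift this to the contact complex exactly as in \prref{prop:d1}(3): for $\al\in\O^k_-$ with $d\al\in I^{k+1}_-$ write $d\al=\om\wedge\be+d\om\wedge\ga$, replace $\al$ by $\al-\om\wedge\ga$ to assume $\ga=0$, and use $d^2=0$ together with $\om\wedge\om=0$ to force $d\om\wedge d\al=0$, whence $d\al\in\ker(d\om\wedge\,\cdot\,)=0$ and the de~Rham exactness applies. The delicate points I expect are the injectivity of $d\om\wedge(\,\cdot\,)$ on the relevant range and the bookkeeping of the low-degree anomalies: the cohomology computation pins the failure of exactness to $k=1$ (contributing $\CC\,\overline{t^{-1}dt}$) and to $\O^0_-$ (where $\ker d=0$), yielding precisely the three cases of the statement.
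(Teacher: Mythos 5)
Your proposal follows essentially the same route as the paper's proof: the explicit weight computation via (\ref{accc}) for (1), the freeness-over-$\CC[\p_0,\ldots,\p_n]$ argument for (2), commutation of $d$ with the vector fields for (3), and for (4) the same reduction $d\al=\om\wedge\be+d\om\wedge\ga$, replace $\al$ to force $\ga=0$, deduce $d\om\wedge d\al=0$ and hence $d\al=0$, then invoke the cohomology of $(\O_-^\bullet,d)$ (which the paper quotes from \cite{BKLR}, Prop.~4.3, and you recompute by a K\"unneth argument). The only slip is a sign convention: with $H_1=iF_{12}$ and $F_{12}\mapsto\xi_1\p_2-\xi_2\p_1$ acting by Lie derivative, the form $d\xi_1+i\,d\xi_2$ has $H_1$-weight $-1$, so the highest weight vector of $\Ga^k_-$ is $t^{-1}\xi_*(d\xi_1-i\,d\xi_2)^k$, as in the paper.
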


\begin{proof}
(1) First, a simple computation shows that  $\fg_{>0}$
maps $\Ga^k_-$ to zero. Also, as a $\fg_0$-module, $\Ga^k_-$ is
isomorphic to the space of harmonic polinomials in $d\xi_1,\dots
,d\xi_n$ of degree $k$ multiplied by the 1-dimensional module
$\langle t^{-1}\xi_*\rangle$. This permits us to see that its
highest weight vectors are
\begin{eqnarray*}
\langle t^{-1}\xi_*\rangle \qquad\qquad \ \ \hbox{ for }k=0,
        \\
\langle t^{-1}\xi_* (d\xi_1- i d\xi_2)^k\rangle \qquad  \hbox{ for
}k\geq 1.
\end{eqnarray*}
The values of the highest weights are easy  to compute using
(\ref{accc}).

(2) It is straightforward to see that $\O^0_-$ is a free rank 1
$\CC[\p_0,\p_1, \ldots , \p_n]$-module. Now, the action of
$\p_0,\p_1, \ldots , \p_n$ on $\O^k_-/I_-^k$ is coefficientwise,
hence the fact that  $\O^k_-/I_-^k$ is a free $\CC[\p_0,\p_1, \ldots
, \p_n]$-module follows. This gives us the isomorphism
$\O^k_-/I_-^k=\Ind(\Ga^k_-)$.

(3) It follows immediately from the fact that $d$ commutes with
the action of vector fields.

(4) Let $\alpha\in\O^k_-$ be such that $d\alpha\in I^{k+1}_-$.
Then $d\alpha=\om\wedge\be +
d\om\wedge\ga$, with $\be\in\O^k_-$ and $\ga\in\O^{k-1}_-$.
Observe that $d(\al-\om\wedge\ga)=\om\wedge(\be-d\ga)$, hence, by
replacing $\al$ by another representative, we may assume that
$\ga=0$. Since $0=d^2\al=d(\om\wedge\be)=d\om\wedge\be-\om\wedge
d\be$, then $d\om\wedge d\al=d\om\wedge (\om\wedge
\be)=\om\wedge\om \wedge d\be=0$. Therefore, $d\al\in \textrm{Ker}
(d\om\wedge\ \cdot \ )=0$. But the differential complex
$(\O_-^\bullet, d)$ is exact except for $k=1$ (see Proposition 4.3
in \cite{BKLR}), proving the statement.
\end{proof}

\

In the last part of this section, we classify the irreducible
induced $K(1,n)_+$-modules. Let  $\fg=K(1,n)_+$. Now, we have the
following:

\begin{theorem} \label{th:r1} Let $F_\mu$ be an irreducible $\fg_0$-module
with highest weight $\mu$.

If $n\geq 4$, then the $\fg$-module $\Ind( F_\mu)$ is an
irreducible (finite conformal) module except for the following
cases:

\vskip .3cm

(a) $\mu=(-l; l,0,\ldots ,0), l\geq 0$, $\Ind(F_\mu)=
(\O^l_+/I^l_+)^\#$, and $d^\#(\O^{l+1}_+/I^{l+1}_+)^\#$ is the
only non-trivial proper submodule.

\vskip .3cm

(b) $\mu=(n+k-2;k ,0,\ldots ,0), k\geq 1$, and $\Ind(F_\mu)=
\O^k_-/I^k_-$. For $k\geq 2$ the image $d \O^{k-1}_-/I^{k-1}_-$ is
the only non-trivial proper submodule. For $k=1$, both {\rm
Im}$(d)$ and {\rm Ker}$(d)$ are proper submodules, and {\rm
Ker}$(d)$ is a maximal submodule.







\end{theorem}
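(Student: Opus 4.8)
The plan is to use \thref{th:1}(c) as the bridge between the singular-vector classification in \thref{sing-vect} and the geometric models of \prref{prop:d1} and \prref{prop:d2}. First I would settle the dichotomy: by \thref{th:1}(c), $\Ind(F_\mu)$ is irreducible if and only if it carries no non-trivial singular vector, and by \thref{sing-vect} (with $n\ge 4$) this fails exactly for the two weight families (a) and (b). Hence $\Ind(F_\mu)$ is irreducible for every $\mu$ outside these lists, which is the ``except for'' clause.

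Next I would match the two exceptional families with the geometric modules. In case (a), \prref{prop:d1}(1)--(2) gives $(\O^l_+/I^l_+)^\#=\Ind(\T^l)$ with $\T^l$ of highest weight $(-l;l,0,\ldots,0)$; since a finite-dimensional irreducible $\fg_0$-module is determined by its highest weight, $F_\mu\cong\T^l$ and therefore $\Ind(F_\mu)=(\O^l_+/I^l_+)^\#$. In case (b), \prref{prop:d2}(1)--(2) gives $\O^k_-/I^k_-=\Ind(\Ga^k_-)$ with $\Ga^k_-$ of highest weight $(n+k-2;k,0,\ldots,0)$, so $F_\mu\cong\Ga^k_-$ and $\Ind(F_\mu)=\O^k_-/I^k_-$. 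The existence of a non-trivial proper submodule, together with the identity $\ker d=\Im d$ (respectively $\ker d^\#=\Im d^\#$), is then quoted from \prref{prop:d2}(3)--(4) and \prref{prop:d1}(3).

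The substance lies in the word \emph{only}. Here I would invoke \thref{th:1}(a): $\Ind(F_\mu)$ has a unique maximal submodule $M$. Because $F_\mu$ maps isomorphically onto the top of the irreducible quotient $\Ir(F_\mu)$, we have $M\cap F_\mu=0$, so $M$ lies in the augmentation part and every singular vector of $M$ is non-trivial. For the weights in (a), and in (b) with $k\ge 2$, \thref{sing-vect} produces exactly one non-trivial singular highest-weight vector up to scalar (one checks that these weights lie in no other family), so $M$ is generated by that vector and, having no further singular vectors, is irreducible. The submodule lattice is therefore the chain $0\subset M\subset\Ind(F_\mu)$, and by the exactness in \prref{prop:d1} and \prref{prop:d2} the submodule $M$ is precisely $d^\#(\O^{l+1}_+/I^{l+1}_+)^\#$ in case (a) and $\Im d=d(\O^{k-1}_-/I^{k-1}_-)$ in case (b).

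I expect the main obstacle to be the exceptional value $k=1$ in case (b), where \prref{prop:d2}(4) gives $\ker d=\CC\,\overline{t^{-1}dt}+\Im d\supsetneq\Im d$, so the lattice is no longer a two-step chain. I would treat this case directly: the quotient $\Ind(\Ga^1_-)/\ker d\cong d(\O^1_-/I^1_-)\subset\O^2_-/I^2_-$ is irreducible by the previous paragraph, so $\ker d$ is a maximal submodule, while the one-dimensional subquotient $\ker d/\Im d\cong\CC\,\overline{t^{-1}dt}$ furnishes the extra composition factor; this produces exactly the two proper submodules $\Im d\subset\ker d$ with $\ker d$ maximal. The boundary value $l=0$ of case (a), where the contact complex (\ref{sequences}) ends in the trivial module $\CC$, is handled in the same spirit: there $M=d^\#(\O^1_+/I^1_+)^\#\cong\Ir(\T^1)$ is irreducible with trivial quotient, so $\Ind(\T^0)$ again has a unique proper submodule, read off from \prref{prop:d1}(3) rather than from a singular-vector count.
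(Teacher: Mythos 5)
Your overall strategy coincides with the paper's: use \thref{th:1} together with \thref{sing-vect} to isolate the two exceptional weight families, identify $\Ind(F_\mu)$ with $(\O^l_+/I^l_+)^\#$ and $\O^k_-/I^k_-$ via \prref{prop:d1} and \prref{prop:d2}, and then exploit the uniqueness of the non-trivial singular highest weight vector. The difficulty is concentrated in one step, and that is exactly where your argument has a gap: the assertion that the unique maximal submodule $M$ ``is generated by that vector.'' What the uniqueness of the non-trivial singular highest weight vector $\vec{m}$ actually gives you is that \emph{every} nonzero submodule contains $\vec{m}$ (any nonzero submodule contains a singular vector, it cannot meet $F_\mu$ since $F_\mu$ generates $\Ind(F_\mu)$, and up to the $\fg_0$-action there is only one candidate). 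Hence $\langle\vec{m}\rangle$ is the unique \emph{minimal} nonzero submodule, and it is irreducible. This does not yet yield $M=\langle\vec{m}\rangle$: a priori the composition series could be longer, because a singular vector of the quotient $\Ind(F_\mu)/\langle\vec{m}\rangle$ lifts only to a vector $v$ with $\fg_{>0}v\subseteq\langle\vec{m}\rangle$, which is invisible to the classification of singular vectors of $\Ind(F_\mu)$ itself. Deducing ``the only non-trivial proper submodule'' from ``unique minimal submodule'' is precisely the point the paper flags as ``the subtle thing.''

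The paper closes this by a move you in fact already make for $k=1$ but omit in the general case: by exactness of the contact complex, the quotient $\Ind(F_\mu)/\ker$ is isomorphic to the image of $d$ (resp.\ $d^\#$) inside the \emph{next} induced module of the complex; since $d$ commutes with the $\fg$-action and kills $\fg_{>0}$-invariants into $\fg_{>0}$-invariants, that image is generated by the image of the top $\fg_0$-module, i.e.\ by the unique singular highest weight vector of the next module, and is therefore irreducible by the same minimality argument applied there. This shows $\ker$ is maximal, and combined with $\ker=\Im=\langle\vec{m}\rangle$ being minimal it is the only non-trivial proper submodule. Once you insert this step (uniformly, not only at $k=1$), your treatment of the boundary cases $k=1$ and $l=0$ is sound and in fact slightly more careful than the paper's, which does not explicitly address $l=0$ (where \thref{sing-vect} as stated supplies no singular vector for the trivial weight, so the reducibility there must be read off from the complex, as you do).
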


\vskip .3cm

\begin{proof}
We know from Theorem~{\ref{th:1}} that in order for the $\fg$-module
$\Ind(F)$ to be reducible it has to have
non-trivial singular vectors and the possible highest weights of
$F$ in this situation are listed in Theorem~{\ref{sing-vect}}
above.

The fact that the induced modules are actually reducible in those
cases is known because we have got nice realizations for these
induced modules in Propositions~{\ref{prop:d1}} and
~{\ref{prop:d2}} together with morphisms defined by $d, d^\#$, so
kernels and images of these morphisms become submodules.

The subtle thing is to prove that a submodule is really a maximal
one. We notice that in each case the factor is isomorphic to a
submodule in another induced module so it is enough to show that
the submodule is irreducible. This can be proved as follows, a
submodule in the induced module is irreducible if it is generated
by any highest singular vector that it contains. We see from our
list of non-trivial singular vectors that there is at most one
such a vector for each case and the images and kernels in question
are exactly generated by those vectors, hence they are
irreducible.
\end{proof}

\begin{corollary}\label{cor:r2} The theorem gives us a description of
finite conformal irreducible $K(1,n)_+$-modules for $n\geq 4$.
Such a module is either $\Ind(F)$ for an irreducible
finite-dimensional $\fg_0$-module $F$, where the highest weight of
$F$ does not belong to the types listed in (a), (b) of the
theorem, or the factor of an induced module from (a), (b)  by its
submodule $Ker(d)$.
\end{corollary}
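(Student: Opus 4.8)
The plan is to read off the corollary from the two structural results already established, namely the general bijection of \thref{th:1} and the explicit reducibility analysis of \thref{th:r1}; essentially no new computation should be required, only a bookkeeping matching of descriptions.

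First I would invoke \thref{th:1}(b): every irreducible finite conformal $\fg$-module is isomorphic to $\Ir(F)$ for a unique irreducible finite-dimensional $\fg_0$-module $F$, and conversely each such $F$ produces one. Since $\Ir(F)=\Ind(F)/M$, where $M$ is the unique maximal submodule supplied by \thref{th:1}(a), the whole classification reduces to pinning down $M$ in terms of the highest weight $\mu$ of $F$. Next I would dispose of the generic case: by \thref{th:r1}, if $\mu$ is not among the weights listed in (a), (b), then $\Ind(F)$ is already irreducible, so $M=0$ and $\Ir(F)=\Ind(F)$. This is exactly the first alternative in the statement.

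It then remains to treat the reducible cases, $\mu$ in (a) or (b), where I would identify $M$ with the kernel of the relevant differential. In case (a) one has $\Ind(F)=(\O^l_+/I^l_+)^\#$ with $M=d^\#(\O^{l+1}_+/I^{l+1}_+)^\#$, and by the exactness of the dual contact complex (\prref{prop:d1}(3)) this incoming image equals the kernel of the outgoing $d^\#$, so $M={\rm Ker}(d^\#)$. In case (b) one has $\Ind(F)=\O^k_-/I^k_-$, and for $k\geq 2$ \prref{prop:d2}(4) gives ${\rm Ker}(d)={\rm Im}(d)=M$, while for $k=1$ the maximal submodule is ${\rm Ker}(d)$ outright by \thref{th:r1}(b). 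Thus in every reducible case $\Ir(F)=\Ind(F)/{\rm Ker}(d)$, which is the second alternative, completing the description.

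I expect the only real subtlety to be this last matching: \thref{th:r1} records the maximal submodule as an \emph{image} of a differential, whereas the corollary phrases it as a \emph{kernel}. Bridging the two is precisely the exactness (kernel equals image) provided by \prref{prop:d1}(3) and \prref{prop:d2}(4); the one point demanding separate care is the edge case $k=1$ in (b), where the kernel strictly contains the image by the extra one-dimensional piece $\CC\,\overline{t^{-1}dt}$, so that there the maximal submodule is the kernel rather than the image.
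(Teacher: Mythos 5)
Your argument is correct and is exactly the justification the paper leaves implicit: the corollary is stated without a separate proof, being read off from Theorem~\ref{th:1}(b) together with Theorem~\ref{th:r1}, and the only content to supply is the translation of the maximal submodule from an image of $d$ (or $d^\#$) into a kernel via the exactness statements of Propositions~\ref{prop:d1}(3) and~\ref{prop:d2}(4). You correctly flag the $k=1$ edge case in (b), where the kernel strictly contains the image and the theorem already names ${\rm Ker}(d)$ as the maximal submodule.
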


%
%

\section{Finite irreducible $K_n$-modules}\label{sec:w5}

In the first part of this section, we follows Section E in
\cite{BKLR}.  In order to give an explicit construction and
classification of all finite irreducible $K_n$-modules, we need
the following definitions. Recall that $W(1,n)$ acts by
derivations on the algebra of differential forms $\O=\O(1,n)$, and
note that this is a conformal module by taking the family of
formal distributions
\begin{displaymath}
E=\{\delta(z-t)\omega \hbox{ and }\delta(z-t)\omega \ dt \ | \
\omega\in \O(n)\}
\end{displaymath}
Translating this and all other attributes of differential forms,
like de Rham differential, etc. into the conformal algebra
language, we have  the following definitions.

Recall that given an algebra $A$, the associated current formal
distribution algebra is $A[t,t^{-1}]$ with the local family
$F=\{a(z)=\sum_{n\in \ZZ} (a t^n) z^{-n-1} = a \delta(z-t)\}_{a\in
A}$. The associated conformal algebra is Cur$A=\CC [\p]\otimes A$
with multiplication defined by $a_\la b=ab$ for $a,b\in A$ and
extended using sesquilinearity. This is called the {\it current
conformal algebra}, see \cite{K1} for details.

The conformal algebra of differential forms $\O_n$ is the current
algebra over the commutative associative superalgebra $\O(n)
+\O(n)\ dt$ with the obvious multiplication and parity, subject to
the relation $(dt)^2=0$:
\begin{displaymath}
\O_n=\hbox{Cur}(\O(n) +\O(n)\ dt)=\CC [\p] \otimes (\O(n) +\O(n)\
dt).
\end{displaymath}
The de Rham differential $\tilde d$ of $\O_n$ (we use the tilde in
order to distinguish it from the de Rham differential $d$ on
$\O(n)$) is a derivation of the conformal algebra $\O_n$ such
that:
\begin{equation} \label{eq:d1}
\tilde d(\omega_1 +\omega_2 dt)=d\omega_1 + d\omega_2 dt -
(-1)^{p(\omega_1)} \p (\omega_1 dt).
\end{equation}
here and further $\omega_i\in \O(n)$.

The standard $\ZZ_+$-gradation $\O(n)=\oplus_{j\in \ZZ_+} \O(n)^j$
of the superalgebra of differential forms by their degree induces
a $\ZZ_+$-gradation
\begin{displaymath}
\O_n=\oplus_{j\in \ZZ_+} \O_n^j, \qquad \hbox{ where }
\O_n^j=\CC[\p]\otimes (\O(n)^j +\O(n)^{j-1}\ dt),
\end{displaymath}
so that $\tilde d : \O^j_n\to \O_n^{j+1}$.

Let $\omega=dt - \sum_{i=1}^n \xi_i d\xi_i\in\O^1_n$. Observe that
$\tilde{d}\om=-\sum_{i=1}^n (d\xi_i)^2$. Now, we define, for
$j\geq 2$,
\begin{align}\label{I^j}
   I^j_n&=\CC[\p]\otimes \big(\om \wedge\O(n)^{j-1}\  +\
   d\om \wedge\O(n)^{j-2}\ dt\big)
  \ \subset \
   \O^j_n,\\
I^1_n &= \CC[\p]\otimes (\om\wedge\O(n)^0),\qquad I^0=0.\nonumber
\end{align}
It is clear that $\tilde{d}(I^j_n)\subseteq I^{j+1}_n$, and it is
easy to prove that $I^j_n$ are $K_n$-submodules of $\O^j_n$.
Therefore, we get a Rumin conformal complex $(\O^j_n/I^j_n\, ,\,
\tilde d)$, where we also denote by $\tilde d$ the differential in
the quotient.

Let $V$ be a finite dimensional irreducible $\cso(n)$-module,
using the results of Section \ref{sec:formal} and recalling that
the annihilation algebra of $K_n$ is $K(1,n)_+$, we have that the
$K(1,n)_+$-modules Ind$\,(V)$ studied in the previous section are
$K_n$-modules with the $\la$-action given by Theorem
\ref{th:action-dual}. We denote by Tens$\,(V)$ the corresponding
$K_n$-module.

Since the extended annihilation algebra $K(1,n)^+$ is a direct sum
of $K(1,n)_+$ and a 1-dimensional Lie algebra $\CC a$, any
irreducible $K(1,n)^+$-module is obtained from a $K(1,n)_+$-module
$M$ by extending to $K(1,n)^+$, letting $a\mapsto -\alpha$, where
$\alpha\in\CC$. Translating into the conformal language (see
Proposition~{\ref{prop:1}}), we see that all $K_n$-modules are
obtained from conformal $K(1,n)_+$-modules by taking for the
action of $\p$ the action of $-\p_t +\alpha I, \alpha\in\CC$. We
denote by Tens$_\alpha V$ and $\O_{k,\alpha}, \alpha\in \CC$, the
$K_n$-modules obtained from Tens$V$ and $\O_k$ by replacing $\p$
by $\p +\alpha$ in the corresponding actions.

As in \cite{BKLR},  we see that Theorem~{\ref{th:r1}} and
Corollary~{\ref{cor:r2}}, along with Section~{\ref{sec:formal}}
and Propositions~{\ref{prop:1}}, together with Propositions 2.6,
2.8 and 2.9 in \cite{BKLR}, give us a complete description of
finite irreducible $K_n$-modules, namely we obtain the following theorem.

\

\begin{theorem} \label{th:k}
  The following is a complete list of non-trivial finite
irreducible $K_n$-modules $(n\geq 4, \alpha\in \CC )$:
%
\begin{enumerate}
\item 
{\rm Tens}$_\alpha V$, where $V$ is a finite-dimensional
irreducible $\cso(n)$-module with highest weight different from
$(-k;k,0,\dots ,0)$ and $(n+k-2;k,0,\dots ,0)$ for $k=1,2, \ldots
$,
\item 
$\Big(\O^k_{n}/I_n^k\Big)_\alpha^*\Big/\hbox{\rm Ker }\tilde d^*,
k=1,2,\ldots$ , and the same modules with reversed parity,
\item 
$K_n$-modules dual to $(2)$, with $k>1$.
\end{enumerate}
%
%
%

\end{theorem}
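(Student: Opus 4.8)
The plan is to assemble Theorem~\ref{th:k} from the machinery already in place, treating it as a translation and packaging of the classification of irreducible $K(1,n)_+$-modules into the conformal language. The key observation, recorded just before the statement, is that the extended annihilation algebra $K(1,n)^+$ splits as $K(1,n)_+ \oplus \CC a$, so every irreducible $K_n$-module arises from an irreducible finite conformal $K(1,n)_+$-module by letting $\p$ act as $-\p_t + \alpha I$ for some $\alpha \in \CC$; this is exactly \prref{prop:1} together with \thref{th:1}(b). Thus the classification of finite irreducible $K_n$-modules is in bijection with pairs consisting of an irreducible finite conformal $K(1,n)_+$-module and a scalar $\alpha$. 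I would first invoke \coref{cor:r2}, which says that every irreducible finite conformal $K(1,n)_+$-module is either $\Ind(F)$ with the highest weight of $F$ avoiding the two exceptional families, or the quotient of an induced module from the two exceptional families by its submodule $\mathrm{Ker}(d)$.

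Next I would match each item in \coref{cor:r2} to an item in the theorem. The generic case---highest weight different from $(-k;k,0,\dots,0)$ and $(n+k-2;k,0,\dots,0)$---gives $\Ind(F)$, which after twisting by $\alpha$ is precisely $\mathrm{Tens}_\alpha V$ of item~(1). For the exceptional families I would use the explicit realizations from \prref{prop:d1} and \prref{prop:d2}: the family $(-l;l,0,\dots,0)$ realizes $\Ind(F_\mu) = (\O_+^l/I_+^l)^\#$, and the family $(n+k-2;k,0,\dots,0)$ realizes $\Ind(F_\mu) = \O_-^k/I_-^k$, with the relevant maximal submodules given by images and kernels of $d$ and $d^\#$ as catalogued in \thref{th:r1}. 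The irreducible quotients---cokernels in the contact complex---are then identified with the conformal-algebra quotients $\big(\O_n^k/I_n^k\big)^*\big/\mathrm{Ker}\,\tilde d^*$ of item~(2), after reverting from the annihilation-algebra side $\O_\pm$ to the conformal side $\O_n$; here is where I would cite Propositions~2.6, 2.8 and 2.9 of \cite{BKLR}, which provide the dictionary between the two descriptions of differential forms and their duality. Item~(3), the duals with $k>1$, comes from the second exceptional family via the duality relation between the two contact-complex realizations.

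The main obstacle I expect is bookkeeping of the duality and parity, not any genuinely new idea. Concretely, the delicate points are: (i) confirming that the twist $\p \mapsto \p + \alpha$ interacts correctly with the module structures so that no irreducible module is counted twice and none is omitted, which rests on the direct-sum splitting of $K(1,n)^+$; (ii) verifying that the cokernel $\big(\O_n^k/I_n^k\big)_\alpha^*\big/\mathrm{Ker}\,\tilde d^*$ in the conformal language coincides, as a $K_n$-module, with the irreducible quotient produced on the annihilation-algebra side, which requires carefully tracking the relation between $\tilde d$ on $\O_n$ (defined in \eqref{eq:d1}) and $d$ on $\O_\pm$, together with the restricted-dual conventions; and (iii) the parity statement ``and the same modules with reversed parity'' in item~(2), reflecting the $\ZZ/2\ZZ$-grading. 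I would handle (i) by appealing directly to the paragraph preceding the theorem and to \prref{prop:1}; (ii) and (iii) by invoking the cited results of \cite{BKLR} rather than recomputing, since the present paper's Sections~\ref{forms} and~\ref{sec:w5} were built precisely to make these identifications routine. The completeness of the list then follows because \thref{th:r1} and \coref{cor:r2} exhaust the irreducible $K(1,n)_+$-modules, and the twist parameter $\alpha$ exhausts the extension to $K(1,n)^+$.
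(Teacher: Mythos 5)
Your proposal follows essentially the same route as the paper, which likewise proves Theorem~\ref{th:k} by assembling Proposition~\ref{prop:1}, Theorem~\ref{th:1}, Theorem~\ref{th:r1} and Corollary~\ref{cor:r2} with the realizations of Propositions~\ref{prop:d1} and~\ref{prop:d2}, and by citing Propositions~2.6, 2.8 and 2.9 of \cite{BKLR} for the duality and parity bookkeeping. The only detail you leave implicit is the exclusion of $k=0$ in item~(2) and $k=1$ in item~(3) (the trivial module and the module isomorphic to a tensor module already listed in item~(1)), which the paper itself only records in the remark following the theorem.
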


\begin{remark} \label{rm:12} (a) Using Proposition~\ref{prop:d2},
we have that the kernel of $\tilde d$ and the image of $\tilde d$
coincide in $\O_n^k/I_n^k$ for $k\geq 2$. Now, since
$\O^{k+2}_n/I_n^{k+2}$ is a free $\CC[\p]$-module of finite rank
and $(\O^{k+1}_n/I_n^{k+1})/{\rm Im }\,\tilde
d=(\O^{k+1}_n/I_n^{k+1})/{\rm Ker }\,\tilde d\simeq {\rm Im
}\,\tilde d\subset \O^{k+2}_n/I_n^{k+2}$, we obtain that
$(\O^{k+1}_n/I_n^{k+1})/{\rm Im }\,\tilde d$ is a finitely
generated free $\CC[\p]$-module. Therefore, we can apply
 Proposition 2.6 in \cite{BKLR}, and  we have that
\begin{equation} \label{eq:4.88}
\Big(\O^{k+1}_n/I_n^{k+1}\Big)^*/\hbox{\rm Ker }\tilde d^*\simeq
\Bigl((\O^{k}_n/I_n^{k})/\hbox{\rm Ker }\tilde d\Bigr)^*
\end{equation} for $k\geq 1$.

(b) Since for a free finite rank module $M$ over a Lie conformal
superalgebra we have $M^{**}=M$, using (\ref{eq:4.88}), the
$K_n$-modules in case (3) of Theorem~\ref{th:k} are isomorphic to
$(\O^{k}_n/I_n^{k})_\alpha/\hbox{\rm Ker }\tilde d$, $k=1,2,...$.

(c) Let $V$ be a finite-dimensional (one dimesional in fact)
irreducible $\cso(n)$-module with highest weight  $(0;0,\dots
,0)$.  Observe that the module Tens $V$ has a maximal submodule of
codimension 1 over $\CC$. Hence, the irreducible quotient is the
one dimensional (over $\CC$) trivial $K_n$ -module. Therefore, we
excluded the case $k=0$ in Theorem~{\ref{th:k}}.a.2.

(d)  Let  $V$ be a finite-dimensional  irreducible
$\cso(n)$-module with highest weight  $(n-2;0,0,\dots ,0)$.
Observe that in case (3) in Theorem~\ref{th:k}, we excluded $k=1$,
because in this case the dual corresponds to the module
Tens$_\alpha \ V$, which is isomorphic to $\O_{0,\alpha}$ and it
is an irreducible tensor module, therefore this module is included
in case (1) of Theorem~\ref{th:k}.

(e) The case  $K_2\simeq W_1$ was studied in full detail at the
end of Section E in \cite{BKLR}.

(f) The remaining cases $K_3$, $K_4'$ and $CK_6$ will be worked out in a
subsequent publication.

\end{remark}

\

%
%

\section{Appendix A}\label{sec:AAA}

\

This appendix is devoted to the proof of Theorem \ref{th:action},
and it will be done through several lemmas.

Given $I\subseteq \{1,\dots , n\}$ we shall use the following
notation:
\begin{equation*}
\ep_i=\ep_i^I:= \#\{j\in I \, : \, j<i\}.
\end{equation*}
It is easy to see the following  useful  formulas:
\begin{align}
  \p_I\,\xi_I  & = (-1)^{\frac{|I|(|I|-1)}{2}} , \label{eq:del}\\
  \p_I\, (\xi_J\xi_K) & = (-1)^{|I||J|} \xi_J\, \p_I(\xi_K), \qquad
  \hbox{ if } J\cap I=\emptyset,  \label{eq:del1}\\
  \p_{I-\{i\}}\,\xi_I & = (-1)^{\ep_i +
  \frac{|I|(|I|-1)}{2}} \, \xi_i. \label{eq:del2}
\end{align}

Without loss of generality, we shall assume all over the proofs
that
\begin{equation*}
f=\xi_I, \quad g=\xi_J \xi_K, \quad \hbox{ with } J\cap
I=\emptyset, \hbox{ and } K\subseteq I.
\end{equation*}

\

\begin{lemma} \label{lem:f1} For any $m\geq 3, f,g\in\La(n)$, we have
$t^mf \cdot (g\otimes v)=0$.
\end{lemma}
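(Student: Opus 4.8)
The plan is to exploit the $\ZZ$-grading $\deg(t^m\xi_I)=2m+|I|-2$ of $\fg=K(1,n)_+$ together with \thref{th:1}(a), which guarantees that $\fg_{>0}$ annihilates $F\subset\Ind(F)$. For $m\geq 3$ the element $t^mf$ lies in $\fg_{\geq 4}\subset\fg_{>0}$, and since $g\otimes v$ sits in the graded piece $\Ind(F)_{-|g|}$, the product $t^mf\cdot(g\otimes v)$ lands in $\Ind(F)_{2m+|f|-2-|g|}$. As $\Ind(F)$ is concentrated in non-positive degrees, this already forces the vanishing whenever $|g|<2m+|f|-2$; in particular only finitely many pairs $f,g$ can possibly give a non-zero answer, so the statement reduces to a finite computation.

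The main step is an induction on $m$. For the inductive step $m\geq 4$ I would use the commutator identity, obtained by a direct computation from the bracket of $K(1,n)_+$ (see (\ref{eq:1})),
\[
[\,t^2,\;t^{m-1}\xi_I\,]=2\,(m+|I|-3)\,t^{m}\xi_I ,
\]
whose coefficient is non-zero for $m\geq 4$. Hence $t^{m}\xi_I=\tfrac{1}{2(m+|I|-3)}\,[\,t^2,\,t^{m-1}\xi_I\,]$, and since $t^2$ is even the module axiom $(M2)$ gives
\[
(t^{m}\xi_I)\cdot w=\tfrac{1}{2(m+|I|-3)}\Bigl(t^2\cdot\bigl((t^{m-1}\xi_I)\cdot w\bigr)-(t^{m-1}\xi_I)\cdot(t^2\cdot w)\Bigr)
\]
for every $w\in\Ind(F)$. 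By the induction hypothesis $t^{m-1}\xi_I$ annihilates everything (its $t$-power is $m-1\geq 3$), so both terms on the right vanish and $(t^m\xi_I)\cdot w=0$. This cleanly reduces the whole lemma to the single base case $m=3$.

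For the base case $m=3$ I would argue by a secondary induction on $|g|$, writing $g=\xi_i g'$ and applying $(M2)$ in the form $a\cdot(\xi_i\cdot w)=(-1)^{p(a)}\xi_i\cdot(a\cdot w)+[a,\xi_i]\cdot w$ together with the explicit commutator $[\,t^3\xi_I,\xi_i\,]=-3\,t^2\xi_I\xi_i+(-1)^{|I|}t^3\,\p_i\xi_I$. The pass-through term and the $t^3\p_i\xi_I$-term are killed by the induction on $|g|$, leaving a residue expressed purely through the action of the $t$-power-two element $t^2\xi_I\xi_i$.

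The hard part will be precisely this residue: unlike the $m\geq 4$ step it cannot be reduced to an already-vanishing action, since $t$-powers $\leq 2$ act non-trivially. I would therefore expand $(t^3\xi_I)\cdot(g\otimes v)$ completely into iterated super-brackets against the odd generators of $g$, observe via the grading that the only operators which can reach $v$ non-trivially are those of degree $\leq 0$ — concretely $E_{00}$, the $F_{ab}$, and a bounded number of $\fg_{-1},\fg_{-2}$ contributions — and then verify that these finitely many surviving terms cancel, the cancellation being forced by the $\cso(n)$-module structure of $F$ and the sign bookkeeping of (\ref{eq:del1})–(\ref{eq:del2}). This last cancellation is the genuine obstacle, and it is exactly what the remaining lemmas of the appendix are arranged to organize.
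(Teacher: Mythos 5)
Your reduction of $m\geq 4$ to $m=3$ via $t^{m}\xi_I=\tfrac{1}{2(m+|I|-3)}[\,t^2,t^{m-1}\xi_I\,]$ is correct, and the induction on $|g|$ for $m=3$ is set up properly; but the argument stops exactly where the content of the lemma lies. After the commutator step you are left with the residue $-3\,(t^2\xi_I\xi_i)\cdot(g'\otimes v)$ (with $i\notin I$ and, crucially, $i$ not in the support of $g'$ since $g=\xi_ig'$), and this is a genuinely nontrivial vanishing statement: elements of the form $t^2h$ do act nontrivially on $\Ind(F)$ in general --- they produce the whole $\la^2$-term of Theorem \ref{th:action}. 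You assert that the surviving contributions of degree $\leq 0$ ``cancel'' by the $\cso(n)$-structure of $F$ and defer this to the later lemmas of the appendix; that is not a proof, and moreover it points in the wrong direction, because no cancellation occurs: no operator of non-positive degree ever reaches $v$.

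What your sketch is missing is a refined degree count. The paper expands $t^mf\cdot(g\otimes v)$ directly (formula (\ref{eq:master})): writing $g=\xi_J\xi_K$ with $J\cap I=\emptyset$ and $K\subseteq I$, every term of the expansion ends with an operator $t^{m-i}\p_L(\xi_I\xi_S)$ (where $S\subseteq J$, $|S|=i\leq m$, $L\subseteq K$) applied to $v$, and
\[
\deg\bigl(t^{m-i}\p_L(\xi_I\xi_S)\bigr)=2m-i+|I|-|L|-2\;\geq\; m-2\;\geq\;1 ,
\]
because each unit of $t$-power spent commuting past a $\xi_j$ with $j\notin I$ deposits an unremovable $\xi_j$ into the monomial (so $i\leq m$), while removals are bounded by $|L|\leq |I|$. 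Hence every residual operator lies in $\fg_{>0}$ and kills $v$ by Theorem \ref{th:1}(a): the terms vanish one by one. Your coarse bound (total degree of $t^mf$ minus $|g|$) does allow residual operators of degree $\leq 0$, which is presumably why you were led to expect a cancellation; the refined bound shows there is nothing left to cancel. Note also that this estimate handles all $m\geq 3$ uniformly, so the two-layer induction, while valid as far as it goes, becomes unnecessary once the correct degree estimate is in hand.
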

\begin{proof} Using that
\begin{equation} \label{eq:br}
[t^m \xi_I, \xi_r]=
\left\{%
\begin{array}{ll}
    -m t^{m-1} \xi_I \xi_r,\quad & \hbox{if } r\notin I;\\
    (-1)^I t^m \p_r \xi_I, \quad & \hbox{if $r\in I$.} \\
\end{array}%
\right.
\end{equation}
it is easy to see that
\begin{align} \label{eq:master}
  t^m f &\cdot (g\otimes v)  = t^m \xi_I \cdot (\xi_J \xi_K \otimes v)
    \nonumber\\
  &= \sum_{i=0}^m \sum_{S\subseteq J, \, |S|=i} (sgn)_{i,S} \ \frac{m!}{(m-i)!}
  (\p_S \xi_J) (t^{m-i} \xi_I\xi_S) \xi_K\otimes v \\
  &= \sum_{i=0}^m \sum_{S\subseteq J, \, |S|=i} \sum_{L\subseteq K} \
  (sgn)_{i,S,L} \ \frac{m!}{(m-i)!} (\p_S \xi_J) (\p_L\xi_K)
  (t^{m-i} \p_L(\xi_I\xi_S)) \otimes v,
   \nonumber
\end{align}
for certain signs $(sgn)_{i,S}, (sgn)_{i,S,L}$ that are not needed
explicitly yet. Now, observe that for $|S|=i$ and $L\subseteq
K\subseteq I$ we have
\begin{displaymath}
\deg(t^{m-i} \p_L(\xi_I\xi_S))=
2(m-i)+|I|+|S|-|L|-2=2m-i+|I|-|L|-2\geq m-2.
\end{displaymath}
Hence, using (\ref{eq:master}) and $m\geq 3$ we prove the lemma.
\end{proof}

From Lemma \ref{lem:f1}, the $\la$-action has degree at most 2 in
$\la$. Now, we study the $\la^0$-term.

\begin{lemma} \label{lem:f2}
\begin{equation} \label{eq:00}
\xi_I\cdot (\xi_J\xi_K\tt v)=\sum_{L\subseteq K}
(-1)^{|I|(|J|+|K|)+\frac{|L|(|L|-1)}{2}-|L|(|K|-|L|)} \,
\xi_J(\p_L\xi_K)(\p_L\xi_I)\tt v.
\end{equation}
\end{lemma}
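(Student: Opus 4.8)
The plan is to compute the action of the single element $t^{0}f=\xi_I\in\fg$ on $\xi_J\xi_K\otimes v$ (this is exactly the $\la^{0}$-term) by carrying $\xi_I$ to the right through the odd creation operators making up $\xi_J\xi_K\in U(\fg_{<0})$. Equivalently, one specializes the master formula (\ref{eq:master}) to $m=0$, where only the summand $i=0$, $S=\emptyset$ survives, so that the whole content of the lemma is to make the coefficient $(sgn)_{0,\emptyset,L}$ explicit. Throughout I keep the standing conventions $J\cap I=\emptyset$ and $K\subseteq I$, and I read $\xi_J,\xi_K$ as ordered products of generators of $\fg_{-1}$.

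First I would pass $\xi_I$ across $\xi_J$. Setting $m=0$ in (\ref{eq:br}) kills the first branch, so $[\xi_I,\xi_j]=0$ for every $j\in J$ (using $J\cap I=\emptyset$); thus $\xi_I$ merely supercommutes past the $|J|$ odd factors of $\xi_J$, contributing the global sign $(-1)^{|I||J|}$ and leaving $\xi_J$ on the far left. Next I would push $\xi_I$ through $\xi_K$. Since every $k\in K$ lies in $I$, the second branch of (\ref{eq:br}) is in force: at each factor the element currently being carried either supercommutes past $\xi_k$, which then survives as a creation operator, or is contracted, replacing the carried $\xi_{I'}$ by $\pm\,\p_k\xi_{I'}$. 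Indexing by the set $L\subseteq K$ of contracted positions reproduces exactly the summands $\xi_J\,(\p_L\xi_K)\,(\p_L\xi_I)$, in which $\p_L\xi_K=\pm\,\xi_{K\setminus L}$ gathers the uncontracted factors and $\p_L\xi_I$ is the Lie element carried to the right end (still to be evaluated on $v$, or retained as a creation operator, in the subsequent reduction steps). This fixes the combinatorial shape of (\ref{eq:00}); only the sign remains.

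Determining that sign is where essentially all the work sits, and I expect it to be the main obstacle. I would first benchmark against the extreme case $L=\emptyset$, where $\xi_I$ supercommutes wholesale past both $\xi_J$ and $\xi_K$, producing precisely the prefactor $(-1)^{|I|(|J|+|K|)}$ common to every term of (\ref{eq:00}). For general $L$ I would argue by induction on $|L|$, peeling off one contracted index at a time: each contraction is governed by (\ref{eq:del2}) and carries a position-dependent sign $\ep_k$, while pulling the resulting differentiations into standard order uses (\ref{eq:del1}) together with the identity $\p_L\xi_L=(-1)^{|L|(|L|-1)/2}$. The position signs $\ep_k$ combine with the reordering signs from (\ref{eq:del1}), so that beyond the prefactor the only surviving residue is the symmetric correction $(-1)^{\frac{|L|(|L|-1)}{2}-|L|(|K|-|L|)}$, the cross term $-|L|(|K|-|L|)$ reflecting the Koszul signs relating the orderings of the $|L|$ contracted and $|K|-|L|$ surviving indices of $\xi_K$. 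Assembling the prefactor and this correction yields exactly the sign in (\ref{eq:00}).
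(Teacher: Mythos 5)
Your proposal is correct and follows essentially the same route as the paper: supercommute $\xi_I$ past $\xi_J$ to obtain the prefactor $(-1)^{|I||J|}$ (reducing to $J=\emptyset$), then expand the passage through $\xi_K$ over the subsets $L$ of contracted indices and pin down the sign via (\ref{eq:del1}) and (\ref{eq:del2}). The only cosmetic difference is that the paper organizes the induction on $|K|$ (prepending one factor to $\xi_K$, splitting into a pass term and a contraction term, and merging the two resulting sums over $L$) rather than on $|L|$; the sign bookkeeping is the same either way.
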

\begin{proof} Using (\ref{eq:br}), it is clear that
\begin{equation}
\xi_I\cdot (\xi_J\xi_K\tt v)= (-1)^{|I||J|} \,
\xi_J(\xi_I)(\xi_K)\tt v.\nonumber
\end{equation}
Hence, we may suppose that $J=\emptyset$ and we shall apply
induction on $|K|$. If $|K|=0$ the statement is obvious. Now,
consider $\xi_j \xi_K$, with $j< k_i$ for any $k_i\in K$. Observe
that
\begin{align} \label{eq:01}
\xi_I\cdot (\xi_j\xi_K\tt  & v) = (-1)^{|I|} \xi_j\xi_I\xi_K\tt v
+ (-1)^{|I|} (\p_j \xi_I) \xi_K\tt v \\
 & \ \ \, = \sum_{L\subseteq K} (-1)^{|I|+|I||K|
 +\frac{|L|(|L|-1)}{2}-|L|(|K|-|L|)} \, \xi_j(\p_L\xi_K)(\p_L\xi_I)\tt v
 \nonumber\\
 & + \sum_{L\subseteq K} (-1)^{|I|+(|I|-1)|K|+\frac{|L|(|L|-1)}{2}-|L|(|K|-|L|)}
  \, (\p_L\xi_K)(\p_L\p_j\xi_I)\tt v \nonumber
\end{align}
Now, using that
\begin{align*} \label{eq:bet}
  \p_L(\xi_j\xi_K) & = (-1)^{|L|} \xi_j(\p_L\xi_K),\qquad
  \hbox{ if } j\notin L  \nonumber\\
  \p_j \p_L(\xi_j \xi_K) & = (-1)^{|L|} (\p_L\xi_K), \ \,
  \,\ \qquad \hbox{ if } j\notin L   \\
 \p_L \p_j \xi_I & = (-1)^{|L|} \p_j\p_L\xi_I,  \nonumber
\end{align*}
equation (\ref{eq:01}) becomes
\begin{align*}
 \xi_I\cdot & (\xi_j\xi_K  \tt v) = \\
& = \sum_{L\subseteq K}
(-1)^{|I|(|K|+1)+\frac{|L|(|L|-1)}{2}-|L|(|K|-|L|)+|L|}
\, (\p_L(\xi_j\xi_K))(\p_L\xi_I)\tt v \nonumber\\
 & + \sum_{L\cup\{j\}\subseteq K\cup\{j\}}
 (-1)^{|I|+(|I|-1)|K|+\frac{|L|(|L|-1)}{2}-|L|(|K|-|L|)+|L|+|L|} \,\nonumber\\
 & \hskip 7.2cm (\p_j\p_L)(\xi_j\xi_K)(\p_j\p_L\xi_I)\tt v
 \nonumber\\
   & =   \sum_{L \subseteq K\cup\{j\}}
   (-1)^{|I|(|K|+1)+\frac{|L|(|L|-1)}{2}-|L|(|K|+1-|L|)} \,
(\p_L)(\xi_j\xi_K)(\p_L\xi_I)\tt v \nonumber
\end{align*}
finishing the proof.
\end{proof}

\noindent The following lemma provides the $\la^0$-term in the
$\la$-action formula of Theorem \ref{th:action}.

\vskip .5cm

\begin{lemma} \label{lem:0-1} For any monomials elements $f=\xi_I, g=\xi_L$
 with $I\neq \emptyset$, we have
\begin{align*}
  f \cdot (g\otimes v) &=
 (-1)^{p(f)} (|f|-2) \p (\p_f g)\otimes v + \sum_{i=1}^n \p_{(\p_i f)}
 (\xi_i g)\otimes v  \\
 & \hskip 5.5cm + (-1)^{p(f)} \sum_{i<j} \p_{(\p_i\p_j f)}g \otimes F_{ij}v
\end{align*}
\end{lemma}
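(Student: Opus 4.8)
The plan is to prove Lemma~\ref{lem:0-1} by specializing the general product formula of Lemma~\ref{lem:f2} to the case $g=\xi_L$ (a single monomial, i.e.\ $J\cup K=L$ in the notation of the standing assumption) and then reorganizing the resulting sum according to the cardinality $|L|$ of the subset $L\subseteq K$ over which one sums. The key observation is that in the $\lambda^0$-term, by Lemma~\ref{lem:f2}, the action $f\cdot(g\otimes v)$ produces a sum indexed by subsets $L\subseteq K\subseteq I$, where $L$ records which partial derivatives $\p_L$ are applied to both $\xi_K$ and $\xi_I$. Since $\p_L\xi_I$ raises nothing into the $F_{ij}$ or $E_{00}$ directions on its own, the appearance of the $\cso(n)$-generators $F_{ij}$ and the degree operator $E_{00}$ (hidden in the $\p$-prefactor) must come entirely from the bookkeeping of how the terms with $|L|=0,1,2$ contribute; terms with $|L|\geq 3$ must be shown to vanish or to reassemble into the three displayed summands.

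First I would substitute $g=\xi_L$ into \eqref{eq:00} and split the sum over $L$ (the summation subset, which I will rename to avoid clash with the monomial index) into the three cases $|L|=0$, $|L|=1$, and $|L|=2$, arguing that higher $|L|$ cannot occur because applying three or more derivatives $\p_L$ to the degree-one-in-each-$\xi$ structure, combined with the contact-algebra grading, forces those terms outside the relevant degree or kills them by the Grassmann relations. For $|L|=0$ the term $\xi_J(\xi_K)(\xi_I)\otimes v$ should, after using the sesquilinearity and the identification $\p=-\tfrac12\mathbf 1$ from \eqref{eq:del}, reproduce the $(-1)^{p(f)}(|f|-2)\p(\p_f g)\otimes v$ piece; the factor $|f|-2$ is exactly the contact grading $\deg(\xi_I)=|I|-2$ showing up through the action of $E_{00}=t$ on the lowest slot. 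For $|L|=1$, say $L=\{i\}$, the single derivative $\p_i$ splits into the $\sum_i \p_{(\p_i f)}(\xi_i g)\otimes v$ contribution. For $|L|=2$, say $L=\{i,j\}$, the double derivative $\p_i\p_j$ assembles into $(-1)^{p(f)}\sum_{i<j}\p_{(\p_i\p_j f)}g\otimes F_{ij}v$, with $F_{ij}=-\xi_i\xi_j$ acting on $F$ exactly as the coefficient $\xi_J(\p_L\xi_K)$ demands.

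The main technical labor---and the step I expect to be the principal obstacle---is matching the signs. The exponent in \eqref{eq:00}, namely $|I|(|J|+|K|)+\tfrac{|L|(|L|-1)}2-|L|(|K|-|L|)$, must be reconciled term-by-term with the parities $p(f)=|I|\bmod 2$ and with the signs implicit in the definitions $\p_f g=\p_L\xi_I$ and $F_{ij}=-\xi_i\xi_j$ in \eqref{eq:so}. I would handle this by using the auxiliary sign identities \eqref{eq:del}, \eqref{eq:del1}, \eqref{eq:del2} to rewrite each $\p_L(\xi_J\xi_K)(\p_L\xi_I)$ in the form $\p_{(\p_L f)}(\text{something})$, being careful that the standing assumption $J\cap I=\emptyset$, $K\subseteq I$ lets me treat $g$ as $\xi_J\xi_K$ with $\p_i f$ nonzero only when $i\in I$. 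Each of the three displayed summands then emerges after absorbing a sign that depends only on $|I|$ modulo $2$, which I would verify is precisely $(-1)^{p(f)}$ in the $|L|=0$ and $|L|=2$ cases and trivial in the $|L|=1$ case.

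Finally, I would note that Lemma~\ref{lem:f1} guarantees there is no $\lambda^{\geq 3}$ contribution, so it is legitimate to isolate the $\lambda^0$-term computed here; the $\lambda^1$ and $\lambda^2$ terms of Theorem~\ref{th:action} are to be treated in subsequent lemmas and do not interfere with the present statement. The whole argument is thus a controlled specialization of Lemma~\ref{lem:f2} followed by a sign-tracking reorganization; no new structural input is required beyond the combinatorial identities already established in the appendix.
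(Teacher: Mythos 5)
Your overall strategy---specialize Lemma~\ref{lem:f2}, split the sum in \eqref{eq:00} into finitely many surviving cases according to which graded piece of $\fg$ contains the factor that reaches $v$, and then track signs via \eqref{eq:del}--\eqref{eq:del2}---is indeed the paper's strategy. But the concrete decomposition you propose is by the wrong parameter, and this is not a cosmetic slip. In \eqref{eq:00} the element that acts on $v$ is $\p_L\xi_I$, a monomial of length $|I|-|L|$ and hence of degree $|I-L|-2$ in the contact grading. It annihilates $v$ exactly when $|I-L|>2$, so the surviving terms are those with $|L|\geq |I|-2$, i.e.\ the sum must be organized by $|I-L|=0,1,2$ (equivalently $|L|=|I|,|I|-1,|I|-2$), not by $|L|=0,1,2$ as you state. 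Your justification that ``higher $|L|$ cannot occur'' is backwards: it is the terms with \emph{small} $|L|$ (few derivatives applied, long monomial $\p_L\xi_I$ left over) that die on $v$, and the term you single out for $|L|=0$, namely $\xi_J\xi_K\xi_I\otimes v$, vanishes for $|I|\geq 3$ rather than producing the $\p(\p_f g)$ summand. Likewise the $F_{ij}$-terms come from $L$ with $|I-L|=2$, where $\p_L\xi_I=\pm\xi_i\xi_j=\mp F_{ij}\in\fg_0$, not from $L=\{i,j\}$.

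A second, independent gap: you attribute the coefficient $|f|-2$ in the first summand to a single case (``the contact grading $\deg(\xi_I)=|I|-2$ showing up through the action of $E_{00}$''). In fact $E_{00}$ does not occur in the $\la^0$-term at all, and in the paper's proof the factor $|f|-2$ is assembled from two distinct sources: the case $|I-L|=0$ (i.e.\ $L=K=I$), where $\p_I\xi_I$ is a multiple of $\mathbf 1=-2\p$ and contributes the $-2$, and the subcase $K=I$, $L=I\setminus\{i\}$ of $|I-L|=1$, where the products $\xi_i\xi_i=\tfrac12[\xi_i,\xi_i]=-\tfrac12\mathbf 1=\p$ summed over $i\in I$ contribute the $+|f|$. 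Without this splitting (and the further case analysis of $|I-L|=1,2$ according to whether $K$ equals $I$, $I\setminus\{r\}$, or $I\setminus\{r,s\}$), the three displayed summands cannot be recovered. As written, your argument would not close; you would need to redo the bookkeeping with the correct grading parameter $|I-L|$ before the sign-matching you describe can even begin.
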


\begin{proof} Consider as before $f=\xi_I, g=\xi_J\xi_K$ with
$J\cap I=\emptyset$ and $ K\subseteq I$.   Recall formula
(\ref{eq:00})
\begin{equation*}
\xi_I\cdot (\xi_J\xi_K\tt v)=\sum_{L\subseteq K}
(-1)^{|I|(|J|+|K|)+\frac{|L|(|L|-1)}{2}-|L|(|K|-|L|)} \,
\xi_J(\p_L\xi_K)(\p_L\xi_I)\tt v.
\end{equation*}
Since $\p_L\xi_I\in \fg_{>0}$ if $|I-L|>2$, it is enough to
consider the summands that appear in the cases $|I-L|=0,1,2$.

\

\noindent{\bf Case $|I-L|=0$:}

\

\noindent This summand appear if and only if $K=I$, and it
correspond to the single possible choice of $L=K$. Using
(\ref{eq:del}),  we get
\begin{equation}\label{eq:nuevasa}
\delta_{K,I}\, (-1)^{|I||J|+|I|^2+\frac{|I|(|I|-1)}{2}}\xi_J {\bf
1}\tt v
\end{equation}
and using (\ref{eq:del1}) together with ${\bf 1}= - 2 \p$, it  can
be rewritten as
\begin{equation} \label{eq:1111}
- 2 \p \, (-1)^{p(f)} \p_f(g)\tt v
\end{equation}
obtaining part of the first term of the statement of this lemma.
Observe that the term $\p_f(g)$  is non-zero  iff $K=I$, therefore
the expression (\ref{eq:1111}) also contains the  $\delta_{K,I}$
in (\ref{eq:nuevasa}). This kind of analysis will be repeatedly
used.

\

\noindent{\bf Case $|I-L|=1$:}

\

\noindent This case is clearly divided in two subcases:

\vskip .3cm

(1-a) $K=I$ and $L=I-\{i\}$ moving $i\in I$, or

(1-b) $K=I-\{k\}$, and $L$ takes the single value $K$.

\

\noindent Let us compute each subcase separately.

\

\noindent {\it Subcase (1-a):} Recalling (\ref{eq:00}) and using
(\ref{eq:del1}), the summands in this subcase become
\begin{align*}
\hbox{terms(1-a)} & = \delta_{K,I} \sum_{i\in I}
(-1)^{|I||J|+|I|^2+ \frac{(|I|-1)(|I|-2)}{2} - (|I|-1)}
\xi_J(\p_{I-\{i\}}\xi_{I})(\p_{I-\{i\}}\, \xi_I) \tt v\\
 & =- \delta_{K,I} \sum_{i\in I} (-1)^{|I||J|+
 \frac{(|I|-1)(|I|-2)}{2}}\xi_J\xi_i\xi_i \tt v.
\end{align*}
Now, observe that $0\neq \xi_i\xi_i\tt v\in \,$Ind$(V)$. Moreover,
using that $\xi_i\xi_i +\xi_i\xi_i=[\xi_i,\xi_i]=-{\bf 1}\in
\fg_{-2}$, we obtain
\begin{equation*}
\hbox{terms(1-a)}= - \delta_{K,I} (-1)^{|I||J|+
\frac{(|I|-1)(|I|-2)}{2}}\  |I| \ \p \ \xi_J \tt v.
\end{equation*}
On the other hand, as in (\ref{eq:1111}),  if $K=I$ we have
\begin{equation} \label{eq:fg}
\p_f(g)= (-1)^{|I||J|+ \frac{|I|(|I|-1)}{2}}\  \ \xi_J,
\end{equation}
obtaining
$$
 \hbox{terms(1-a)}= (-1)^{p(f)} \, |f|\, \p (\p_f \, g) \tt v,
 $$
getting the other part of the first term in the statement of this
lemma.

\

\noindent {\it Subcase (1-b):} Recalling (\ref{eq:00}) and using
(\ref{eq:del}) and (\ref{eq:del1})
\begin{align*}
\hbox{terms(1-b)} & = \delta_{K,I-\{k\}} (-1)^{|I||J|+|I|(|I|-1)+
\frac{(|I|-1)(|I|-2)}{2} }
\xi_J(\p_{I-\{k\}}\xi_{I-\{k\}})(\p_{I-\{k\}}\, \xi_I) \\
 & =\delta_{K,I-\{k\}}(-1)^{|I||J|+ \ep_k
 + \frac{|I|(|I|-1)}{2}}\xi_J\xi_k.
\end{align*}
On the other hand, observe that $\p_{(\p_j f)}(\xi_j g)\neq 0$ iff
$j\notin K\cup J, j\in I$ and $I-\{j \}\subseteq \{j\}\cup K\cup J
$, i.e. $K=I-\{j\}$. Hence, if $K=I-\{k\}$, then
\begin{align*}
  \sum_{j=1}^n \p_{(\p_j f)}(\xi_j g)  & = \p_{(\p_k f)}(\xi_k g)
   = (-1)^{\ep_k^I} \p_{I-\{k\}}
   (\xi_k\xi_J\xi_{I-\{k\}})\nonumber\\
  & =  (-1)^{\ep_k^I + (|J|+1)(|I|-1)
  + \frac{(|I|-1)(|I|-2)}{2}} \xi_k\xi_J \\
  & =  (-1)^{\ep_k^I + |J||I| + \frac{(|I|(|I|-1)}{2}}
   \xi_J  \xi_k\nonumber
\end{align*}
obtaining terms(1-b) and the second term of the statement of this
lemma.

\

\noindent{\bf Case $|I-L|=2$:} It remains to see that this case
produce the last term in the statement of this lemma. In order to
prove it, observe that this case must be divided in the following
subcases, depending on the relation between $f$ and $g$, more
precisely, depending on the relation between $K$ and $I$, namely:

\

(2-a) $K=I$, hence $L=I-\{i,j\}$ moving $i<j, i,j\in I$, or

(2-b) $K=I-\{r\}$, hence $L=I-\{r,s\}$ moving $s\in I$ with $s\neq
r$, or

(2-c) $K=I-\{r,s\}$ with $r<s$, hence $L$ takes the single value
$K$.

\

\noindent Now,  we must show that for each choice of $K$ as in
(2-a,b,c) the resulting sum over the corresponding subsets $L$'s
is always equal to
$$
  (-1)^{p(f)} \sum_{i<j} \p_{(\p_i\p_j f)}g \otimes F_{ij}v.
$$

\

\noindent Using (\ref{eq:00}), it is clear that
\begin{align*}
\hbox{terms(2-a)} & = \sum_{i<j; i,j\in I} (-1)^{|I||J|+|I|+
\frac{(|I|-2)(|I|-3)}{2}-(|I|-2)2} \ \xi_J\xi_i\xi_j(\xi_i\xi_j)\tt v\\
& = - \sum_{i<j; i,j\in I} (-1)^{|I||J|+\frac{|I|(|I|+1)}{2}+1} \
\xi_J\xi_i\xi_j\tt F_{ij}v.
\end{align*}
On the other hand,
\begin{align*}
\sum_{i<j} \p_{(\p_i\p_j \xi_I)}(\xi_J\xi_I)\tt F_{ij}v
  & =  \sum_{i<j; i,j\in I}  (-1)^{\ep_i +\ep_j}
  \p_{I-\{i,j\}}(\xi_J\xi_I)\tt F_{ij} v \\
  & =  \sum_{i<j; i,j\in I}  (-1)^{\ep_i +\ep_j + (|I|-2)|J|}
  \xi_J \p_{I-\{i,j\}}(\xi_I)\tt F_{ij} v \nonumber\\
 &  = \sum_{i<j; i,j\in I}  (-1)^{ |I||J| + \frac{|I|(|I|-1)}{2}}
 \xi_J \xi_i\xi_j \tt F_{ij} v,
\end{align*}
where in the last equality we are using the following formula that
can be easily verified for $i<j$
\begin{equation} \label{eq:A}
\p_{I-\{i,j\}}(\xi_I)=
\left\{%
\begin{array}{ll}
(-1)^{\ep_i +\ep_j +\frac{|I|(|I|-1)}{2}} \xi_i\xi_j, & \hbox{if $\ i<j$;} \\
(-1)^{\ep_i +\ep_j +\frac{|I|(|I|-1)}{2}} \xi_j\xi_i, & \hbox{if $\ i>j$.} \\
\end{array}%
\right.
\end{equation}
Therefore, taking care of the sign of the last term in the
statement, we proved that it corresponds to terms(2-a).

\

In order to study case (2-b), suppose that $K=I-\{r\}$. Then,
using (\ref{eq:00}),
\begin{align*}
  & \hbox{terms(2-b)} = \\
 &= \sum_{s\in I, s\neq r}
  (-1)^{|I||J|+|I|(|I|-1)+ \frac{(|I|-2)(|I|-3)}{2}- (|I|-2)}
  \xi_J (\p_{I-\{r,s\}} \ \xi_{I-\{r\}})(\p_{I-\{r,s\}} \ \xi_I) \tt v \\
 & = \sum_{s\in I, s\neq r} (-1)^{|I||J|+  \frac{(|I|-1)(|I|-2)}{2} }
 \xi_J (\p_{I-\{r,s\}} \ \xi_{I-\{r\}})(\p_{I-\{r,s\}} \ \xi_I) \tt v .
\end{align*}
Using (\ref{eq:A}) it become
\begin{align} \label{eq:2-b}
  \hbox{terms(2-b)} & = -\sum_{s\in I, s< r} (-1)^{|I||J|+
   \frac{(|I|-1)(|I|-2)}{2} + \ep_r+\ep_s+ \frac{|I|(|I|-1)}{2} }
   \xi_J (\p_{I-\{r,s\}} \ \xi_{I-\{r\}}) \tt F_{sr} v  \nonumber\\
 & \ \ - \sum_{s\in I, r< s} (-1)^{|I||J|+
 \frac{(|I|-1)(|I|-2)}{2} +  \ep_r+\ep_s+ \frac{|I|(|I|-1)}{2} }
 \xi_J (\p_{I-\{r,s\}} \ \xi_{I-\{r\}}) \tt F_{rs} v  \nonumber\\
  & = -\sum_{s\in I, s< r} (-1)^{|I||J|+  \ep_r+\ep_s+ |I|+1 }
  \xi_J (\p_{I-\{r,s\}} \ \xi_{I-\{r\}}) \tt F_{sr} v  \nonumber\\
 & \ \ - \sum_{s\in I, r< s} (-1)^{|I||J|+ \ep_r+\ep_s+ |I|+1 }
 \xi_J (\p_{I-\{r,s\}} \ \xi_{I-\{r\}}) \tt F_{rs} v.
\end{align}

\

\noindent On the other hand, if $K=I-\{r\}$ we have
\begin{align*}
 \sum_{i<j} \p_{(\p_i\p_j f)}g \otimes F_{ij}v & = \sum_{i<j ; i,j\in I}
 (-1)^{\ep_i +\ep_j } \ \p_{I-\{i,j\}} (\xi_J\xi_{I-\{r\}})\tt F_{ij} v \\
 & = \sum_{s<r ; s\in I} (-1)^{\ep_r +\ep_s + |I||J|} \ \xi_J (\p_{I-\{r,s\}}
  \xi_{I-\{r\}})\tt F_{sr} v \\
  &   + \sum_{r<s ; s\in I} (-1)^{\ep_r +\ep_s + |I||J|} \
  \xi_J (\p_{I-\{r,s\}} \xi_{I-\{r\}})\tt F_{rs} v.
\end{align*}
Therefore, comparing the last equation with (\ref{eq:2-b}) and
taking care of the sign in the last term of the statement, we
prove that terms(2-b) correspond to it for $K=I-\{r\}$.

\

Finally, suppose that $K=I-\{r,s\}$ with $r<s$, then (2-c) or more
precisely the sum in (\ref{eq:00}) over those $L$ with $|I-L|=2$
become
\begin{align} \label{eq:2-c}
  \hbox{terms(2-c)} & = (-1)^{ |I||J| + |I|(|I|-2)+ \frac{(|I|-2)(|I|-3)}{2}}
  \xi_J (\p_{I-\{r,s\}} \ \xi_{I-\{r,s\}}) (\p_{I-\{r,s\}} \ \xi_I)\tt v
  \nonumber\\
 & =  -(-1)^{ |I||J| + |I|+ \frac{(|I|-2)(|I|-3)}{2} +\ep_r +\ep_s +
 \frac{|I|(|I|-1)}{2}}
 \xi_J (\p_{I-\{r,s\}} \ \xi_{I-\{r,s\}})  \tt F_{ij}v \nonumber\\
  & = - (-1)^{ |I||J| + |I| + 1 +\ep_r +\ep_s } \xi_J (\p_{I-\{r,s\}} \
  \xi_{I-\{r,s\}})  \tt F_{ij}v .
\end{align}

\

\noindent On the other hand, if $K=I-\{r,s\}$ with $r<s$, we have
\begin{align*}
 \sum_{i<j} \p_{(\p_i\p_j f)}g \otimes F_{ij}v & = \sum_{i<j ; i,j\in I}
 (-1)^{\ep_i +\ep_j } \ \p_{I-\{i,j\}} (\xi_J\xi_{I-\{r,s\}})\tt F_{ij} v \\
 & =  (-1)^{\ep_r +\ep_s + |I||J|} \ \xi_J (\p_{I-\{r,s\}}
 \xi_{I-\{r,s\}})\tt F_{rs} v
\end{align*}
Therefore, comparing the last equation with (\ref{eq:2-c}) and
taking care of the sign in the last term of the statement, we
prove that (2-c) correspond to it for $K=I-\{r,s\}$, finishing the
proof.
\end{proof}

\noindent The following lemma gives us the $\la^1$-coefficient of
the $\la$-action.

\vskip .3cm

\begin{lemma} \label{lem:fff} For any monomials elements $f=\xi_I, g=\xi_L$
with $I\neq \emptyset$, we have
\begin{align*}
t f \cdot (g\tt v) &  = (-1)^{p(f)}(\p_f g)\otimes E_{00} v  \\
& \ \ + (-1)^{p(f)+p(g)}  \sum_{i=1}^n \big(\p_f(\p_i
g)\big)\xi_i\otimes v  + \sum_{i\neq j} \p_{(\p_i f)}(\p_j
g)\otimes F_{ij} v.\nonumber
\end{align*}
\end{lemma}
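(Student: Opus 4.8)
The plan is to compute $tf\cdot(g\otimes v)$ by exactly the bookkeeping used in the proof of Lemma~\ref{lem:0-1}, now extracting the coefficient of $t^{1}$ instead of $t^{0}$ from the master formula (\ref{eq:master}). As there, I take $f=\xi_I$ and $g=\xi_J\xi_K$ with $J\cap I=\emptyset$ and $K\subseteq I$, and I specialize (\ref{eq:master}) to $m=1$. The combinatorial factor $\tfrac{m!}{(m-i)!}=m(m-1)\cdots(m-i+1)$ is a falling factorial, so for $m=1$ it equals $1$ when $i=0,1$ and vanishes for $i\ge2$ (equivalently, $K(1,n)_+$ carries no negative powers of $t$). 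Hence (\ref{eq:master}) collapses to a double sum over $S\subseteq J$ with $|S|\le1$ and $L\subseteq K$, and one could equally well first record a $t$-analog of the closed formula (\ref{eq:00}); working directly from (\ref{eq:master}) is cleaner.

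The decisive step is the degree count, just as in Lemma~\ref{lem:f1}. Each surviving summand produces the monomial $t^{1-i}\,\p_L(\xi_I\xi_S)$, whose degree is
\[
\deg\bigl(t^{1-i}\,\p_L(\xi_I\xi_S)\bigr)=2(1-i)+\bigl(|I|+i-|L|\bigr)-2=|I|-i-|L|.
\]
Since $\fg_{>0}$ acts trivially on $F$, only summands with $|I|-i-|L|\le0$ survive; and because $L\subseteq K\subseteq I$ together with $i\le1$ force $|I|-i-|L|\ge-1$, the degree never reaches $-2$. This is precisely why no $\fg_{-2}$-contribution (i.e. no $\one=-2\p$ factor) occurs, explaining the absence of a $\p$-term in the statement.

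What remains are three families of surviving summands, which I would treat separately. First, $i=0$ forces $L=I$ (hence $K=I$) and gives the degree-$0$ element $t\,\p_I\xi_I=\pm\,t=\pm\,E_{00}$, using $\p_I\xi_I=(-1)^{|I|(|I|-1)/2}$; this yields the term $(-1)^{p(f)}(\p_f g)\otimes E_{00}v$. Second, $i=1$ with $|L|=|I|-1$, i.e. $L=I\setminus\{a\}$ and $S=\{s\}$ with $s\in J$, gives the degree-$0$ monomials $\p_L(\xi_I\xi_s)=\pm\,\xi_a\xi_s$, which act through $\fg_0$; rewriting $-\xi_a\xi_s=F_{as}$ and summing over the removed index $a\in I$ and over $s\in J$ produces $\sum_{i\neq j}\p_{(\p_i f)}(\p_j g)\otimes F_{ij}v$. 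Third, $i=1$ with $L=I$ (so $K=I$) gives the degree-$(-1)$ element $\p_I(\xi_I\xi_s)=\pm\,\xi_s\in\fg_{-1}$, which stays as a factor $\xi_s$ in $\Lambda(n)$ and yields the middle term $(-1)^{p(f)+p(g)}\sum_{i}\bigl(\p_f(\p_i g)\bigr)\xi_i\otimes v$.

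The main obstacle, as always in this appendix, is sign bookkeeping: one must carry the signs $(sgn)_{i,S,L}$ from the Grassmann reorderings in (\ref{eq:master}) through each reduction and match them against the compact right-hand side. The most delicate point is the $F_{ij}$-family, where the degree-$0$ monomials $\pm\,\xi_a\xi_s$ must be normalized via $F_{as}=-\xi_a\xi_s$ and the summation reconciled between the unordered form $\sum_{i\neq j}$ in the statement and the ordered data ($a\in I$ from $L=I\setminus\{a\}$, together with $s\in J$ and the two cases $K=I$ or $K=I\setminus\{a\}$); here I expect to check that the branch $j\in K$ of $\p_{(\p_i f)}(\p_j g)$ contributes nothing precisely because it would force $j=i$, which the constraint $i\neq j$ excludes. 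The identities (\ref{eq:del1}), (\ref{eq:del2}) and (\ref{eq:A}) for $\p_L\xi_I$ are exactly what is needed to resolve these signs, and verifying that everything collapses to the three prefactors $(-1)^{p(f)}$, $(-1)^{p(f)+p(g)}$ and $1$ completes the proof.
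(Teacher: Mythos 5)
Your plan is correct: specializing the master formula (\ref{eq:master}) to $m=1$, noting that the falling factorial kills $i\geq 2$, and running the degree count $\deg\bigl(t^{1-i}\p_L(\xi_I\xi_S)\bigr)=|I|-i-|L|\geq -1$ does isolate exactly the three surviving families ($i=0,\,L=I$ giving $E_{00}$; $i=1,\,L=I$ giving the $\fg_{-1}$ factor $\xi_s$; $i=1,\,L=I\setminus\{a\}$ giving the $F_{as}$ terms), and it correctly explains why no $\partial$-term appears. You also correctly spot the two subtleties: the branch $j\in K$ of $\p_{(\p_if)}(\p_jg)$ vanishes because it would force $j=i$, and the $F$-family must account for both $K=I$ and $K=I\setminus\{a\}$.

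The organization differs from the paper's in one respect worth noting. After peeling off the bracket with $\xi_J$ (your $S$-sum with $|S|\leq 1$), the paper does \emph{not} re-expand the inner $L$-sum; instead it recognizes the $|S|=1$ contribution as $(\xi_I\xi_j)\cdot(\xi_K\otimes v)$ and recycles the already-proved $\lambda^0$-formula of Lemma~\ref{lem:0-1} with $\tilde f=\xi_I\xi_j$, $\tilde g=\xi_K$ (its first, $\partial$-producing, term dies since $\p_{\xi_I\xi_j}\xi_K=0$ for $j\notin I$). This outsources all of the $L$-sum sign bookkeeping to the previous lemma, whereas your route redoes it by hand from (\ref{eq:master}). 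Both are valid; yours is more self-contained but carries a heavier sign burden, which you have identified as the remaining work but not actually discharged --- in a full write-up you would still need to verify that the reordering signs collapse to the three prefactors $(-1)^{p(f)}$, $(-1)^{p(f)+p(g)}$ and $1$, which is where the paper's recursive shortcut pays off.
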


\begin{proof} We shall use the usual notation: $f=\xi_I, g=\xi_J\xi_K$ with
$J\cap I=\emptyset$ and $K\subseteq I$. Using (\ref{eq:br}) and
(\ref{eq:master}), it is easy to see that
\begin{equation*} \label{eq:1.1}
t\xi_I \cdot (\xi_J\xi_K\tt v) = (-1)^{|I||J|} \xi_J
(t\xi_I)\xi_K\tt v + \sum_{j=1}^n (-1)^{|I||J|-|I|+|J|} (\p_j
\xi_J) (\xi_I\xi_j)\xi_K\tt v.
\end{equation*}
and in the second term we can apply  the $(0)$-action formula
given by Lemma \ref{lem:0-1}, in the special case of $\tilde
f=\xi_I\xi_j$ and $\tilde g=\xi_K$, hence
\begin{align} \label{eq:1.2}
   t\xi_I \cdot (\xi_J\xi_K\tt v) & = (-1)^{|I||J|} \xi_J (t\xi_I)\xi_K\tt v
     \nonumber\\
  &  + \sum_{i,j=1}^n (-1)^{|I||J|-|I|+|J|} (\p_j \xi_J)
   (\p_{(\p_i \ \xi_I\xi_j)}(\xi_i\xi_K))\tt v  \\
  &  + (-1)^{|I|+1} \sum_{j=1}^n \sum_{r<s} (-1)^{|I||J|-|I|+|J|} (\p_j \xi_J)
   (\p_{(\p_r\p_s\ \xi_I\xi_j)} \xi_K) \tt F_{rs} v . \nonumber
\end{align}
It remains to see that the three terms in the above equation
correspond exactly to the terms in the statement. In order to do
it, let us consider the first term of (\ref{eq:1.2}), and using
(\ref{eq:master}), we obtain
\begin{align} \label{eq:1.3}
(-1)^{|I||J|} \xi_J (t\xi_I)\xi_K\tt v  & =(-1)^{|I||J|} \xi_J
\sum_{L\subseteq K} \widetilde{(sgn)}_L \xi_{K-L}(t\xi_{I-L})\tt v \nonumber\\
& = \delta_{K,I} \widetilde{(sg)} (-1)^{|I||J|} \xi_J\tt E_{00}v
\end{align}
since deg$(t\xi_{I-L})=|I-L|$ has to be 0, i.e. we have only one
summand that correspond to $L=I$ and we must have $K=I$. Observe
that the term $L=I$ corresponds to take all the brackets against
$\xi_{k_1},\dots ,\xi_{k_l}$, if $K=\{k_1,\dots , k_l\}$, hence it
allows us to compute the sign in (\ref{eq:1.3}), obtaining
$$
(-1)^{|I||J|} \xi_J (t\xi_I)\xi_K\tt v = \delta_{K,I} (-1)^{|I||J|
+ \frac{|I|(|I|+1)}{2}} \xi_J\tt E_{00}v = (-1)^{p(f)} \p_f \,g,
$$
where we used (\ref{eq:fg}) to prove the last equality, getting
the first term of the statement.

\

Now, let us consider the second term of (\ref{eq:1.2}) and observe
on it the expressions $(\p_j\xi_J)$ and $\p_{(\p_i \xi_I\xi_j)}
\xi_i\xi_K$. In order to be non-zero, we must have $i=j$, and
$j\in J$. Therefore,
\begin{align*} 
  & \sum_{i,j=1}^n (-1)^{|I||J|-|I|+|J|} (\p_j \xi_J)
   (\p_{(\p_i \ \xi_I\xi_j)}(\xi_i\xi_K))\tt v =
   \nonumber\\
  & = \sum_{j\in J} (-1)^{|I||J|-|I|+|J|} (\p_j \xi_J)
  (\p_{(\p_j \ \xi_I\xi_j)}(\xi_j\xi_K))\tt v  \\
  & = \sum_{j\in J} (-1)^{|I||J|+|J|}  (\p_j \xi_J)  (\p_{I}(\xi_j\xi_K))\tt v
= \sum_{j\in J} (-1)^{|J|-|I|}
 \p_{I}\bigg((\p_j \xi_J)\xi_j\xi_K  \bigg)\tt v  \nonumber\\
& = \sum_{j\in J} (-1)^{|J|-|I|+|K|}
         \p_{I}\bigg((\p_j \xi_J)\xi_K \xi_j \bigg)\tt v  \\
 & = \sum_{j\in J} (-1)^{|J|-|I|+|K|}
         \p_{I}\bigg((\p_j \xi_J\xi_K )\xi_j\bigg) \tt v \\
& = \sum_{j\in J} (-1)^{|J|-|I|+|K|}
          (\p_{I}(\p_j \xi_J\xi_K ))\xi_j \tt v
= \sum_{j=1}^n (-1)^{p(f)+p(g)}    (\p_{f}(\p_j \, g ))\xi_j \tt v, %
\end{align*}
proving that it corresponds to the second term of the statement of
this lemma.

\

Finally, consider the last term in (\ref{eq:1.2}).  Observe that
the expression $\p_{(\p_r\p_s\xi_I\xi_j)}\xi_K$ implies that $r=j$
or $s=j$. Therefore, this last term can be rewritten as follows:
\begin{align*}
   &  (-1)^{|I|+1} \sum_{j=1}^n \sum_{r<s} (-1)^{|I||J|-|I|+|J|} (\p_j \xi_J)
    (\p_{(\p_r\p_s\ \xi_I\xi_j)} \xi_K) \tt F_{rs} v    \\
  & = - \sum_{j=1}^n (-1)^{|I||J|+|J|} (\p_j \xi_J) \Bigg[\sum_{i<j}
   ( \p_{(\p_i\p_j\ \xi_I\xi_j)} \xi_K )  \tt F_{ij} v
+ \sum_{j<i}  ( \p_{(\p_j\p_i\ \xi_I\xi_j)} \xi_K )  \tt F_{ji} v \Bigg]  \\
  & = - \sum_{j=1}^n (-1)^{|I||J|+|J|+|I|} (\p_j \xi_J) \Bigg[\sum_{i<j}
  ( \p_{(\p_i \xi_I)} \xi_K )  \tt F_{ij} v
- \sum_{j<i}  ( \p_{(\p_i \xi_I)} \xi_K )  \tt F_{ji} v \Bigg]     \\
& =  \sum_{j\in J}  \Bigg[\sum_{i<j,i\in I}   \p_{(\p_i \xi_I)}
((\p_j \xi_J) \xi_K )  \tt F_{ij} v
- \sum_{j<i, i\in I}   \p_{(\p_i \xi_I)}
((\p_j \xi_J) \xi_K )  \tt F_{ji} v \Bigg] \\
& =  \sum_{j\in J}  \sum_{i\in I}   \p_{(\p_i \xi_I)} ((\p_j
\xi_J) \xi_K )  \tt F_{ij} v
 \hskip 3cm \hbox{ (since $F_{ij}=-F_{ji}$)}\\
 & =  \sum_{j\in J}\sum_{i\in I} \p_{(\p_i f)}(\p_j g)\otimes F_{ij} v  \\
 & =  \sum_{i\neq j} \p_{(\p_i f)}(\p_j g)\otimes F_{ij} v
\end{align*}
finishing the proof.
\end{proof}

\noindent The following lemma gives us the $\la^2$-coefficient of
the $\la$-action.

\vskip .3cm

\begin{lemma} \label{lem:f7} For any monomials elements $f=\xi_I, g=\xi_L$
with $I\neq \emptyset$, we have
\begin{equation*}
\bigg(\frac{1}{2}t^2f\bigg) \cdot (g\tt v)=  (-1)^{p(f)}
\sum_{i<j} \p_f(\p_i\p_j\,g)\tt F_{ij} v.
\end{equation*}
\end{lemma}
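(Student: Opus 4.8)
The plan is to specialize the master formula (\ref{eq:master}) to $m=2$, where the overall factor $\tfrac12$ will exactly cancel the combinatorial factor $\tfrac{2!}{(2-i)!}$ in the single term that survives. Keeping the standing convention $f=\xi_I$, $g=\xi_J\xi_K$ with $J\cap I=\emptyset$ and $K\subseteq I$, I would begin from
\[
\Big(\tfrac12 t^2\xi_I\Big)\cdot(\xi_J\xi_K\tt v)
=\sum_{i=0}^{2}\sum_{\substack{S\subseteq J\\|S|=i}}\sum_{L\subseteq K}
(sgn)_{i,S,L}\,\tfrac12\tfrac{2!}{(2-i)!}\,(\p_S\xi_J)(\p_L\xi_K)\,\big(t^{2-i}\p_L(\xi_I\xi_S)\big)\tt v .
\]

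The conceptual core is a degree count isolating the surviving terms, exactly as in the proof of \leref{lem:f1}. The operator pushed onto $v$ has degree $\deg\big(t^{2-i}\p_L(\xi_I\xi_S)\big)=2(2-i)+(|I|+i-|L|)-2=2-i+|I|-|L|$. I would dispose of the three ranges in turn. A negative value would force $|L|\ge 3-i+|I|$, which is impossible since $|L|\le|I|$ and $i\le 2$; thus, in contrast with the $\la^0$ case, no term lands in $\fg_{<0}$ and there are no $\p$- or $\xi_i$-contributions. A strictly positive value puts the operator in $\fg_{>0}$, which annihilates $v\in F$ by \thref{th:1}(a). The value $0$ forces $i=2$ and $|L|=|I|$, hence $L=K=I$; the operator is then $\p_I(\xi_I\xi_S)=(-1)^{|I|(|I|-1)/2}\xi_S$ by (\ref{eq:del}) and disjointness, with $S=\{p,q\}\subseteq J$, $p<q$, and since $\xi_p\xi_q=-F_{pq}\in\fg_0$ this produces precisely an $F_{pq}$-term. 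In particular $E_{00}=t$ never appears, matching the shape of the claimed formula. This bookkeeping is short and, I expect, clean.

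It then remains to pin down coefficients. For $i=2$, $L=K=I$ the numerical factor is $\tfrac12\cdot\tfrac{2!}{0!}=1$, the factor $(\p_L\xi_K)=\p_I\xi_I=(-1)^{|I|(|I|-1)/2}$ is a scalar, and it cancels the equal sign inside $\p_I(\xi_I\xi_S)$, so the surviving contribution is $\sum_{p<q,\,p,q\in J}(sgn)_{2,S,I}\,(\p_S\xi_J)\tt(-F_{pq})v$. On the other side, using $J\cap I=\emptyset$ with (\ref{eq:del1}) one sees $\p_f(\p_p\p_q g)=\p_I\,\p_p\p_q(\xi_J\xi_I)=\pm(\p_S\xi_J)$, which is nonzero exactly when $K=I$ and $p,q\in J$ — precisely the support found on the left. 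I would finish by making the signs $(sgn)_{2,S,I}$ explicit — they record the reordering of the two brackets against $\xi_p,\xi_q$ and the passage of $\p_I$ through $\xi_{J\setminus S}$, handled by (\ref{eq:del1}) and (\ref{eq:A}) — and checking that they reproduce the global $(-1)^{p(f)}$. This sign chase, entirely parallel to case (2-a) in the proof of \leref{lem:0-1}, is the only laborious point and is where I expect the real work to lie.
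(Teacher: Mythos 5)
Your proposal is correct and takes essentially the same route as the paper's proof: both specialize the master formula (\ref{eq:master}) to $m=2$, eliminate every term except the one with both brackets taken against $J$ and $L=K=I$ by the same degree count, and reduce the claim to matching signs against $\p_f(\p_i\p_j g)$. The sign chase you defer is exactly what the paper carries out explicitly in (\ref{eq:2-2})--(\ref{eq:2-5}).
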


\begin{proof} Using (\ref{eq:master}), we have
\begin{align} \label{eq:2-0}
t^2\xi_I\cdot \xi_J\xi_K\tt v & = \sum_{L\subseteq K} (sgn)
\xi_J \xi_{K-L}(t^2\xi_{I-L})\tt v \\
 & \hskip 1cm + \sum_{j\in J} \sum_{L\subseteq K}2 (sgn) \,
 \xi_{J-\{j\}}\xi_{K-L} (t\xi_{I-L} \xi_j) \tt v \nonumber\\
  & \hskip 1cm +  \sum_{\{i,j\}\subseteq J, i<j} \sum_{L\subseteq K}2  (sgn)
  \, \xi_{J-\{i,j\}}\xi_{K-L} (\xi_{I-L}\xi_i\xi_j)\tt v \nonumber
\end{align}
for certain signs that depend on the parameters. Now, observe that
the first two terms in (\ref{eq:2-0})  are $0$, because
$\deg(t^2\xi_{I-L})\geq 2$ and $\deg(t\xi_{I-L} \xi_j)\geq 1 $
since $j\notin I$. Using that $\deg(\xi_{I-L}\xi_i\xi_j)\geq 1$
for $|I-L|\geq 1$,  the last term of (\ref{eq:2-0}) is  non-zero
only if $L=K=I$, therefore
\begin{equation} \label{eq:2-1}
t^2\xi_I\cdot (\xi_J\xi_K\tt v )= - \sum_{\{i,j\}\subseteq J, i<j}
2 (sgn)_{i,j} \, \xi_{J-\{i,j\}} \tt F_{ij} v.
\end{equation}

It remains to compute the sign $(sgn)_{i,j}$ and rewrite
(\ref{eq:2-1}) as in the statement of this lemma.

Suppose that $\xi_J=\xi_\star\dots\xi_i\dots \xi_j\dots \xi_*$,
then the term that appears in  (\ref{eq:2-1}) is obtained (super)
commuting the $\xi$'s,  namely
\begin{align} \label{eq:2-2}
 & t^2\xi_I\cdot \xi_\star\dots\xi_i\dots \xi_j\dots \xi_* \xi_K\tt v  =  \\
 & = \sum_{\{i,j\}\subseteq J, i<j} 2
 (-1)^{|I|\epsilon^J_i + 1 +(\epsilon^J_j-(\epsilon^J_i+1))(|I|+1)}
 \xi_\star\dots \widehat{\xi_i}\dots (t\xi_I\xi_i)
 \xi_j\dots \xi_* \xi_K\tt v \nonumber\\
  &  = \sum_{\{i,j\}\subseteq J, i<j} 2
  (-1)^{|I|\epsilon^J_i + 1 +(\epsilon^J_j-(\epsilon^J_i+1))(|I|+1)+1}
  \xi_\star\dots \widehat{\xi_i}\dots  \widehat{\xi_j}
  (\xi_I\xi_i\xi_j)\dots \xi_* \xi_K\tt v  \nonumber\\
  &  = \sum_{\{i,j\}\subseteq J, i<j} 2
  (-1)^{|I|\epsilon^J_i + 1
  +(\epsilon^J_j-(\epsilon^J_i+1))(|I|+1)+1+(|I|+2)(|J|-(\epsilon^J_j
  +1))} \ \times            \nonumber\\
  &\hskip 6cm \times\ \xi_\star\dots \widehat{\xi_i}\dots  \widehat{\xi_j}\dots
  \xi_* (\xi_I\xi_i\xi_j)
  \xi_K\tt v  \nonumber\\
&  = \sum_{\{i,j\}\subseteq J, i<j} 2 (-1)^{|I||J|+ \epsilon^J_i +
\epsilon^J_j+ 1 } \xi_\star\dots \widehat{\xi_i}\dots
\widehat{\xi_j}\dots
\xi_* (\xi_I\xi_i\xi_j)\xi_K\tt v  \nonumber \\
&  = - 2 (-1)^{|I||J|} \sum_{ i<j} (\p_i\p_j\xi_J)
(\xi_I\xi_i\xi_j)\xi_K\tt v,  \nonumber
\end{align}
where we used in the last equality that
$\xi_{J-{\{i,j\}}}=(-1)^{\epsilon^J_i + \epsilon^J_j}
\p_i\p_j\xi_J$, and the term $\p_i\p_j\xi_J$ implicitly contains
the condition $\{i,j\}\subseteq J$.

Now, in order to move through $\xi_K$, we may apply the
(0)-action formula or make the direct computation recalling that
the only surviving term corresponds to the case $L=K=I$ in
(\ref{eq:2-0}), namely, it is non-zero if $K=I$ and we have to
take all the brackets, that is,  if $\xi_I=\xi_{i_1}\dots
\xi_{i_s}$, then
\begin{align} \label{eq:2-3}
  (\xi_I\xi_i\xi_j)\cdot (\xi_I\tt v) & =(-1)^{|I|}
  (\xi_{i_2}\dots \xi_{i_s}\xi_i\xi_j) \xi_{i_2}\dots \xi_{i_s}\tt v \nonumber\\
  & = (-1)^{|I|+(|I|-|)+\cdots + 1}\xi_i\xi_j\tt v  \\
  & = (-1)^{\frac{|I|(|I|+1)}{2}} \xi_i\xi_j\tt v. \nonumber
\end{align}
Now, inserting (\ref{eq:2-3}) into (\ref{eq:2-2}), we have
\begin{equation} \label{eq:2-4}
t^2\xi_I\cdot (\xi_J\xi_K\tt v)=   2 \delta_{I,K} (-1)^{|I||J| +
\frac{|I|(|I|+1)}{2}} \sum_{ i<j} (\p_i\p_j\xi_J) \tt F_{ij} v.
\end{equation}
On the other hand, if $f=\xi_I$ and $ g=\xi_J\xi_K$, with
$K\subseteq I, J\cap I=\emptyset$, then $\p_f(\p_i\p_j g)\neq 0$
iff $K=I$ and $\{i,j\}\subseteq J$. Hence it capture the above
conditions. Finally, observe that
\begin{align} \label{eq:2-5}
\p_f(\p_i\p_j \, g) &  =\p_f(\p_i\p_j(\xi_J\xi_K))=\p_f
(\p_i[(\p_j\xi_J)\xi_K
+ (-1)^{|J|} \xi_J(\p_i \xi_K)])  \nonumber\\
  & = \p_f((\p_i\p_j\, \xi_J)\xi_K)=(-1)^{|I|(|J|-2)}(\p_i\p_j\,\xi_J)(\p_f \, \xi_K)  \\
  & =  \delta_{I,K}\, (-1)^{|I||J| + \frac{|I|(|I|-1)}{2}}(\p_i\p_j\,\xi_J),
  \qquad \qquad \qquad \hbox{ (by (\ref{eq:del}))}\nonumber
\end{align}
replacing (\ref{eq:2-5}) in (\ref{eq:2-4}), we prove the lemma.
\end{proof}

\

A simple computation shows that Theorem \ref{th:action} also holds
for $f=\xi_\emptyset$.

\

%
%

%
%

\section{Appendix B}\label{sec:BBB}

\

This appendix  will be devoted to the proof of the classification
of singular vectors in Theorem \ref{sing-vect}. First, we shall
consider some technical results.



Let $\vec{m}\in \hbox{Ind}(V)=\CC[\p]\otimes \Lambda(n)\otimes V$
be a singular vector, then
\begin{equation*}
\vec{m}=\sum_{k=0}^N\sum_I \p^k (\xi_I\otimes v_{I,k}), \quad
\hbox{ with } v_{I,k}\in V.
\end{equation*}

In order to obtain the singular vectors, we need some reduction
lemmas. In Lemmas \ref{lem:deg}-\ref{lem:bbb}, we prove that
$N\leq 1$ and $|I|\geq n-2$. In Lemma \ref{5.5}, the case $N=1$ is
discarded  for $n\geq 4$, and in the case  $n=3$ we explicitly
found the corresponding singular vector. Finally, the proof of
Theorem \ref{sing-vect} is completed at the end of this appendix.

\begin{lemma} \label{lem:deg} If $\vec{m} \in \hbox{\rm Ind}(V)$ is a
singular vector, then the degree of $\vec{m}$ in $\p$ is at most
2.
\end{lemma}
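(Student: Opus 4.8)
The plan is to use that a singular vector $\vec{m}$ is annihilated by all of $\fg_{>0}$, and to squeeze enough linear relations out of the single generating series $\mathbf 1_\la\vec m$ (the $\la$-action of the constant $f=\mathbf 1\in\La(n)$) to force the top $\partial$-degree component to vanish. First I would reduce to the homogeneous case: since $\fg=K(1,n)_+$ is $\ZZ$-graded and $\fg_{>0}$ is homogeneous of strictly positive degree, each graded component of a singular vector is again singular, so I may assume $\vec m=\sum_{k=0}^N\partial^k w_k$ is homogeneous of some degree $d$, with $w_k\in\La(n)\otimes V$ supported in a single exterior degree $|I|=-d-2k$. Because $V$ is irreducible over $\cso(n)=\CC E_{00}\oplus\so(n)$, the central element $E_{00}$ acts on $V$ by a scalar $\mu_0$; a short computation with Theorem \ref{th:action} for $f=\mathbf 1$ then shows that $E_{00}$ acts on $w_k$ by the scalar $\lambda_k:=\mu_0-|I|=\mu_0+d+2k$.

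The key tool is the relation $[\partial,f_\la]=-\la f_\la$ coming from $(M1)_\la$, which gives $f_\la(\partial^k w_k)=(\partial+\la)^k f_\la(w_k)$ and hence $\mathbf 1_\la\vec m=\sum_{k}(\partial+\la)^k\,\mathbf 1_\la(w_k)$. For $f=\mathbf 1$ the action formula is simple: $\mathbf 1_\la(g\otimes v)=-2\partial g\otimes v+\la\big(g\otimes E_{00}v+(-1)^{p(g)}\sum_i(\partial_i g)\xi_i\otimes v\big)+\la^2\sum_{i<j}\partial_i\partial_j g\otimes F_{ij}v$, a polynomial of degree $\le 2$ in $\la$ whose $\la^2$-coefficient is $e(w_k)$, where $e:=\frac{1}{2}(t^2\mathbf 1)\in\fg_2$. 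Now I impose the singular condition in the form (S1) at $f=\mathbf 1$, namely $\frac{d^2}{d\la^2}(\mathbf 1_\la\vec m)=0$, which is exactly $(t^j\mathbf 1)\vec m=0$ for all $j\ge 2$. Treating $\partial$ as a formal variable and substituting $y=\partial+\la$, the identity $\frac{d^2}{d\la^2}(\mathbf 1_\la\vec m)=0$ separates, according to the power of $\partial$, into three families of relations; for $k\ge 3$ they read
$$e(w_k)=0\ \ (\text{already for }k\ge 2),\qquad (\lambda_k+2)\,w_k=0,\qquad (\lambda_k-k+1)\,w_k=0 .$$

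The conclusion is then immediate: for $k\ge 3$ a nonzero $w_k$ would force simultaneously $\lambda_k=-2$ and $\lambda_k=k-1$, hence $k=-1$, a contradiction. Thus $w_k=0$ for all $k\ge 3$, i.e.\ $N\le 2$. Conceptually, $e,\ E_{00},\ \partial$ form an $\sl$-triple with $\partial$ the lowering operator, $e\vec m=0$ is the highest-weight condition, and the relation $\lambda_k=k-1$ is precisely what that condition imposes; it is its clash with $\lambda_k=-2$ in high $\partial$-degree that bounds $N$.

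The hard part will be the bookkeeping in the second paragraph: expanding $(\partial+\la)^k\,\mathbf 1_\la(w_k)$, collecting the coefficients of $\partial^2$, $\partial^1$ and $\partial^0$ in $\frac{d^2}{d\la^2}(\mathbf 1_\la\vec m)$, and checking that for $k\ge 3$ the surviving coefficients are exactly the three displayed relations. One must also verify the two elementary identities $E_{00}w_k=\lambda_k w_k$ and $\sum_i(\partial_i\xi_I)\xi_i=(-1)^{|I|-1}|I|\,\xi_I$ that are used to identify the $\la^1$-coefficient with $\lambda_k w_k$. No further elements of $\fg_{>0}$ are needed, since the whole argument already follows from $(t^j\mathbf 1)\vec m=0$, $j\ge 2$.
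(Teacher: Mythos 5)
Your argument is correct and is essentially the paper's own proof: both use only the generating series $\mathbf 1_\la\vec m$ together with condition (S1), rewrite $\p$ as $(\la+\p)-\la$, and extract the coefficients of $(\la+\p)^l\la^2$, $(\la+\p)^l\la$ and $(\la+\p)^l$ to get three families of relations that force the components of $\p$-degree $\geq 3$ to vanish. The only cosmetic differences are your preliminary reduction to homogeneous components and your phrasing of the final clash as two incompatible eigenvalues of $E_{00}$ (namely $\lambda_k=-2$ versus $\lambda_k=k-1$), where the paper simply substitutes one relation into the other to obtain $-2k(k+1)\,v_{I,k}=0$; the computations are identical.
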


\begin{proof} Using Theorem \ref{th:action-dual} for $f=1$ and (S1), we have
\begin{align} \label{eq:1-1}
 & 0=\frac{d^2}{d\la^2}({1}_\la \vec{m}) = \sum_{k=2}^N\sum_I \ k (k-1) (\la +\p)^{k-2}
   \Bigg[(-2) \p (\xi_I\otimes v_{I,k})   + \\
 & + \la \bigg(\xi_I\otimes E_{00} v_{I,k} - n (1-\delta_{|I|,n})
 \xi_I\otimes v_{I,k}\bigg)  - \la^2  \sum_{i<j}  \xi_i\xi_j \xi_I
 \otimes F_{ij} v_{I,k} \Bigg]  \nonumber\\
  & + \sum_{k=1}^N\sum_I \ 2\, k  (\la +\p)^{k-1}
   \Bigg[  \xi_I\otimes E_{00} v_{I,k} - n (1-\delta_{|I|,n})
 \xi_I\otimes v_{I,k} + \nonumber\\
 & \hskip 7cm - 2 \la  \sum_{i<j}  \xi_i\xi_j \xi_I
 \otimes F_{ij} v_{I,k} \Bigg]  \nonumber\\
  & - \sum_{k=0}^N\sum_I \    (\la +\p)^{k}
 \,   2 \, \sum_{i<j}  \xi_i\xi_j \xi_I
 \otimes F_{ij} v_{I,k} \Bigg]  \nonumber.
\end{align}
Rewriting $\p$ as $(\la+\p) -\la$, we can consider (\ref{eq:1-1})
as a polynomial in $\la+\p$ and $\la$. Then the terms in
$(\la+\p)^k \la^2$, gives us
\begin{equation} \label{eq:st1}
0=\sum_I  \sum_{i<j}  \xi_i\xi_j \xi_I
 \otimes F_{ij} v_{I,k} \quad \hbox{ for all } k\geq 2.
\end{equation}
Using it and considering the coefficient of $(\la +\p )^l \la$ in
(\ref{eq:1-1}) for $l\geq 1$, we have
\begin{equation*}
0=\sum_I \ \xi_I\otimes \bigg( E_{00} v_{I,k} - n
(1-\delta_{|I|,n}) v_{I,k} + 2 \ v_{I,k}\bigg), \quad \hbox{ for
all } k > 2.
\end{equation*}
Hence
\begin{equation} \label{eq:st2}
 E_{00} v_{I,k} - n
(1-\delta_{|I|,n}) v_{I,k} = - 2 \ v_{I,k}, \quad \hbox{ for all }
k > 2.
\end{equation}
Now, using (\ref{eq:st1}), (\ref{eq:st2}) and taking the
coefficient of $(\la+\p)^l$ in (\ref{eq:1-1}), for $l\geq 2$, we
obtain
\begin{align*}
0 & =\sum_I \bigg( (-2) k (k-1)\  \xi_I\otimes v_{I,k} + 2 k\
\xi_I\otimes( E_{00} v_{I,k} - n (1-\delta_{|I|,n}) v_{I,k})\bigg)
\\
& =\sum_I (-2)k(k+1) \ \xi_I\otimes v_{I,k}, \qquad\qquad \hbox{
for all } k
> 2,
\end{align*}
getting $v_{I,k}=0$ for all $I$ and $k>2$, finishing the proof.
\end{proof}

\

From the previous Lemma, any singular vector have the form
\begin{displaymath}
\vec{m}= \p^2 \bigg(\sum_I \xi_I\otimes v_{I,2}\bigg) + \p
\bigg(\sum_I \xi_I\otimes v_{I,1}\bigg) +  \bigg(\sum_I
\xi_I\otimes v_{I,0}\bigg).
\end{displaymath}

Now, we shall introduce a very important notation. Observe that
the formula for the action given by Theorem \ref{th:action-dual}
have the form
$$
f_\la (g\otimes v)= \p\  a + b + \la\ B + \la^2\ C= (\la +\p )\  a
+ b + \la\ (B-a) + \la^2\ C,
$$
by taking the coefficients in $\p$ and $\la^j$. Using it, we can
write the $\la$-action for the singular vector $\vec{m}$ of degree
2 in $\p$, as follows
\begin{align*}
 f_\la \vec{m} = &  \bigg[(\la +\p )\  a_0
+ b_0 + \la\ (B_0-a_0) + \la^2\ C_0\bigg] \nonumber\\
 &  + (\la +\p )\ \bigg[(\la +\p )\  a_1
+ b_1 + \la\ (B_1-a_1) + \la^2\ C_1\bigg]  \\
  &   + (\la +\p )^2 \ \bigg[(\la +\p )\  a_2
+ b_2 + \la\ (B_2-a_2) + \la^2\ C_2\bigg]. \nonumber
\end{align*}
For example,
$$
C_2=-\sum_I  \sum_{i<j} (-1)^{\frac{|f|(|f|+1)}{2}+ |f||I|} \  f
\xi_i\xi_j \xi_I \otimes F_{ij} v_{I,2}.
$$
Obviously, these coefficients depend also in $f$, and sometimes we
shall write for example $a_2(f)$ to emphasize the dependance, but
we will keep it implicit in the notation if no confusion may
arise.

In order to study conditions $(S1)-(S3)$, we need to compute
\begin{align*}
 ( & f_\la  \vec{m} )' =   B_0 + 2 \ \la\ C_0 \\
 &  +  \bigg[(\la +\p )\  a_1
+ b_1 + \la\ (B_1-a_1) + \la^2\ C_1\bigg] + (\la +\p )\ (B_1 + 2 \la\ C_1 ) \nonumber \\
  &   + 2(\la +\p ) \ \bigg[(\la +\p )\  a_2
+ b_2 + \la\ (B_2-a_2) + \la^2\ C_2\bigg] + (\la +\p )^2 (B_2 + 2
\la\ C_2 ). \nonumber
\end{align*}
and
\begin{align*}
 (  f_\la  \vec{m} )'' =  &  \  2\ C_0 +  2\ B_1 + 4 \ \la\ C_1 + 2 (\la +\p )\  C_1 \\
 &   + 2 \ \bigg[(\la +\p )\  a_2
+ b_2 + \la\ (B_2-a_2) + \la^2\ C_2\bigg] \nonumber \\
  &   + 4(\la +\p ) (B_2 + 2
\la\ C_2 ) + 2(\la +\p)^2 \ C_2. \nonumber
\end{align*}
Therefore, by taking coefficients in $(\la +\p)^i \la^j$,
conditions (S1)-(S3) translate into the following list:

\vskip 0.2cm

\noindent $\underline{\bullet \hbox{ For all } f\in\Lambda(n):}$
\begin{align} \label{eq:c1}
  0  = &  \ C_2 \nonumber\\
  C_1 = & \ a_2 = - B_2 \\
  0 = & \ C_0+B_1+ b_2.   \nonumber
\end{align}
\noindent $\underline{\bullet \hbox{ For  } f=\xi_I, \hbox{ with
}|I|\geq 1:}$
\begin{align} \label{eq:c2}
  0  = & \   a_2= B_2 \nonumber\\
 0 = & \ a_1 + B_1 + 2 b_2 \\
  0 = & \ B_0+ b_1.   \nonumber
\end{align}
\noindent $\underline{\bullet \hbox{ For  } f=\xi_I, \hbox{ with
}|I|\geq 3 \hbox{ or } f\in B_{\so(n)} :}$
\begin{align} \label{eq:c3}
 a_1  = &  - b_2 \nonumber\\
 a_0 = &  - b_1 \\
  0 = & \  b_0.   \nonumber
\end{align}

\begin{lemma} \label{lem:a2} The following conditions hold in a
singular vector:
\begin{enumerate}
    \item If $|I|\neq n$, $v_{I,2}=0$.
    \item  If $|I|\leq n-3$, $v_{I,1}=0$.
    \item  If $|I|\leq n-5$, $v_{I,0}=0$.
\end{enumerate}
\end{lemma}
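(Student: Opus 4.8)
The plan is to read off each $v_{I,k}$ from whichever of the relations (\ref{eq:c1})--(\ref{eq:c3}) isolates the coefficient $a_k$, since $a_k$ is the only one of $a_k,b_k,B_k,C_k$ that is \emph{diagonal} in the following sense. From Theorem~\ref{th:action-dual} one has
\[
a_k=\sum_I(-1)^{\frac{|f|(|f|+1)}{2}+|f||I|}\,(|f|-2)\,f\xi_I\otimes v_{I,k},
\]
and for a fixed $f=\xi_J$ the elements $\{\xi_J\xi_I:I\cap J=\emptyset\}$ are, up to sign, the distinct basis monomials $\xi_{J\cup I}$ and hence linearly independent. So whenever $a_k(\xi_J)=0$ and $|J|\neq 2$ (so that $|f|-2\neq0$), we may conclude $v_{I,k}=0$ for \emph{every} $I$ disjoint from $J$. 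Choosing $J\subseteq I^c$ of a prescribed size $s$ then annihilates a targeted $v_{I,k}$, and such a $J$ exists precisely when $|I^c|=n-|I|\geq s$, i.e. $|I|\leq n-s$. The whole argument is thus a matter of producing, at each level $k$, the smallest $s$ for which $a_k(\xi_J)=0$ holds for $|J|=s$.

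For part (1) this is immediate: the middle line of (\ref{eq:c2}) gives $a_2=0$ for all $f=\xi_I$ with $|I|\geq1$. Taking $f=\xi_j$ yields $a_2(\xi_j)=\sum_I(-1)^{|I|}\xi_j\xi_I\otimes v_{I,2}=0$, whence $v_{I,2}=0$ for every $I$ with $j\notin I$. Letting $j$ range over $I^c$, which is nonempty exactly when $|I|\neq n$, proves (1).

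For parts (2) and (3) the mechanism is the same, but $a_k$ is now accessible only through the relations $a_1=-b_2$ and $a_0=-b_1$ of (\ref{eq:c3}) (valid for $|f|\geq3$), so I must first show the source terms $b_2,b_1$ vanish. The key observation is that every summand of $b_k$ is a product of total Grassmann degree $|f|+|I|-2$ (both $(\p_if)(\p_i\xi_I)$ and $(\p_r\p_sf)\xi_I$ have this degree), so it is forced to be zero once this exceeds $n$. In part (2), by (1) the only surviving $v_{I,2}$ has $|I|=n$; for $|f|\geq3$ the degree $|f|+n-2\geq n+1$ kills every term, so $b_2=0$ and therefore $a_1(\xi_J)=0$ for $|J|=3$, which by the extraction principle gives $v_{I,1}=0$ for all $|I|\leq n-3$. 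In part (3), by (2) the surviving $v_{I,1}$ have $|I|\geq n-2$; here $|f|=3$ is \emph{not} enough (for $|f|=3,\ |I|=n-2$ the degree is only $n-1$), so I take $|f|=5$, forcing the degree $\geq n+1$ and hence $b_1=0$. Then $a_0(\xi_J)=0$ for $|J|=5$, giving $v_{I,0}=0$ for all $|I|\leq n-5$.

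The only genuine subtlety --- and precisely the reason the three thresholds are $n$, $n-3$, $n-5$ rather than all equal --- is this degree count pinning down the minimal admissible $|f|$ ($1$, $3$, $5$ respectively): one needs $|f|$ large enough that every surviving $b$-term overshoots degree $n$, yet small enough (together with $|f|-2\neq0$ and $|I^c|\geq|f|$) that a disjoint $J$ of the right size still exists inside $I^c$. I expect the main care to be in tracking exactly which higher $v_{I,k'}$ remain nonzero as input to each $b$-term and in verifying the degree inequalities; the numerous sign factors are irrelevant, since only linear independence of the monomials $\xi_J\xi_I$ is used.
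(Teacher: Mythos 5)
Your proposal is correct and follows essentially the same route as the paper: part (1) from $a_2=0$ in (\ref{eq:c2}), and parts (2), (3) from $a_1=-b_2$ and $a_0=-b_1$ in (\ref{eq:c3}) after killing $b_2$ (resp.\ $b_1$) via the same Grassmann-degree count $|f|+|I|-2>n$, with the same thresholds $|f|\geq 3$ and $|f|\geq 5$. The only (harmless) divergence is the extraction step: you fix $|J|=1,3,5$ and use linear independence of the monomials $\xi_{J\cup I}$, whereas the paper takes $f=\xi_{I_0^c}$ for a minimal counterexample $I_0$ (and must therefore treat $|I_0^c|=2$ separately in part (1), a case your choice of $|J|=1$ avoids); note also that $a_2=0$ is the first, not the middle, line of (\ref{eq:c2}).
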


\begin{proof} (1) Using (\ref{eq:c2}), we have $a_2=0$ if
$f=\xi_J$ with $|J|\geq 1$, that is
$$
0=\sum_I \ (-1)^{\frac{|J|(|J|+1)}{2}+ |J||I|} \ (|J|-2)  (\xi_J
\xi_I \otimes v_{I,2}).
$$
Now, suppose there exists $I$ such that $v_{I,2}\neq 0$ with
$|I|\leq n-1$. Let $I_0$ be one set of minimal length with this
property. Then
$$
0=a_2(f) =\sum_{|I|\geq |I_0|} \ (sgn)_{I,f} (|f|-2)
(f\xi_I\otimes v_{I,2}).
$$
Then take $f=\xi_{I_0^c}$ if $|I_0^c|\neq 2$ (where from now on
$A^c$ denote the complement of $A$ in $\{1,\dots , n\}$), and take
$f=\xi_{i_0}$ for a fixed $i_0\notin I_0$ if $|I_0^c|=2$. Then, we
compute $a_2(f)$ with this choice of $f$, obtaining
$$
0=(sgn) (|I_0^c|-2)\ \xi_*\otimes v_{I_0,2}, \qquad \hbox{ if
$|I_0^c|\neq 2$;}
$$
and, if $|I_0^c|=2$, we have
$$
  0=  (sgn) \ \xi_{i_0}\xi_{I_0}\otimes v_{I_0,2}+ (sgn)
    \ \xi_*\otimes v_{I_0\cup \{i_1\}, 2} + \sum_{i\in I_0}
    (sgn)\ \xi_{\{i\}^c}\otimes v_{I_0\cup\{i_1\}\backslash\{i\}, 2}
$$
where $i_1$ satisfies $J_0\cup\{i_0,i_1\}=\{1,\dots ,n\}$, and
$\xi_*=\xi_1\dots\xi_n$ as before. Hence $v_{I_0,2}=0$, finishing
the proof of (1).

(2) Using (1), observe that for $f=\xi_I$ with $|I|\geq 3$, we
have
$$
b_2(f)=\sum_{j=1}^n (sgn)_{j,f} (\p_j f)(\p_j \xi_*)\otimes
v_{*,2} - \sum_{r<s} (\p_r \p_s f)\xi_*\otimes F_{rs} v_{*,2}=0.
$$
Therefore, using (\ref{eq:c3}), we get $a_1(\xi_I)=0$ for $|I|\geq
3$.

Now, suppose there exist $J$ such that $v_{J,1}\neq 0$ with
$|J|\leq n-3$, and take $J_0$ with minimal length satisfying this
property. Then, since $|J_0^c|\geq 3$, we have
$$
0=a_1(\xi_{J_0^c})= \sum_{|J|\geq |J_0|} (sgn) (|J_0^c|-2)
\xi_{J_0^c}\xi_J\otimes v_{J,1}= K \ \xi_*\otimes v_{J_0,1}, \quad
K\neq 0
$$
proving that $v_{J_0,1}=0$.

(3) Since $v_{J,1}=0$ for $|J|\leq n-3$ (by the previous proof),
it is easy to see that $b_1(\xi_I)=0$ if $|I|\geq 5$. Then, by
(\ref{eq:c3}), we have that $a_0(\xi_I)=0$ if $|I|\geq 5$. Hence,
$v_{J,0}=0$ if $|J|\leq n-5$, finishing the proof.
\end{proof}

\

After this lemma, we have that any singular vector have this form:
$$
\vec{m}= \p^2 \ \xi_*\otimes v_{*,2}\  +\  \p \ \sum_{|I|\geq n-2}
\xi_I\otimes v_{I,1} +\sum_{|I|\geq n-4} \xi_I\otimes v_{I,0}.
$$
Now, we shall continue with more reduction lemmas:

\begin{lemma} \label{lem:bb} If $n\geq 3$, then $v_{*,2}=0$.
\end{lemma}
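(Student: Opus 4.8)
The plan is to feed the reduced form
$\vec{m}= \p^2\,\xi_*\otimes v_{*,2} + \p\sum_{|I|\geq n-2}\xi_I\otimes v_{I,1} + \sum_{|I|\geq n-4}\xi_I\otimes v_{I,0}$
into the singular‑vector equations (\ref{eq:c1})--(\ref{eq:c3}) and to squeeze $v_{*,2}$ to zero. After Lemma~\ref{lem:a2} the only level‑$2$ coefficient left is $v_{*,2}$ (attached to $\p^2\,\xi_*$), so all of $a_2,b_2,B_2,C_2$ are built from $v_{*,2}$ alone via Theorem~\ref{th:action-dual}. Since $\xi_i\xi_*=0$ and $f\xi_*=0$ for $|f|\ge 1$, a direct inspection gives $C_2(f)\equiv 0$, and $a_2(f)=B_2(f)=0$ for all $|f|\ge 1$, whereas for $f=1$ one finds $a_2(1)=-2\,\xi_*\otimes v_{*,2}$ and $B_2(1)=\xi_*\otimes E_{00}v_{*,2}$. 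Thus the genuinely informative instances of $f$ are $f=1$, the quadratic $f\in B_{\so(n)}$, and the linear $f=\xi_a$.

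First I would read off the two ``pure'' constraints on $v_{*,2}$. Evaluating the relation $a_2=-B_2$ of (\ref{eq:c1}) at $f=1$ yields $E_{00}v_{*,2}=2\,v_{*,2}$; because $E_{00}$ is central in $\cso(n)$ and acts on the irreducible module $F$ by the scalar $\mu_0$, this forces $\mu_0=2$ as soon as $v_{*,2}\neq 0$. Next, for $f\in B_{\so(n)}$ the coefficient $a_1(f)$ carries the factor $|f|-2=0$ and so vanishes, whence (\ref{eq:c3}) collapses to $b_2(f)=0$. As $\xi_*$ has top degree, the only surviving term of $b_2(\xi_a\xi_b)$ is $-\,\xi_*\otimes F_{ab}v_{*,2}$, so $b_2(f)=\pm\,\xi_*\otimes(f\cdot v_{*,2})$ with $f$ acting through its $\so(n)$‑incarnation. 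Hence $b_2(f)=0$ for every $f\in B_{\so(n)}$ says exactly that $v_{*,2}$ is killed by the nilradical of the Borel, i.e. $v_{*,2}$ is a highest‑weight vector, a multiple of $v_\mu$.

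Having pinned down $\mu_0=2$ and $v_{*,2}=c\,v_\mu$, the remaining task is to exploit the coupling of $v_{*,2}$ to the level‑$1$ data and force $c=0$. I would use two relations. The identity $C_1=a_2$ of (\ref{eq:c1}) at $f=1$ involves only the $|I|=n-2$ part of $v_{\cdot,1}$ and reads $\sum_{i<j}\pm\,F_{ij}v_{\{i,j\}^c,1}=-2\,v_{*,2}$. On the other hand (\ref{eq:c2}) at $f=\xi_a\xi_b$, where again $a_1=0$, gives $B_1(\xi_a\xi_b)=-2\,b_2(\xi_a\xi_b)=\pm 2\,\xi_*\otimes F_{ab}v_{*,2}$; projecting the top‑degree ($\xi_*$) component of the left‑hand side should let me solve each $v_{\{a,b\}^c,1}$ in terms of $F_{ab}v_{*,2}$. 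Substituting back into the first relation then produces an identity in which the quadratic element $\sum_{i<j}F_{ij}^2$, i.e. the $\so(n)$ Casimir, acts on $v_{*,2}=c\,v_\mu$, so a purely numerical equation in the highest weight $\mu$. Together with $\mu_0=2$ and the highest‑weight normalization this over‑determines $\mu$ and, for $n\ge 3$, has no solution with $c\neq 0$, giving $v_{*,2}=0$.

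The main obstacle is this last paragraph. The projection $v_{\cdot,1}\mapsto$ (top‑degree part of $B_1(\xi_a\xi_b)$) also receives contributions from the $v_{I,1}$ with $I\neq\{a,b\}^c$, so cleanly isolating $v_{\{a,b\}^c,1}$ — and then tracking all the parity signs emitted by Theorem~\ref{th:action-dual} so as to identify the resulting scalar correctly as a Casimir eigenvalue — is the delicate bookkeeping on which the argument hinges. I expect the low‑rank constraint $n\ge 3$ to enter precisely at the point where this numerical identity is shown to be incompatible with $c\neq 0$.
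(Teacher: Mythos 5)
Your first half is correct and partially overlaps with the paper's own argument: the identities $C_2=0$ and $a_2(f)=B_2(f)=0$ for $|f|\ge 1$, the relation $E_{00}v_{*,2}=2v_{*,2}$ from $f=1$, the observation that $b_2(f)=0$ for $f\in B_{\so(n)}$ forces $v_{*,2}$ to be a highest weight vector, and above all the relation $2v_{*,2}=-\sum_{i<j}(-1)^{i+j}F_{ij}\big(v_{\{i,j\}^c,1}\big)$ obtained from $a_2(1)=C_1(1)$ --- this last identity is exactly the paper's equation (\ref{eq:ss}) and is the half of the argument you share with it. (The weight and highest-weight observations, while true, turn out not to be needed.)

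The gap is in your endgame, and you have correctly located it yourself. The relation $B_1(\xi_a\xi_b)=-2\,b_2(\xi_a\xi_b)$ does \emph{not} let you ``solve each $v_{\{a,b\}^c,1}$ in terms of $F_{ab}v_{*,2}$'': the $\xi_*$-component of $B_1(\xi_a\xi_b)$ is of the form $\pm E_{00}v_{\{a,b\}^c,1}+\sum_{j\ne a,b}(\pm)F_{aj}v_{\{b,j\}^c,1}+\sum_{j\ne a,b}(\pm)F_{bj}v_{\{a,j\}^c,1}$, a genuinely coupled linear system in the level-one vectors, and you offer no way to invert it nor any computation showing that the hoped-for ``Casimir identity'' is incompatible with $c\neq 0$; as written the proof does not reach the conclusion. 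The paper closes the argument by a lighter move that you are missing: take $f=\xi_{i_0}$ of degree \emph{one} in the condition $a_1+B_1+2b_2=0$ of (\ref{eq:c2}). There $b_2(\xi_{i_0})=(-1)^{i_0+n}\xi_{\{i_0\}^c}\otimes v_{*,2}$, while the $\xi_{\{i_0\}^c}$-component of $B_1(\xi_{i_0})$ involves only the terms $F_{i_0 j}\big(v_{\{i_0,j\}^c,1}\big)$ --- no $E_{00}$ term and no coupling --- and $a_1(\xi_{i_0})$ contributes nothing to that component. This yields $2v_{*,2}=-\sum_{j\ne i_0}(-1)^{i_0+j}F_{i_0 j}\big(v_{\{i_0,j\}^c,1}\big)$ for \emph{every} $i_0$; summing over $i_0=1,\dots,n$ counts each unordered pair twice, so comparison with (\ref{eq:ss}) gives $2n\,v_{*,2}=4v_{*,2}$, hence $v_{*,2}=0$ for $n\ge 3$. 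No inversion of a coupled system and no Casimir computation are needed, and the hypothesis $n\ge 3$ enters transparently as $2n\ne 4$. To repair your proof, replace the $f=\xi_a\xi_b$ step by this degree-one computation.
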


\begin{proof} Using (\ref{eq:c1}), we have $a_2(f)=c_1(f)$ for any
$f$. In particular, taking $f=1$, we have on one hand
$$
a_2(1)= -2 \ \xi_*\otimes v_{*,2},
$$
and, on the other hand
$$
c_1(1)= -\sum_{i<j} \sum_{|J|\geq n-2}  \xi_i\xi_j\xi_J\otimes
F_{ij} v_{J,1}=-\sum_{i<j} \ (-1)^{i+j-1}\  \xi_*\otimes F_{ij}
\big(v_{\{i,j\}^c,1}\big),
$$
since we must take $J=\{i,j\}^c$ and
$\xi_i\xi_j\xi_{\{i,j\}^c}=(-1)^{i+j-1} \xi_*$. Therefore,
\begin{equation} \label{eq:ss}
2 \  v_{*,2} = -\sum_{i<j} \ (-1)^{i+j}\ F_{ij}
\big(v_{\{i,j\}^c,1}\big).
\end{equation}

Now, we shall study condition $a_1 + B_1 + 2 b_2=0$ for $|f|\geq
1$, and compare it with (\ref{eq:ss}). Fix $f=\xi_{i_0}$, and
observe that
\begin{equation} \label{eq:l1}
b_2(f)= (-1)^{1+n} \p_{i_0} \xi_*\otimes v_{*,2}=(-1)^{i_0+n}
\xi_{\{i_0\}^c} \otimes v_{*,2}.
\end{equation}
Then, from the last equation, we  need to pick up the term with
$\xi_{\{i_0\}^c}$ in $a_1(f)$ and $B_1(f)$. Since
$$
a_1(f)=\sum_{|I|\geq n-2} (-1)^{|I|} \xi_{i_0}\xi_I\otimes
v_{I,1},
$$
then, $a_1(f)$ does not have terms without $\xi_{i_0}$. On the
other hand
\begin{align} \label{eq:lll1}
   B_1(f)=\sum_{|I|\geq n-2} (-1)^{|I|+1}  & \xi_{i_0}\xi_I\otimes
E_{00} v_{I,1} + \sum_{i\neq i_0}\sum_{|I|\geq n-2} (-1)^{|I|+1}
\p_i(\xi_{i_0}\xi_i\xi_I)\otimes v_{I,1}\nonumber\\
 &  - \sum_{i\neq j}
\sum_{|I|\geq n-2} (-1)^{|I|+1} (\p_i\xi_{i_0})\xi_j\xi_I\otimes
F_{ij} v_{I,1},
\end{align}
hence, only the last summand of (\ref{eq:lll1}) have the term
$\xi_{\{i_0\}^c}$, and this is possible only if $I=\{j,i_0\}^c,
i=i_0$ and $j\neq i_0$, namely
\begin{align} \label{eq:l5}
  \big( & \hbox{term   $\xi_{\{i_0\}^c}$ in $B_1(f)$} \big)  = -\sum_{j\neq i_0}
   (-1)^{n+1}\xi_j\xi_{\{j,i_0\}^c} \otimes F_{i_0j}(v_{\{j,i_0\}^c,1})\\
 & = -\sum_{j < i_0}
   (-1)^{n+j} \xi_{\{i_0\}^c} \otimes F_{i_0j}(v_{\{j,i_0\}^c,1})
   -
   \sum_{j > i_0}
   (-1)^{n+1 +j} \xi_{\{i_0\}^c} \otimes F_{i_0j}(v_{\{j,i_0\}^c,1}), \nonumber
\end{align}
where we used that $\xi_j\xi_{\{j,i_0\}^c}= (-1)^{j-2}
\xi_{\{i_0\}^c}$ if $i_0<j$, and $\xi_j\xi_{\{j,i_0\}^c}=
(-1)^{j-1} \xi_{\{i_0\}^c}$ if $i_0>j$.  Comparing (\ref{eq:l5})
with (\ref{eq:l1}), we have
\begin{align} \label{eq:l6}
   2 \ v_{*,2} & = - \sum_{j < i_0}
   (-1)^{i_0+j +1}   F_{i_0j}(v_{\{j,i_0\}^c,1}) -
   \sum_{  i_0 < j }
   (-1)^{i_0 +j}  F_{i_0j}(v_{\{j,i_0\}^c,1})\\
 & = - \sum_{j < i_0}
   (-1)^{i_0+j }   F_{j i_0}(v_{\{j,i_0\}^c,1}) -
   \sum_{  i_0 < j }
   (-1)^{i_0 +j}  F_{i_0j}(v_{\{j,i_0\}^c,1}),  \nonumber
\end{align}
where we used in the last equality that  $F_{i_0 j}= -F_{j i_0}$.
Since (\ref{eq:l6}) holds for all $i_0$, we may take the sum over
$i_0=1,\dots , n$ and compare it with (\ref{eq:ss}), obtaining
$$
2n \ v_{*,2} = 4\ v_{*,2},
$$
proving that $v_{*,2}=0$ for all $n\geq 3$.
\end{proof}

\begin{lemma} \label{lem:bbb} If $n\geq 3$, then any singular vector
 in Ind$\,(V)$ have this form:
$$
\vec{m}= \p \ \big(\xi_*\otimes v_{*,1}\big) +  \sum_{|I|\geq n-2}
\xi_I\otimes v_{I,0}.
$$
for certain $v_{*,1},\  v_{I,0}\in V$.
\end{lemma}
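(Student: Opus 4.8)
The plan is to substitute the explicit coefficients $a_k,b_k,B_k,C_k$ of \thref{th:action-dual} into the translated singular-vector equations and run the same kind of cardinality argument used in \leref{lem:a2}. By \leref{lem:a2}(1) and \leref{lem:bb} the degree-two part of $\vec m$ is zero, i.e. $m_2:=\sum_I\xi_I\otimes v_{I,2}=0$, so $a_2=b_2=B_2=C_2=0$. Feeding this into (\ref{eq:c1})--(\ref{eq:c3}) leaves the reduced system: $C_1=0$ and $C_0+B_1=0$ for all $f$; $a_1+B_1=0$ and $B_0+b_1=0$ for $f=\xi_I$ with $|I|\ge1$; and $a_1=0$, $a_0+b_1=0$, $b_0=0$ for $f=\xi_I$ with $|I|\ge3$ or $f\in B_{\so(n)}$. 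Starting from the shape $\vec m=\p\sum_{|I|\ge n-2}\xi_I\otimes v_{I,1}+\sum_{|I|\ge n-4}\xi_I\otimes v_{I,0}$ obtained after \leref{lem:bb}, I must establish exactly two vanishing statements: $v_{I,1}=0$ for $|I|\in\{n-2,n-1\}$, and $v_{I,0}=0$ for $|I|\in\{n-4,n-3\}$.

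For the degree-one part I would isolate each $v_{I,1}$ by extracting the coefficient of a top monomial ($\xi_*$ or $\xi_{\{i\}^c}$) in the appropriate identity. Using $C_1=0$ at $f=\mathbf 1$ together with $B_1=0$ at $f\in B_{\so(n)}$ and reading off the $\xi_*$-coefficient yields linear relations among the family $\{v_{\{k,l\}^c,1}\}$ (indexed by $2$-subsets, $|I|=n-2$) involving only $E_{00}$ and the $F_{kl}$; since the chosen $f$ run over a spanning set of the $\so(n)$-Borel and act as raising operators on this family, these relations force $v_{I,1}=0$ for $|I|=n-2$. The case $|I|=n-1$ is seen by neither $C_1$ nor $B_1$ on the admissible $f$, so here I would use $a_1+B_1=0$ at a single variable $f=\xi_{i_0}$, whose $\xi_*$-coefficient couples $v_{\{i_0\}^c,1}$, $E_{00}v_{\{i_0\}^c,1}$ and $\sum_{j\ne i_0}F_{i_0 j}v_{\{j\}^c,1}$; letting $i_0$ range over all indices produces a square $\so(n)$-linear system on the vector-type family $\{v_{\{k\}^c,1}\}$ which one solves to conclude $v_{I,1}=0$ for $|I|=n-1$.

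Once $m_1$ is reduced to $\xi_*\otimes v_{*,1}$, the degree-zero part becomes cleaner, since then $b_1$ and $B_1$ feed back only the single vector $v_{*,1}$. I would then exploit $b_0=0$ (for $|f|\ge3$ or $f\in B_{\so(n)}$) together with $a_0+b_1=0$ and $B_0+b_1=0$, running the minimal-cardinality argument of \leref{lem:a2}(3): take $I_0$ of least length with $v_{I_0,0}\ne0$ among $|I|\le n-3$, probe $b_0$ with $f=\xi_{I_0^c}$ (replacing it by a lower-degree test function when $|I_0^c|$ is too small, exactly as in \leref{lem:a2}), and extract the monomial isolating $\sum_{r<s}(\p_r\p_s f)\xi_{I_0}\otimes F_{rs}v_{I_0,0}$. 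This forces $v_{I_0,0}=0$ and hence $v_{I,0}=0$ for $|I|\in\{n-4,n-3\}$, giving the asserted form.

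The main obstacle is twofold. First, the strongest condition on the degree-one part, $a_1=0$, is vacuous for $\vec m$ of the present shape: for $|f|\ge3$ the product $\xi_f\xi_I$ already vanishes when $|I|\ge n-2$, and the prefactor $|f|-2$ kills the Borel case, so $m_1$ can only be controlled through the weaker identities $C_1=0$ and $a_1+B_1=0$ at low-degree $f$, and the component $|I|=n-1$ is reachable only through $a_1+B_1=0$ at $f=\xi_{i_0}$, forcing me to solve a genuine $\so(n)$-linear system rather than to invoke a clean cancellation. Second, the identities $C_0+B_1=0$ and $B_0+b_1=0$ couple the degree-zero and degree-one parts, so the two tasks are not independent; the way around this is to clear $m_1$ first and only then close the argument on $m_0$. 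Throughout, the principal bookkeeping difficulty will be keeping track of the signs generated by (\ref{eq:del})--(\ref{eq:del2}) and (\ref{eq:A}) when reordering the Grassmann variables.
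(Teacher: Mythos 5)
There is a genuine gap, and it is structural rather than a matter of bookkeeping. Your plan is to kill the components $v_{I,1}$ with $|I|\in\{n-2,n-1\}$ using only the conditions that stay inside the $\p^{1}$-part of $\vec m$: $C_1(\mathbf 1)=0$, $B_1(f)=0$ for $f\in B_{\so(n)}$, and the $\xi_*$-coefficient of $a_1(\xi_{i_0})+B_1(\xi_{i_0})=0$. But each of these is an eigenvalue equation of split-Casimir type: e.g.\ your system for the family $\{v_{\{k\}^c,1}\}$ reads $(E_{00}\pm 1)v_{\{i_0\}^c,1}=\sum_{j\ne i_0}(\pm)F_{i_0j}\,v_{\{j\}^c,1}$, which is exactly the shape of equation (\ref{I}) governing the \emph{genuine} degree $-1$ singular vectors of \thref{sing-vect} (families (a) and (b)), shifted by a constant. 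Such systems admit nonzero solutions whenever $\mu_0$ hits an eigenvalue of the relevant operator on $V\otimes\CC^n$ (resp.\ $V\otimes\La^2\CC^n$), so "letting $i_0$ range and solving" cannot conclude $v_{I,1}=0$ for all $V$; the assertion that the Borel elements "act as raising operators on this family, [so] these relations force $v_{I,1}=0$" is not a deduction. Indeed the analogous decoupled system in \leref{5.5} has a nontrivial solution for $n=3$, and ruling it out for $n\ge 4$ there requires the extra relations (\ref{5.b.2.A})--(\ref{5.b.2.C}) and a weight computation --- none of which has an analogue in your reduced setting.

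The mechanism the paper actually uses is the one you explicitly set aside: the coupling between the $\p^1$- and $\p^0$-parts. From $b_1(\xi_a)+B_0(\xi_a)=0$ one extracts the identity $H(a)=0$ (resp.\ $L(a)=0$), which writes $v_{\{b,c\}^c,1}$ (resp.\ $v_{\{b\}^c,1}$) as a sum of $F_{al}$-translates of the \emph{degree-zero} coefficients $v_{J,0}$; and the relation $a_1(f)=C_0(f)$ for $|f|\in\{1,2\}$ (obtained by combining (\ref{eq:c1}) with (\ref{eq:c2}), where $a_1(\xi_b\xi_c)=0$ because of the factor $|f|-2$) supplies exactly the vanishing of those $F$-sums needed so that $\sum_a\xi_a\cdot H(a)$ collapses to $(n-2)\,\xi_{\{b,c\}^c}\otimes v_{\{b,c\}^c,1}=0$, and similarly $(n-3)\,\xi_{\{b\}^c}\otimes v_{\{b\}^c,1}=0$. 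Your stated strategy --- "clear $m_1$ first and only then close the argument on $m_0$" --- inverts this: one must use information about $m_0$ to clear $m_1$. A secondary omission: the factor $n-3$ degenerates at $n=3$, where Claim~2 needs the separate input $b_0(\xi_1\xi_2\xi_3)=0$ to show $F_{ij}v_{\emptyset,0}=0$; your proposal does not anticipate any special treatment of $n=3$. Your final step (killing $v_{I,0}$ for $|I|\le n-3$ via $a_0(f)=-b_1(f)=0$ for $|f|\ge3$ and a minimal-cardinality probe) does agree with the paper and is fine once the degree-one part has been handled correctly.
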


\begin{proof} $\underline{Claim\ \ 1:}$ For all $b\neq c$,
$v_{\{b,c\}^c,1}=0$.

\

\noindent {\it Proof of Claim 1:} Combining (\ref{eq:c1}) and
(\ref{eq:c2}), we have $a_1(f)=c_0(f)$ for all $|f|\geq 1$, since
$b_2(f)=0$ by the previous lemma. Let us fix $b\neq c$. We may
suppose $b<c$. Consider $f=\xi_b\xi_c$, then (obviously)
$a_1(f)=0$. Hence
\begin{align*} \label{eq:ccc}
 0= c_0(\xi_b\xi_c) & = -\sum_{i<j} \sum_{|I|=n-4} \
 \xi_b\xi_c\xi_i\xi_j\xi_I\otimes F_{ij} \, (v_{I,0}) \\
 & = -\sum_{i<j ; i,j\neq b,c}   \
 \xi_b\xi_c\xi_i\xi_j\xi_{\{b,c,i,j\}^c} \otimes F_{ij} \,
 \big(v_{\{b,c,i,j\}^c,0}\big) \nonumber
\end{align*}
which may be rewritten as follows
\begin{equation} \label{eq:ccc}
0=  \sum_{i<j ; i,j\neq b,c}   \
 \xi_i\xi_j\xi_{\{b,c,i,j\}^c} \otimes F_{ij} \,
 \big(v_{\{b,c,i,j\}^c,0}\big).
\end{equation}
On the other hand, fix $a\neq b,c$. By (\ref{eq:c1}), $b_1(\xi_a)+
B_0(\xi_a)=0$. Observe that
\begin{align} \label{eq:B}
 b_1(\xi_a) & = \sum_{|I|\geq n-2} (-1)^{1+|I|} \p_a \xi_I\otimes v_{I,1}\\
 & =\sum_{j<k : j,k\neq a} (-1)^{n-1} \big(\p_a \xi_{\{j,k\}^c}\big)\otimes
  v_{\{j,k\}^c,1}  \nonumber\\
  &  \hskip 2cm + \sum_{i\neq a} (-1)^{n} \big(\p_a \xi_{\{i\}^c}\big)\otimes
  v_{\{i\}^c,1} + (-1)^{n+1}\ \p_a\xi_*\otimes v_{*,1}.  \nonumber
\end{align}
Now,
\begin{equation} \label{eq:B1}
\bigg(\hbox{term }\xi_{\{a,b,c\}^c} \hbox{ of } b_1(\xi_a)\bigg)=
(-1)^{n-1} \big(\p_a \xi_{\{b,c\}^c}\big)\otimes v_{\{b,c\}^c,1}.
\end{equation}
Similarly,
\begin{align} \label{eq:B2}
 B_0(\xi_a) & = \sum_{|I|\geq n-4} (-1)^{1+|I|} \xi_a\xi_I\otimes E_{00} v_{I,0}
 + \sum_i\sum_{|I|\geq n-4} (-1)^{1+|I|} \p_i(\xi_a\xi_i\xi_I)\otimes v_{I,0} \nonumber \\
 &   - \sum_{|I|\geq n-4} \sum_{i\neq j} (-1)^{1+|I|}
 (\p_i\xi_a)\xi_j\xi_I\otimes F_{ij}(v_{I,0}).
\end{align}
Obviously the term $\xi_{\{a,b,c\}^c}$ will appear in
(\ref{eq:B2}) only in the last sum, for certain values of $I$,
namely
\begin{equation} \label{eq:B3}
\bigg(\hbox{term }\xi_{\{a,b,c\}^c} \hbox{ of } B_0(\xi_a)\bigg)=
-\sum_{l\neq a,b,c} (-1)^{n-1} \xi_l \xi_{\{a,b,c,l\}^c} \otimes
F_{a,l} \big(v_{\{a,b,c,l\}^c,0}\big).
\end{equation}
Using (\ref{eq:B1}), (\ref{eq:B3}) and the fact that
$0=b_1(\xi_a)+ B_0(\xi_a)$, we get
\begin{equation*}
0= H(a):= \big(\p_a\xi_{\{b,c\}^c}\big) \otimes v_{\{b,c\}^c,1} -
\sum_{l\neq a,b,c} \xi_l\xi_{\{a,b,c,l\}^c} \otimes F_{a,l}
\big(v_{\{a,b,c,l\}^c,0}\big).
\end{equation*}
Now, moving $a$, we may take
\begin{align*}
 0 & = \sum_{ a\neq b,c} \xi_a \cdot H(a) \\
 &  =  \sum_{a\neq b,c} \xi_a\big(\p_a\xi_{\{b,c\}^c}\big)
 \otimes v_{\{b,c\}^c,1} -
\sum_{  a\neq b,c} \  \sum_{l\neq a,b,c}
\xi_a\xi_l\xi_{\{a,b,c,l\}^c} \otimes F_{a,l}
\big(v_{\{a,b,c,l\}^c,0}\big)\\
  &  = \sum_{a\neq b,c} \xi_a\big(\p_a\xi_{\{b,c\}^c}\big)
 \otimes v_{\{b,c\}^c,1} -
\sum_{  a<l ; a,l \neq b,c} \xi_a\xi_l\xi_{\{a,b,c,l\}^c} \otimes
F_{a,l} \big(v_{\{a,b,c,l\}^c,0}\big) \\
 &  \hskip 2cm -
\sum_{  l<a ; a,l \neq b,c} \xi_a\xi_l\xi_{\{a,b,c,l\}^c} \otimes
F_{a,l} \big(v_{\{a,b,c,l\}^c,0}\big)
\end{align*}
and using that $\xi_a \xi_l= -\xi_l\xi_a$, $F_{a,l}=- F_{l,a}$ and
$\xi_a (\p_a\xi_{\{b,c\}^c})= \xi_{\{b,c\}^c}$, the last equation
becomes
\begin{align*}
 0  & = \sum_{a\neq b,c} \xi_{\{b,c\}^c}
 \otimes v_{\{b,c\}^c,1} - \ 2\
\bigg(\sum_{  a<l ; a,l \neq b,c} \xi_a\xi_l\xi_{\{a,b,c,l\}^c}
\otimes
F_{a,l} \big(v_{\{a,b,c,l\}^c,0}\big) \bigg)\\
 & = \sum_{a\neq b,c} \xi_{\{b,c\}^c}
 \otimes v_{\{b,c\}^c,1} \hskip 6cm (\hbox{using } (\ref{eq:ccc})) \\
 & =  (n-2) \big(\xi_{\{b,c\}^c}
 \otimes v_{\{b,c\}^c,1}\big),
\end{align*}
proving Claim 1 for $n\geq 3$.

\

\noindent $\underline{\hbox{Claim 2:}}$ For all $b$,
$v_{\{b\}^c,1}=0$.

\

\noindent {\it Proof of Claim 2:} The idea is similar to the proof
of Claim 1, but taking other monomial terms.

Fix $b$. As in Claim 1, we have $a_1(f)=C_0(f)$ for all $|f|\geq
1$. In particular, since
$$
C_0(\xi_b)= -\sum_{|I|\geq n-4} \sum_{i<j} (-1)^{|I|+1}
\xi_b\xi_i\xi_j\xi_I\otimes F_{ij}(v_{I,0}),
$$
we have
$$
\bigg(\hbox{term }\xi_{*} \hbox{ of } C_0(\xi_b)\bigg)=
-\sum_{i<j; i,j\neq b} (-1)^{n} \xi_b\xi_i\xi_j
\xi_{\{b,i,j\}^c}\otimes F_{ij} (v_{\{b,i,j\}^c,0}),
$$
and it is easy to see that
$$
\bigg(\hbox{term }\xi_{*} \hbox{ of } a_1(\xi_b)\bigg)= (-1)^{n-1}
\xi_b\xi_{\{b\}^c}\otimes  v_{\{b\}^c,1}.
$$
Therefore
\begin{equation} \label{eq:D}
 \xi_{\{b\}^c}\otimes  v_{\{b\}^c,1}=\sum_{i<j; i,j\neq b}
 \xi_i\xi_j \xi_{\{b,i,j\}^c}\otimes F_{ij}
(v_{\{b,i,j\}^c,0}).
\end{equation}
Now, take $a\neq b$. Using (\ref{eq:B}), we obtain
\begin{equation} \label{eq:CC}
\bigg(\hbox{term }\xi_{\{a,b\}^c} \hbox{ of } b_1(\xi_a)\bigg)=
(-1)^{n} \big(\p_a \xi_{\{b\}^c}\big)\otimes v_{\{b\}^c,1}.
\end{equation}
Similarly, using (\ref{eq:B2})
\begin{equation} \label{eq:CC1}
\bigg(\hbox{term }\xi_{\{a,b\}^c} \hbox{ of } B_0(\xi_a)\bigg)=
-\sum_{l\neq a,b} (-1)^{n-2} \xi_l \xi_{\{a,b,l\}^c} \otimes
F_{a,l} \big(v_{\{a,b,l\}^c,0}\big).
\end{equation}
Using (\ref{eq:CC}), (\ref{eq:CC1}) and the fact that
$0=b_1(\xi_a)+ B_0(\xi_a)$, we get
\begin{equation*}
0= L(a):= \p_a\xi_{\{b\}^c} \otimes v_{\{b\}^c,1} - \sum_{l\neq
a,b} \xi_l\xi_{\{a,b,l\}^c} \otimes F_{a,l}
\big(v_{\{a,b,l\}^c,0}\big).
\end{equation*}
Now, moving $a$, we may take
\begin{align*}
 0 & = \sum_{ a\neq b} \xi_a \cdot L(a) \\
 & = \sum_{ a\neq b} \xi_a (\p_a\xi_{\{b\}^c}) \otimes v_{\{b\}^c,1}
  - \sum_{ a\neq b} \  \sum_{l\neq
a,b} \xi_a \xi_l\xi_{\{a,b,l\}^c} \otimes F_{a,l}
\big(v_{\{a,b,l\}^c,0}\big) \\
 & =\sum_{ a\neq b} \xi_{\{b\}^c} \otimes v_{\{b\}^c,1}
  - \sum_{ a < l ; a,l\neq b}
  \xi_a \xi_l\xi_{\{a,b,l\}^c} \otimes F_{a,l}
\big(v_{\{a,b,l\}^c,0}\big) \\
 & \hskip 2.2cm - \sum_{ l < a ; a,l\neq b}
  \xi_a \xi_l\xi_{\{a,b,l\}^c} \otimes F_{a,l}
\big(v_{\{a,b,l\}^c,0}\big)\\
 & = (n-1) \big(\xi_{\{b\}^c} \otimes v_{\{b\}^c,1}\big)
  - \ 2 \bigg(\sum_{ a < l ; a,l\neq b}
  \xi_a \xi_l\xi_{\{a,b,l\}^c} \otimes F_{a,l}
\big(v_{\{a,b,l\}^c,0}\big)\bigg)\\
 & = (n-3) \big(\xi_{\{b\}^c} \otimes v_{\{b\}^c,1}\big) \hskip 5cm
 (\hbox{using } (\ref{eq:D})).
\end{align*}
Hence $v_{\{b\}^c,1} =0$ for all $b$ and $n\geq 4$.

If $n=3$, condition $a_1(\xi_b)=C_0(\xi_b)$ give us
\begin{equation}\label{n=333}
    v_{\{b\}^c,1}=F_{ij} (\ v_{\emptyset,0}),
\end{equation}
where $\{b\}^c=\{i,j\}$ and $i<j$. On the other hand, by taking
the term $\xi_b$ of $b_0(\xi_1\xi_2\xi_3)$ and using that
$b_0(\xi_1\xi_2\xi_3)=0$, we obtain that $F_{ij} \
(v_{\emptyset,0})=0$ for all $i<j$, which combined with
(\ref{n=333}) produce the desired result, finishing the proof of
Claim 2.

\

Finally, in order to complete the proof of this lemma, we need to
study the vectors $v_{I,0}$. Since $v_{I,1}=0$ if $|I|\leq n-1$,
it is clear that $b_1(f)=0$ for $|f|\geq 3$. Therefore, using
condition (\ref{eq:c3}), we have $a_0(f)=0$ if $|f|\geq 3$, which
immediately gives us that $v_{I,0}=0$ if $|I|=n-3$ or $n-4$,
completing the proof.
\end{proof}

\

From the previous lemma, any singular vector have this form:
$$
\vec{m}= \p \ \big(\xi_*\otimes v_{*,1}\big) +  \sum_{|I|\geq n-2}
\xi_I\otimes v_{I,0}.
$$
Using (\ref{eq:n1}), (\ref{eq:del}) and (\ref{eq:indu}), the
$\mathbb{Z}$-gradation in $K(1,n)_+$, translates into a
$\mathbb{Z}_{\leq 0}$-gradation in Ind$(V)$:

\begin{align*}
\mathrm{Ind}(V) & \simeq  \Lambda(1,n)\otimes V\simeq
\cp\otimes\La(n)\otimes V \\ & \simeq \underbrace{\mathbb{C}\
1\otimes V}\ \oplus\ \underbrace{\mathbb{C}^n\otimes V}\ \oplus\
\underbrace{(\mathbb{C}\ \p \otimes V\oplus
\Lambda^2(\mathbb{C}^n)\otimes V)}\ \oplus \cdots\\
& \qquad \ \mathrm{deg }\ 0\quad \ \ \quad \hbox{deg -1}\ \qquad
\qquad \quad \ \ \hbox{deg -2}\
\end{align*}

\noindent Therefore, in the previous lemmas, we have proved that
any singular vector must have degree -1 or -2.

Recall that in Theorem \ref{th:action-dual}, we considered the
Hodge dual of the natural bases in order to simplify the formula
of the action. Hence, any singular vector must have one of the
following forms:

\

\noindent{1.} $\vec{m}= \p \ \big(\xi_*\otimes v_{*}\big) +
\sum_{i<j} \xi_{\{i,j\}^c}\otimes v_{\{i,j\}}.$

\vskip .3cm

 \noindent{2.} $\vec{m}=  \sum_{i}
\xi_{\{i\}^c}\otimes v_{i}.$

\

The next lemma study the first one.



\begin{lemma}\label{5.5} If $n> 3$, the first case is not
possible. If $n=3$, then
$$
\vec{m}= \p \ \big(\xi_*\otimes v_{\mu}\big) +\ i\
\xi_{\{1,2\}^c}\otimes v_{\mu}-\ 2 \xi_{\{2,3\}^c}\otimes
F_{2,3}v_{\mu}+\ 2\ \xi_{\{1,3\}^c}\otimes F_{1,3}v_{\mu}
$$
is a singular vector, where $v_\mu$ is  a highest weight vector of
the $\cso(3)$-module of highest weight
$\mu=(\frac{3}{2};\frac{1}{2})$.
\end{lemma}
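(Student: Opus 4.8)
The plan is to feed the degree $-2$ ansatz (the first case listed after \leref{lem:bbb})
\[
\vec m=\p\big(\xi_*\otimes v_{*,1}\big)+\sum_{i<j}\xi_{\{i,j\}^c}\otimes v_{\{i,j\}}
\]
into the singular-vector equations \eqref{eq:c1}--\eqref{eq:c3} and read off the linear relations among $v_{*,1}$ and the $v_{\{i,j\}}\in V$. Since there is no $\p^2$-term, $a_2=b_2=B_2=C_2=0$, and the three blocks collapse to: $C_1=0$ and $C_0+B_1=0$ for every $f$; $a_1+B_1=0$ and $B_0+b_1=0$ for $|f|\ge1$; and $a_1=0$, $a_0+b_1=0$, $b_0=0$ for $|f|\ge3$ or $f\in B_{\so(n)}$.

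First I would dispose of the degree $1$ part. Because it is carried by the top monomial $\xi_*$, every term of $a_1(f),B_1(f),C_1(f)$ contains a factor $f\xi_*$, $f\xi_i\xi_*$ or $\xi_i\xi_j\xi_*$, all of which vanish once $|f|\ge1$ (and $\xi_i\xi_j\xi_*=0$ also kills $C_1(1)$); a glance at the formula of \thref{th:action-dual} then gives $a_1(f)=B_1(f)=C_1(f)=0$ for all such $f$, and likewise $C_0(f)=0$ for $|f|\ge1$ since $\xi_i\xi_j\xi_{\{i,j\}^c}=\pm\xi_*$ forces $f\xi_*$. Hence the only surviving degree $1$ equation is $C_0(1)+B_1(1)=0$, which, using $\xi_i\xi_j\xi_{\{i,j\}^c}=(-1)^{i+j-1}\xi_*$, reads
\[
E_{00}\,v_{*,1}=\sum_{i<j}(-1)^{i+j-1}\,F_{ij}\,v_{\{i,j\}};
\]
call this relation $(\star)$. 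All remaining content is now in the purely degree $0$ conditions $B_0(f)+b_1(f)=0$ $(|f|\ge1)$, $a_0(f)+b_1(f)=0$ and $b_0(f)=0$ (for $|f|\ge3$ or $f\in B_{\so(n)}$), which I would evaluate on $f=\xi_a$, $f=\xi_a\xi_b$ and on the Borel generators $\alpha_{lj},\beta_{lj},\gamma_k$ via \thref{th:action-dual} and the contraction identities of Appendix~A.

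The decisive feature, exactly as in \leref{lem:bbb}, is that summing these degree $0$ relations over the free index produces an overall numerical factor that is a linear function of $n$ multiplying the unknowns. For $n>3$ this factor (of the shape $n-3$, nonzero for $n\ge4$) forces every $v_{\{i,j\}}=0$; then $(\star)$ gives $E_{00}v_{*,1}=0$ and the $|f|\ge3$ conditions $a_0(f)=0$ force $v_{*,1}=0$, so $\vec m=0$ and Case~1 contributes nothing. For $n=3$ the same factor vanishes and the system degenerates: here $|I|=n-2=1$, the $\xi_{\{i,j\}^c}$ are the single $\xi_k$, the family ``$|f|\ge3$'' reduces to $f=\xi_1\xi_2\xi_3$ and $B_{\so(3)}=\langle\gamma_1\rangle$. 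Solving the reduced degree $0$ equations I expect $v_{*,1}=v_\mu$, $v_{\{3\}}=i\,v_\mu$, $v_{\{2\}}=2F_{13}v_\mu$, $v_{\{1\}}=-2F_{23}v_\mu$, and these equations to force $\mu_1=\tfrac12$; feeding them into $(\star)$ and using $F_{12}=-iH_1$ together with the $\so(3)$-Casimir value $(F_{13}^2+F_{23}^2)v_\mu=-\mu_1 v_\mu$ turns $(\star)$ into $\mu_0=3\mu_1=\tfrac32$, the asserted weight. I would finish by substituting this explicit $\vec m$ back into \eqref{eq:c1}--\eqref{eq:c3} to confirm singularity.

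The main obstacle will be the sign and coefficient bookkeeping in the degree $0$ block: tracking the Koszul signs coming from the contractions $\p_{I-\{i,j\}}\xi_I$ while isolating, for each test $f$, the coefficient of every surviving top monomial, so as to exhibit the tell-tale $n-3$ factor cleanly; and, in the $n=3$ case, correctly identifying the $F_{kl}v_\mu$ combinations so that the system closes up to the single weight $\mu=(\tfrac32;\tfrac12)$ rather than collapsing into an inconsistency.
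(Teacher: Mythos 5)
Your reduction of (S1)--(S3) for the degree-$2$-in-$\p$ ansatz is correct and agrees with the paper's: everything collapses to $C_0(1)+B_1(1)=0$, $b_1(f)+B_0(f)=0$ for $|f|=1,2$, $b_1(f)=0$ for $f\in B_{\so(n)}$, and $b_0(f)=0$ for $|f|=3,4$ or $f\in B_{\so(n)}$. The genuine gap is the mechanism you propose for ruling out $n>3$. Summing the degree-zero relations over the free index does \emph{not} produce an $(n-3)$-type factor annihilating the $v_{\{i,j\}}$; what that summation actually yields (sum (\ref{3.1}) over $a$ and compare with (\ref{1})) is the eigenvalue relation $E_{00}v_*=\tfrac n2 v_*$. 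In the paper's proof the $v_{\{a,b\}}$ are \emph{not} killed for $n>3$: combining $b_1(\xi_a)+B_0(\xi_a)=0$, $b_1(\xi_a\xi_b)+B_0(\xi_a\xi_b)=0$ and the three relations coming from $b_0(\xi_a\xi_b\xi_c)=0$ gives $E_{00}v_{\{a,b\}}=\tfrac n2 v_{\{a,b\}}$ together with $2F_{ab}(v_*)=(n-4)(-1)^{a+b+1}v_{\{a,b\}}$ (equations (\ref{Z1})--(\ref{Y1})), so for $n\neq 4$ each $v_{\{a,b\}}$ is a \emph{nonzero} multiple of $F_{ab}(v_*)$. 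The contradiction for $n\geq 5$ is then representation-theoretic, not linear-algebraic: since $v_*$ is annihilated by $B_{\so(n)}$, a further computation forces its weight to satisfy $2\mu_1^2+(n-2)\mu_1+(n-4)=0$ (equation (\ref{lambda1})), whose roots $-1$ and $(4-n)/2$ are both non-dominant for $n\geq 5$; the case $n=4$ requires a separate degenerate argument because there the coefficient $n-4$ vanishes. Your ``factor of shape $n-3$'' heuristic is imported from \leref{lem:bbb}, where the unknowns sit in degree $-1$ and such a summation does close up; here it does not, and nothing in your outline produces the quadratic constraint on $\mu_1$ that is the actual content of the lemma.

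For $n=3$ your expected answer is correct, and it is the same quadratic ($2\mu_1^2+\mu_1-1=0$, roots $-1$ and $\tfrac12$) that forces $\mu_1=\tfrac12$ and hence $\mu_0=\tfrac n2=\tfrac32$; the coefficients $v_{\{1,2\}}=i\,v_\mu$, $v_{\{2,3\}}=-2F_{23}v_\mu$, $v_{\{1,3\}}=2F_{13}v_\mu$ then fall out of $v_{\{a,b\}}=(-1)^{a+b+1}\tfrac{2}{n-4}F_{ab}(v_*)$ with $n=3$ and $F_{12}=-iH_1$. But as written you only assert these identities and the value of $\mu$; the route you sketch (kill the $v_{\{i,j\}}$ for $n>3$, then specialize) would not arrive at them, so the proof as proposed does not go through without replacing its central step.
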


\begin{proof} From now on, we assume that $\vec{m}=\p(\xi_*\otimes
v_*)+ \sum_{i<j} \xi_{\{i,j\}^c}\otimes v_{\{i,j\}}$. Observe that
conditions (\ref{eq:c1}), (\ref{eq:c2}) and (\ref{eq:c3}), clearly
becomes

\

\begin{enumerate}
    \item $b_1(f)=0$, if $f\in B_{\so(n)}$.
    \item  $b_1(f)+B_0(f)=0$, if $|f|=1$ or $2$.
    \item $C_0(f)+B_1(f)=0$, if $f=1$.
    \item $b_0(f)=0$, if $|f|=3,4$ or $f\in B_{\so(n)}$.
\end{enumerate}

\

It is possible to see that they are equivalent to the following
equations in terms of the vectors $v_{\{i,j\}}$ and $v_*$:

\

\noindent $\underline{b_1(f)=0\hbox{, if }f\in B_{\so(n)}}:$
\begin{equation}\label{v*}
\quad B_{\so(n)}\cdot v_*=0.
\end{equation}

\

\noindent $\underline{b_1(\xi_a)+B_0(\xi_a)=0}:$ $\ \ (1\leq a
\leq n)$

\

\begin{equation}\label{3.1}
   - v_*=  \sum_{j< a} \ (-1)^{a+j}\
     F_{ja} (v_{\{j,a\}}) + \sum_{a<j}\  (-1)^{a+j}\
     F_{aj} (v_{\{a,j\}}),
\end{equation}

\noindent for $a<b$:

\begin{equation}\label{3.2}
    0= E_{00}(v_{\{a,b\}}) - v_{\{a,b\}}- \sum_{j<b,\ j\neq a} (-1)^{a+j}
     F_{aj} (v_{\{j,b\}}) + \sum_{b<j} (-1)^{a+j}
     F_{aj} (v_{\{b,j\}})
\end{equation}

\noindent and for $b<a$:

\begin{equation}\label{3.2.b<a}
    0= E_{00}(v_{\{a,b\}}) - v_{\{a,b\}}+ \sum_{j<b} (-1)^{a+j}
     F_{aj} (v_{\{j,b\}}) - \sum_{b<j,\ j\neq a} (-1)^{a+j}
     F_{aj} (v_{\{b,j\}}).
\end{equation}

 \vskip .3cm

\noindent $\underline{b_1(\xi_a\xi_b)+B_0(\xi_a\xi_b)=0}:$ $\ \ (
a<b)$

\

\begin{align}\label{3.4}
   0= & - \ F_{ab} (v_{*})+ (-1)^{a+b}\ E_{00}(v_{\{a,b\}})\nonumber \\
   & -
    \sum_{j<b,\ j\neq a} \ (-1)^{b+j}\
     F_{aj} (v_{\{j,b\}}) + \sum_{b<j}\  (-1)^{b+j}\
     F_{aj} (v_{\{b,j\}})\\
& + \sum_{j<a} \ (-1)^{a+j}\
     F_{bj} (v_{\{a,j\}}) - \sum_{a<j,\ j\neq b}\  (-1)^{a+j}\
     F_{bj} (v_{\{a,j\}}).\nonumber
\end{align}

\  \

\noindent \underline{$C_0(1)+B_1(1)=0$}:
\begin{equation}\label{1}
    0= E_{00}(v_{*})+
    \sum_{i<j} \ (-1)^{i+j}\
     F_{ij} (v_{\{i,j\}}).
\end{equation}

\noindent Finally, in the case of  condition $b_0(f)=0$, if
$|f|=3,4$ or $f\in B_{\so(n)}$, we shall only need the following
equations that are deduced from $b_0(\xi_a\xi_b\xi_c)=0$, with
$a<b<c\, $:

\vskip .2cm

\begin{equation}\label{5.b.2.A}
    0= (-1)^{b+c}\  v_{\{b,c\}}+(-1)^{a+c}\  F_{ab}(v_{\{a,c\}}) -
    (-1)^{a+b}\ F_{ac}(v_{\{a,b\}})
\end{equation}
\begin{equation}\label{5.b.2.B}
    0= (-1)^{a+c}\  v_{\{a,c\}}-(-1)^{b+c}\  F_{ab}(v_{\{b,c\}}) +
    (-1)^{a+b}\ F_{bc}(v_{\{a,b\}})
\end{equation}
\begin{equation}\label{5.b.2.C}
    0= (-1)^{a+b}\  v_{\{a,b\}}+(-1)^{b+c}\  F_{ac}(v_{\{b,c\}}) -
    (-1)^{a+c}\ F_{bc}(v_{\{a,c\}}).
\end{equation}

\

\noindent Now, fix $a<b$, by taking a linear combination of
(\ref{3.2}) and (\ref{3.2.b<a}), we obtain

\begin{align*}
   0= & - 2\  (-1)^{a+b}\ E_{00}(v_{\{a,b\}})+
   2\  (-1)^{a+b}\ v_{\{a,b\}} \nonumber \\
   & +
    \sum_{j<b,\ j\neq a} \ (-1)^{b+j}\
     F_{aj} (v_{\{j,b\}}) - \sum_{b<j}\  (-1)^{b+j}\
     F_{aj} (v_{\{b,j\}})\\
& - \sum_{j<a} \ (-1)^{a+j}\
     F_{bj} (v_{\{a,j\}}) + \sum_{a<j,\ j\neq b}\  (-1)^{a+j}\
     F_{bj} (v_{\{a,j\}})\nonumber
\end{align*}
and, comparing the last four summands with (\ref{3.4}), we obtain
\begin{equation}\label{Z1}
   F_{ab}(v_*)=  \  (-1)^{a+b+1}\ E_{00}(v_{\{a,b\}})+
   2\  (-1)^{a+b}\ v_{\{a,b\}}.
\end{equation}

On the other hand, observe that (\ref{3.4}) can be rewritten as
follows $(a<b)$

\begin{align}\label{3.4.bis}
   0= & - \ F_{ab} (v_{*})+ (-1)^{a+b}\ E_{00}(v_{\{a,b\}})\nonumber \\
   & -
    \sum_{j< a}\  \left[ (-1)^{b+j}\
     F_{aj} (v_{\{j,b\}})-\ (-1)^{a+j}\
     F_{bj} (v_{\{a,j\}})\right]\nonumber\\
   &  -
    \sum_{a<j< b} \left[ (-1)^{b+j}\
     F_{aj} (v_{\{j,b\}})+\ (-1)^{a+j}\
     F_{bj} (v_{\{a,j\}})\right]\\
   &  -
    \sum_{b<j} \ \left[ (-1)^{b+j+1}\
     F_{aj} (v_{\{j,b\}})+\ (-1)^{a+j}\
     F_{bj} (v_{\{a,j\}})\right].\nonumber
\end{align}
Therefore, inserting  (\ref{5.b.2.A}), (\ref{5.b.2.B}) and
(\ref{5.b.2.C}) in the last three summands of (\ref{3.4.bis}), we
have
\begin{equation}\label{Z2}
   F_{ab}(v_*)=  \  (-1)^{a+b}\ E_{00}(v_{\{a,b\}})-
   (n-2)\  (-1)^{a+b}\ v_{\{a,b\}}.
\end{equation}
\vskip .2cm \noindent Hence, using (\ref{Z1}) and (\ref{Z2}), we
get
\begin{equation*}
   E_{00}(v_{\{a,b\}})=
   \frac{n}{2}\   v_{\{a,b\}}.
\end{equation*}
and
\begin{equation}\label{Y1}
  2\  F_{ab}(v_*)=
   (n-4)\  (-1)^{a+b+1}\ v_{\{a,b\}},
\end{equation}
or, with some restrictions,
\begin{equation}\label{Y}
  v_{\{a,b\}}=
   \  (-1)^{a+b+1}  \ \frac{2}{(n-4)}\ F_{ab}(v_*)\qquad n\neq 4.
\end{equation}

Now, combining (\ref{eq:beta}), (\ref{v*}) and (\ref{Y}), it easy
to prove the following identities $(1\leq l< j\leq m)$
\begin{align}\label{V}
v_{\{2l,2j\}}&=i\ v_{\{2l-1,2j\}}\\
\label{V2} v_{\{2l-1,2j-1\}}&=-i\ v_{\{2l,2j-1\}}\\
\label{V3} v_{\{2l-1,2m+1\}}&=-i\ v_{\{2l,2m+1\}}
\end{align}

Taking the sum over $a$ in (\ref{3.1}), and using (\ref{1}), we
get
\begin{equation*}
   n\ v_*= -2
    \sum_{i<j} \ (-1)^{i+j}\
     F_{ij} (v_{\{i,j\}})= 2\ E_{00}(v_{*}),
\end{equation*}
obtaining
\begin{equation}\label{e00v*}
    E_{00}(v_{*})=\frac n 2 v_*.
\end{equation}

Let $\mu=(\frac n 2;\mu_1 , \dots ,\mu_m)$ be the weight of the
highest weight vector $v_*$ (see (\ref{v*}) and (\ref{e00v*})).
Since $H_1=iF_{12}$, then by (\ref{Y}), we have
\begin{equation}\label{T1}
v_{1,2}=-2i\frac{\mu_1}{(n-4)}\ v_*.
\end{equation}
Now, considering (\ref{3.1}) with $a=1$, and using (\ref{V}),
(\ref{V2}), (\ref{V3}) and (\ref{T1}), we have:
\begin{align*}
 v_*&=-  \sum_{1<j}\  (-1)^{1+j}\
     F_{1j} (v_{\{1,j\}})  \\
     &=- \ i\ H_1(v_{1,2})\  - \sum_{1<l\leq m}\
     F_{1,2l-1} (v_{\{1,2l-1\}}) +  \sum_{1<l\leq m}\
     F_{1,2l} (v_{\{1,2l\}})\\
     &\hskip .5cm - \delta_{n,\hbox{odd}} \ F_{1,2m+1} (v_{\{1,2m+1\}}) \\
&=- 2\frac{\mu_1}{(n-4)}\ H_1(v_{*}) + \ i \sum_{1<l\leq m}\
     F_{1,2l-1} (v_{\{2,2l-1\}}) - \ i  \sum_{1<l\leq m}\
     F_{1,2l} (v_{\{2,2l\}})\\
     &\hskip .5cm + \, i\,  \delta_{n,\hbox{odd}} \ F_{1,2m+1}
     (v_{\{2,2m+1\}}),
\end{align*}
that is
\begin{equation}\label{aaaa}
    v_*=- 2\frac{\mu_1^2}{(n-4)}\ v_{*} + i \sum_{2<j}\
    (-1)^{1+j}\
     F_{1,j} (v_{\{2,j\}}).
\end{equation}
Considering (\ref{3.2}) with $a=1,b=2$, and inserting (\ref{aaaa})
and (\ref{T1}) on it, it is easy to see that
\begin{equation}\label{lambda1}
0=2\mu_1^2 + (n-2)\mu_1+(n-4),
\end{equation}
obtaining $\mu_1= -1 $ or $\frac{4-n}{2}$, which is negative for
$n\geq 5$ and it is impossible for the highest weight of  an
irreducible $\frak{so}(n)$-module, finishing the proof in this
case.

\

If $n=3$, observe that equations (\ref{Y}), (\ref{e00v*}),
(\ref{T1}) and  (\ref{lambda1}) hold in this case, obtaining the
result of the statement of this lemma.

\

If $n=4$, using (\ref{Y1}) we have $v_{a,b}=0$ for all $a<b$,
obtaining a trivial singular vector and finishing the proof.
\end{proof}

\

From now on, we assume that the singular vector has the form
$\vec{m}= \sum_{i} \xi_{\{i\}^c}\otimes v_{i}$, and we shall use
the following notation, for $n=2m$ or $n=2m+1$:
\begin{align}\label{vect-m}
\vec{m} & =\sum_{i=1}^n \xi_{\{i\}^c}\otimes v_{i}\\
&
=\sum_{l=1}^m\bigg[\big(\xi_{\{2l\}^c}+i\xi_{\{2l-1\}^c}\big)\otimes
w_l + \big(\xi_{\{2l\}^c}-i \xi_{\{2l-1\}^c}\big)\otimes
\overline{w}_l\bigg]-\nonumber \\
& \qquad \qquad \qquad \qquad - \ \delta_{n,{\rm odd}}\ \ i
\xi_{\{2m+1\}^c}\otimes w_{m+1},\nonumber
\end{align}
that is, for $1\leq l\leq m$
\begin{equation}\label{v-w}
    v_{2l}  =w_l+\overline{w}_l,\qquad
    v_{2l-1} =i(w_l-\overline{w}_l),\qquad
    v_{2m+1} =i w_{m+1}.
\end{equation}
\

\noindent Observe that conditions (\ref{eq:c1}), (\ref{eq:c2}) and
(\ref{eq:c3}), clearly reduce to

\

\begin{enumerate}
    \item If $|f|=1$, $B_0(f)=0$.
    \item If $|f|=3$ or $f\in B_{\so(n)}$, $b_0(f)=0$.
\end{enumerate}

\

After some lengthly computations, it is possible to see that they
are equivalent to the following equations in terms of the vectors
$v_i,w_l,\overline{w}_k$:

\

\noindent $\underline{B_0(\xi_a)=0}:$
\begin{equation}\label{I}
    0=(-1)^{a} E_{00}v_a - \sum_{k\neq a} (-1)^{k} F_{ak} v_k.
\end{equation}
\noindent \underline{$b_0(\xi_a\xi_b\xi_c)=0$, with $a<b<c\, $}:
\begin{equation}\label{II.1}
    0= (-1)^c F_{ab}(v_c)-(-1)^b F_{ac}(v_b) + (-1)^a F_{bc}(v_a).
\end{equation}
Recall the basis of the Borel subalgebra introduced in
(\ref{eq:borel}) and (\ref{eq:borel2}), and using that
\begin{align*}
    \hbox{term with $\xi_{\{a\}^c}$ in $b_0(\xi_a\xi_b)$} & =
    (-1)^{a+b} v_{b} + F_{ab}(v_{a}),\\
\hbox{term with $\xi_{\{b\}^c}$ in $b_0(\xi_a\xi_b)$} & =
    -(-1)^{a+b} v_{a} + F_{ab}(v_{b}),\\
\hbox{term with $\xi_{\{l\}^c}$ in $b_0(\xi_a\xi_b)$} & =
     F_{ab}(v_{l}), \qquad \quad l\neq a,b,
\end{align*}
condition $b_0(f)=0$ for  $f\in B_{\so(n)}$, becomes for $n=2m$ or
$n=2m+1$:

\

\noindent \underline{$b_0(\al_{ij})=0$, with $1\leq i<j\leq m$}:
\begin{align}
    \al_{ij} (w_{i}) & = 0 \label{II.2} \\
    \al_{ij} (\overline{w}_{i}) & = w_{j}-\overline{w}_{j} \label{II.4} \\
    \al_{ij}(w_{j}) & =  w_{i} =
     - \al_{ij}(\overline{w}_{j})\label{II.12} \\
     \al_{ij}(w_{k}) & = 0 = \al_{ij}(\overline{w}_{k}), \qquad \quad k\neq
    i,j. \label{II.5}
\end{align}

\noindent \underline{$b_0(\be_{lj})=0$, with $1\leq l<j\leq m$}:
\begin{align}
    \be_{ij} (w_{i}) & = 0 \label{II.6} \\
    \be_{ij} (\overline{w}_{i}) & = -(w_{j}+\overline{w}_{j}) \label{II.7} \\
    \be_{ij}(w_{j}) & =  w_{i} =
      \be_{ij}(\overline{w}_{j})\label{II.11} \\
     \be_{ij}(w_{k}) & = 0 = \be_{ij}(\overline{w}_{k}), \qquad \quad k\neq
    i,j. \label{II.9}
\end{align}

\noindent \underline{$b_0(\ga_{k})=0$, with $1\leq k\leq m$}, and
$n=2m+1$, corresponds to:
\begin{align}
    \ga_{k} (w_{k}) & =  0\label{II.13} \\
    \ga_{k} (\overline{w}_{k}) & =  w_{m+1}\label{II.14} \\
    \ga_{k}(w_{m+1}) & = 2\  w_{k} \label{II.15} \\
    \ga_{k}(w_{l}) & =0 = \ga_{k}(\overline{w}_{l}), \qquad
    \quad 1\leq l\leq m, \ \ \ l\neq
    k. \label{II.16}
\end{align}

\

Now, we shall impose conditions (\ref{I})-(\ref{II.16}) to get the
final reduction. Recall notation (\ref{H}) and (\ref{E}):

\

\begin{lemma} \label{lemma I} If $n=2m$ or $n=2m+1$, equation (\ref{I})
is equivalent to the following identities $(1\leq j\leq m)$
\begin{align}\label{I.a}
2(E_{00}+  H_j)(\overline{w}_j) &= \sum_{1\leq l<j} \big[
E_{-(\ep_l+\ep_j)}(w_l)- E_{(\ep_l-\ep_j)}(\overline{w}_l)\big]
\\
 & + \sum_{ j<l\leq m} \big[
E_{-(\ep_j-\ep_l)}(\overline{w}_l)- E_{-(\ep_j+\ep_l)}(w_l)\big]-
\delta_{n, {\rm odd}}\  E_{-\ep_j}(w_{m+1}) \nonumber
\end{align}
and
\begin{align}\label{I.b}
2(E_{00}-  H_j)({w}_j) &= \sum_{1\leq l<j} \big[
E_{(\ep_l+\ep_j)}(\overline{w}_l)- E_{-(\ep_l-\ep_j)}({w}_l)\big]
\\
& + \sum_{j<l\leq m} \big[ E_{(\ep_j-\ep_l)}({w}_l)-
E_{(\ep_j+\ep_l)}(\overline{w}_l)\big]+\delta_{n, {\rm odd}}\
E_{\ep_j}(w_{m+1}) \nonumber
\end{align}
and for $n=2m+1$ we we have the additional equation
\begin{equation*}
    E_{00}({w}_{m+1}) = \sum_{1\leq l\leq m} \big[
E_{\ep_l}(\overline{w}_l)- E_{-\ep_l}({w}_l)\big].
\end{equation*}
\end{lemma}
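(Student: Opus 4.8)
The plan is to treat equation (\ref{I}) index by index and repackage it, via the change of variables (\ref{v-w}), into statements about $w_j$ and $\overline{w}_j$. First I would invert (\ref{v-w}) to record $2\overline{w}_j = v_{2j} + i\,v_{2j-1}$ and $2w_j = v_{2j} - i\,v_{2j-1}$. This suggests producing (\ref{I.a}) and (\ref{I.b}) from the two scalar relations obtained by setting $a=2j$ and $a=2j-1$ in (\ref{I}), forming the linear combinations $\tfrac12(\,\cdot\, + i\,\cdot\,)$ and $\tfrac12(\,\cdot\, - i\,\cdot\,)$ respectively. Because $(-1)^{2j}=1$ and $(-1)^{2j-1}=-1$, the $E_{00}$-terms combine into $E_{00}(v_{2j}\pm i\,v_{2j-1}) = 2E_{00}\overline{w}_j$ (resp. $2E_{00}w_j$), which supplies the left-most summand of each target identity.

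Next I would isolate, inside the remaining double sum $\sum_k(-1)^kF_{\bullet k}v_k$, the two terms with $k\in\{2j-1,2j\}$. Using $F_{ab}=-F_{ba}$ and the definition $H_j=i\,F_{2j-1,2j}$ from (\ref{H}), these diagonal terms collapse to $-2H_j\overline{w}_j$ in the first combination and to $+2H_jw_j$ in the second; moving them to the left produces exactly the operators $2(E_{00}+H_j)$ and $2(E_{00}-H_j)$ on the left-hand sides of (\ref{I.a}) and (\ref{I.b}). For each remaining index $l\neq j$, the four terms carrying $F_{2j,2l-1},F_{2j,2l},F_{2j-1,2l-1},F_{2j-1,2l}$ must be rewritten, after substituting $v_{2l},v_{2l-1}$ in terms of $w_l,\overline{w}_l$ and applying antisymmetry of $F$, as combinations of the root vectors (\ref{E}). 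A direct check shows that for $l<j$ they assemble into $E_{-(\ep_l+\ep_j)}(w_l)-E_{\ep_l-\ep_j}(\overline{w}_l)$, whereas for $l>j$ the roles of $l$ and $j$ in (\ref{E}) are interchanged and one gets $E_{-(\ep_j-\ep_l)}(\overline{w}_l)-E_{-(\ep_j+\ep_l)}(w_l)$; this is precisely why (\ref{I.a}) and (\ref{I.b}) split the ranges $1\le l<j$ and $j<l\le m$. In the odd case $n=2m+1$ the single leftover index $k=2m+1$ contributes, via $E_{\pm\ep_k}=F_{2k-1,2m+1}\mp iF_{2k,2m+1}$, the terms $-\delta_{n,\mathrm{odd}}E_{-\ep_j}(w_{m+1})$ and $+\delta_{n,\mathrm{odd}}E_{\ep_j}(w_{m+1})$.

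Finally, the additional equation for $n=2m+1$ comes from taking $a=2m+1$ in (\ref{I}): since $v_{2m+1}=i\,w_{m+1}$ and $(-1)^{2m+1}=-1$, it reads $-i\,E_{00}w_{m+1}=\sum_{k}(-1)^kF_{2m+1,k}v_k$, and grouping the terms $k\in\{2l-1,2l\}$ into $E_{\pm\ep_l}$ yields $E_{00}w_{m+1}=\sum_{l}\bigl(E_{\ep_l}(\overline{w}_l)-E_{-\ep_l}(w_l)\bigr)$. The equivalence is then automatic: the passage from the system (\ref{I}) (for $a=1,\dots,n$) to (\ref{I.a})--(\ref{I.b}) is carried out purely through the invertible substitution (\ref{v-w}) and the invertible recombinations $\tfrac12(\,\cdot\,\pm i\,\cdot\,)$, so no information is lost in either direction. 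I expect the only genuine difficulty to be the sign bookkeeping: tracking the factors $(-1)^k$, the antisymmetry $F_{ab}=-F_{ba}$, and matching each recombined four-term block to the exact normalization of the root vectors in (\ref{E}) and of $H_j=iF_{2j-1,2j}$ is where an error is most likely to slip in, but the computation is otherwise entirely mechanical.
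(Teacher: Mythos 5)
Your proposal is correct and takes essentially the same route as the paper: the paper's proof consists precisely of forming the linear combinations of equation (\ref{I}) for $a=2j$ and $a=2j-1$, substituting $v_{2l}=w_l+\overline{w}_l$, $v_{2l-1}=i(w_l-\overline{w}_l)$, $v_{2m+1}=iw_{m+1}$, and deriving the last identity from the case $a=2m+1$, exactly as you describe. The only quibble is the labeling of which $\pm i$ combination produces (\ref{I.a}) versus (\ref{I.b}) once the factor $(-1)^{2j-1}=-1$ is absorbed, but you explicitly flag this sign bookkeeping as the delicate point and the reassembly of the $F_{\bullet\bullet}$ blocks into $H_j$ and the root vectors of (\ref{E}) checks out.
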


\begin{proof} It follows by a straightforward computation, by
considering a linear combination of equation (\ref{I}) for the
cases where $a$ is $2j$ and $2j-1$, and replacing the vectors
$v_i$'s in terms of $w_i$'s and $\overline{w}_i$'s. The last
equation follows from (\ref{I}) for $a=2m+1$.
\end{proof}

\noindent{\it Proof of Theorem \ref{sing-vect}.} Suppose that
$n=2m$ or $n=2m+1$, with $m\geq 2$. Case $n=3$ will be  considered
at the end of this proof.

Using (\ref{II.11}) and (\ref{II.15}), we have that $w_i\neq 0$
implies $w_j\neq 0$ for all $i<j$. Now, we shall show that there
are only two possible cases: $w_i\neq 0$ for all $i$, or $w_i=0$
for all $i$. Suppose that $w_j\neq 0$ for some $j$, and let $j_0$
be the minimal index $j$ such that $w_j\neq 0$. Then $w_{j_0}$ is
a highest weight vector, by using (\ref{II.12}),  (\ref{II.11})
and (\ref{II.15}). Now, suppose $1<j_0\leq m$. Using (\ref{v-w}),
we have that equation (\ref{II.1}), for $a=1, \ b=2$ and $c=2j_0$
with $j_0>1$, becomes
\begin{equation}\label{HHH1}
    0=F_{1,2}(w_{j_0}+\overline{w}_{j_0})-
    F_{1,2j_0}(w_{1}-\overline{w}_{1})-\ i\
    F_{2,2j_0}(w_1-\overline{w}_1),
\end{equation}
and for $a=1, \ b=2$ and $c=2j_0-1$ with $j_0>1$, it becomes
\begin{equation}\label{HHH2}
    0=\ i\ F_{1,2}(w_{j_0}-\overline{w}_{j_0})+
    F_{1,2j_0-1}(w_1+\overline{w}_1)+ \ i\
    F_{2,2j_0-1}(w_1-\overline{w}_1).
\end{equation}
Now, taking the linear combination (\ref{HHH2}) $+\ i$
(\ref{HHH1}), and using (\ref{II.12}) together with (\ref{II.11}),
we have
\begin{equation*}
   H_1(w_{j_0})=-w_{j_0}.
\end{equation*}
which is impossible for a highest weight vector. Similarly, if $n$
is odd and $j_0=m+1$, by considering  (\ref{v-w}), we have that
equation (\ref{II.1}), with $a=1, \ b=2$ and $c=2m+1$, becomes
$H_1(w_{m+1})=-w_{m+1}$, getting a contradiction. Therefore, all
$w_i$'s are zero or all of them are non-zero.

\

- If $w_i=0$ for all $i$, then $\overline{w}_j\neq 0$ for some
$j$. Let $j_0$ be the maximal one with this property. As before,
observe that $\overline{w}_{j_0}$ is annihilated by the Borel
subalgebra by using (\ref{II.4}), (\ref{II.5}), (\ref{II.7}),
(\ref{II.14}) and (\ref{II.16}), hence $\overline{w}_{j_0}$ is a
highest weight vector. Now, we shall prove that $j_0=1$. Suppose
that $j_0>1$, then taking the linear combination (\ref{HHH2}) $-\
i$ (\ref{HHH1}), and using (\ref{II.12}) together with
(\ref{II.11}), we have
\begin{equation*}
   H_1(\overline{w}_{j_0})=-\overline{w}_{j_0}.
\end{equation*}
which is impossible for a highest weight vector.  Therefore, in
this case, the singular vector has the form
$$
\vec{m}=\big(\xi_{\{2\}^c}-i\, \xi_{\{1\}^c}\big)\otimes
\overline{w}_1
$$
as in part (a) of the statement of this theorem. Recall that
$\overline{w}_1$ is a highest weight vector of $V=V(\mu)$. It
remains to find necessary and sufficient conditions on the highest
weight $\mu$ in order to get a singular vector of this form.
Observe that we only have to impose (\ref{I}) and (\ref{II.1}).
Using Lemma \ref{lemma I}, we obtain that it reduces to the
following conditions:
$$
E_{00}(\overline{w}_1)= - H_1 (\overline{w}_1), \qquad\hbox{ and
}\qquad H_j(\overline{w}_1)=0 \hbox{ for } j>1.
$$
Hence $\mu=(-k;k,0,\dots ,0)$,  finishing part (a) of this
theorem.

\

- If $w_i\neq 0$ for all $i$, we should obtain part (b) of the
present theorem. Using (\ref{II.11}), we have
$\overline{w}_j\neq 0$ for all $1<j\leq m$. It remains to prove
that $\overline{w}_1\neq 0$. If not, combining (\ref{II.4}) and
(\ref{II.7}) we get a contradiction. Therefore, all $w_i$'s and
$\overline{w}_j$'s are non-zero. Observe that ${w}_1$ is
annihilated by the Borel subalgebra by using (\ref{II.2}),
(\ref{II.5}), (\ref{II.6}), (\ref{II.9}), (\ref{II.13}) and
(\ref{II.16}). Therefore ${w}_1$ is a highest weight vector of
$V=V(\mu)$. It remains to find conditions on the highest weight
$\mu$ in order to get a singular vector of this form, and we
should also show that all $w_i$'s and $\overline{w}_j$'s are fully
determined by $w_1$.


Let us compute $\mu$. Using (\ref{v-w}), we have that equation
(\ref{II.1}), with $a=1, \ b=2j-1$ and $c=2j$ for $j>1$, becomes
\begin{equation}\label{HH1}
    0=F_{1,2j-1}(w_j+\overline{w}_j)+\ i\
    F_{1,2j}(w_j-\overline{w}_j)-\ i\
    F_{2j-1,2j}(w_1-\overline{w}_1),
\end{equation}
and for $a=2, \ b=2j-1$ and $c=2j$ for $j>1$, it becomes
\begin{equation}\label{HH2}
    0=F_{2,2j-1}(w_j+\overline{w}_j)+\ i\
    F_{2,2j}(w_j-\overline{w}_j)+
    F_{2j-1,2j}(w_1+\overline{w}_1).
\end{equation}
Now, taking the linear combination (\ref{HH1}) $-\ i$ (\ref{HH2}),
and using (\ref{II.12}) together with (\ref{II.11}), we have
\begin{equation*}
   H_j(w_1)=0, \qquad \hbox{ for } j>1.
\end{equation*}
We have  to compute $ E_{00}({w}_1)$ and $H_1(w_1)$. Observe that
equation (\ref{I.b}), for $j=1$, becomes
\begin{align*}
2(E_{00}-  H_1)({w}_1) &=  \sum_{1<l\leq m} \big[
E_{(\ep_1-\ep_l)}({w}_l)-
E_{(\ep_1+\ep_l)}(\overline{w}_l)\big]+\delta_{n, {\rm odd}}\
E_{\ep_1}(w_{m+1}) \nonumber\\
& =\sum_{1<l\leq m} \big[ \al_{1l}({w}_l-
\overline{w}_l)+\be_{1l}(w_l+\overline{w}_l)\big]+\delta_{n, {\rm
odd}}\ \ga_1(w_{m+1}) \nonumber
\end{align*}
and inserting (\ref{II.12}), (\ref{II.11}) and (\ref{II.15}) in
the previous equation, we get
\begin{equation*}
   E_{00}({w}_1)=  H_1({w}_1)+ 2(m-1) \ w_1 + \delta_{n, {\rm
   odd}}\ w_1.
\end{equation*}
proving, as the statement of this theorem,  that $w_1$  is a
highest weight vector of the $\cso(n)$-module $V$ with highest
weight
\begin{equation}\label{weight-w1}
(\mu_1+2(m-1)+\delta_{n,\hbox{\rm odd}};\mu_1,0,\dots , 0), \hbox{
for } \mu_1\in \ZZ_{>0}.
\end{equation}
Now, we shall show that all $w_i$'s and $\overline{w}_j$'s are
fully determined by $w_1$.

Using (\ref{v-w}), we have that equation (\ref{II.1}), with $a=1,
\ b=2$ and $c=2k-1$ for $k>1$, becomes
\begin{equation}\label{HH.A}
     0=H_{1}(w_k-\overline{w}_k)+ \
    F_{1,2k-1}(w_1+\overline{w}_1)+\ i\
    F_{2,2k-1}(w_1-\overline{w}_1),
\end{equation}
and for $a=1, \ b=2$ and $c=2k$ for $k>1$, it becomes
\begin{equation}\label{HH.B}
    0=i \ H_{1}(w_k+\overline{w}_k)+
    F_{1,2k}(w_1+\overline{w}_1)+\ i\
    F_{2,2k}(w_1-\overline{w}_1).
\end{equation}
Now, taking the linear combination (\ref{HH.A}) $-\ i$
(\ref{HH.B}), we have
\begin{equation*}
0=2 \ H_{1}(w_k)+
    E_{-(\ep_1-\ep_k)}(w_1)+
    (\al_{1k}-\be_{1k})(\overline{w}_1),
\end{equation*}
and using (\ref{II.4}) together with (\ref{II.7}), we obtain
\begin{equation}\label{AAA}
0=2 \ H_{1}(w_k)+
    E_{-(\ep_1-\ep_k)}(w_1)+ 2\ w_k.
\end{equation}
Now, by applying $\be_{1k}$ to (\ref{AAA}) and using
(\ref{II.11}), it is possible to see that $H_1(w_k)=\mu_1-1$ where
$\mu_1$ was defined  above as $H_1(w_1)=\mu_1 w_1$. Therefore,
from (\ref{AAA}), we get
\begin{equation}\label{w-k}
w_k=  -\frac{1}{2\mu_1}  E_{-(\ep_1-\ep_k)}(w_1),\qquad \quad k>1.
\end{equation}
In a similar way, by taking the linear combination (\ref{HH.A})
$+\ i$ (\ref{HH.B}), we can deduce
\begin{equation}\label{bar-w-k}
\overline{w}_k=  \frac{1}{2\mu_1}  E_{-(\ep_1+\ep_k)}(w_1),\qquad
\quad k>1.
\end{equation}
In the odd case, taking (\ref{II.1}) with $a=1,b=2$ and $c=2m+1$,
it is possible to deduce by a similar computation that
\begin{equation}\label{w-m+1}
w_{m+1}=  -\frac{1}{\mu_1}  E_{-\ep_1}(w_1).
\end{equation}
Considering (\ref{I.a}) for $j=1$, and using (\ref{w-k}),
(\ref{bar-w-k}) and (\ref{w-m+1}), we have that
\begin{align*}
2(E_{00}+  H_1)(\overline{w}_1) &= \sum_{ 1<l\leq m} \big[
E_{-(\ep_1-\ep_l)}(\overline{w}_l)- E_{-(\ep_1+\ep_l)}(w_l)\big]-
\delta_{n, {\rm odd}}\  E_{-\ep_1}(w_{m+1}) \nonumber
\\
& =\frac{1}{\mu_1} \Bigg[\sum_{ 1<l\leq m}
E_{-(\ep_1+\ep_l)}E_{-(\ep_1-\ep_l)}(w_1) + \delta_{n, {\rm odd}}\
E_{-\ep_1}E_{-\ep_1}(w_{1})\Bigg]\nonumber
\end{align*}
Applying $E_{(\ep_1+\ep_2)}$ to the previous equation, we obtain
$$
(E_{00}+  H_1)(\overline{w}_1)= (2(m-2+\mu_1)+ \delta_{n, {\rm
odd}})\ \overline{w}_1=(n-4+2\mu_1) \ \overline{w}_1.
$$
Therefore, we have
\begin{equation}\label{bar-w-1}
    \overline{w}_1= C \Bigg[\sum_{ 1<l\leq m}
E_{-(\ep_1+\ep_l)}E_{-(\ep_1-\ep_l)}(w_1) + \delta_{n, {\rm odd}}\
E_{-\ep_1}E_{-\ep_1}(w_{1})\Bigg],
\end{equation}
where $C=\frac{1}{2\,\mu_1\,(n-4+2\mu_1)}$. Hence, equations
(\ref{w-k})-(\ref{bar-w-1}) show that all $w_i$'s and
$\overline{w}_j$'s are fully determined by $w_1$.

Conversely, using the expressions of $w_i$'s and
$\overline{w}_j$'s, it is possible to prove, after some lengthly
computations,  that the vector $\overrightarrow{m}$ in
(\ref{vect-m}) satisfies equations (\ref{I})-(\ref{II.16}),
finishing the proof of this lemma for $n\geq 4$.

\

If  $n=3$, all the previous equations holds except for
(\ref{bar-w-1}) for $\mu_1=\frac{1}{2}$. More precisely, the same
reduction holds and we have the first family of singular vectors
$\overrightarrow{m}= (\xi_{\{2\}^c}-i\, \xi_{\{1\}^c})\otimes
\overline{w}_1$, where $\overline{w}_1$ is a highest weight vector
of weight $(-k;k)$, but in this case $k\in\frac{1}{2}\ZZ_{>0}$.
The second family, corresponding to $w_i\neq 0$ for all $i$, is
also present. In this case, using (\ref{weight-w1}), $w_1$ is a
highest weight vector of the $\frak{cso}(3)$-module $V$, with
highest weight $(\mu_1+1;\mu_1)$, but in this case $\mu_1\in
\frac{1}{2}\ZZ_{>0}$. Now, using (\ref{w-m+1}) and
(\ref{bar-w-1}), we have the complete expression of the singular
vector $\overrightarrow{m}$, that is (in this case $m=1$)
\begin{equation}\label{3333}
    w_2= -\frac{1}{\mu_1}  E_{-\ep_1}(w_1), \qquad\quad
    \overline{w}_1= \frac{1}{2\,\mu_1\,(2\mu_1 -1)} \bigg(
E_{-\ep_1}E_{-\ep_1}(w_{1})\bigg),
\end{equation}
but the last equation makes sense if $\mu_1\neq\frac{1}{2}$.
Observe that in the particular case $\mu_1=\frac{1}{2}$, there is
no singular vector of this form, because in this case the
$\frak{so}(3)$-module $V$ is the 2-dimensional module
corresponding to the standard $\frak{sl}(2)\simeq\frak{so}(3)$
representation, and in this case $\overline{w}_1$ must be a linear
combination of $w_1$ and $w_2$, which is incompatible with
(\ref{II.14}) and (\ref{II.15}). Finally observe that in the case
$\mu_1=\frac{1}{2}$, that is the weight
$(\frac{3}{2};\frac{1}{2})$ that we discarded, there is a singular
vector, and it was found in Lemma \ref{5.5}, finishing the proof
of Theorem \ref{sing-vect}.   \hfill\qed

\

\

\noindent{\bf Acknowledgment.} C. Boyallian and J. Liberati were
supported in
  part by grants of
Conicet, ANPCyT, Fundaci\'on Antorchas, Agencia Cba Ciencia,
Secyt-UNC and Fomec (Argentina). V. Kac was supported in part by an
NSF grant.



\end{document}